\PassOptionsToPackage{dvipsnames}{xcolor}
\documentclass[acmsmall,screen,authorversion,nonacm]{acmart}
\usepackage[utf8]{inputenc} 
\usepackage[dvipsnames]{xcolor}
\usepackage{amsmath, amsthm}
\usepackage{thmtools}
\usepackage{mathtools}
\usepackage[english]{babel}
\usepackage{graphicx}
\usepackage{makecell} 
\usepackage{parskip}
\usepackage{array}
\usepackage{tabularx}
\usepackage{tikz} 
\usepackage{lineno}
\usepackage{csquotes}
\usepackage{stackengine}
\usepackage{bm}
\usepackage{diagbox}
\usepackage{verbatim}
\usepackage{multirow}
\usepackage[inline]{enumitem}
\usepackage{mdframed}
\usepackage{listings}
\usepackage{subcaption}
\usepackage{hyperref}
\usepackage{etoolbox}

\usepackage{pgfplots}

\usepackage[ruled,linesnumbered,algo2e]{algorithm2e}

\usepackage[capitalise]{cleveref}
\usepackage{todonotes}
\usepackage{booktabs}
\usepackage{pgfplotstable}
\usepackage{colortbl}
\usepackage[group-separator={,}, group-minimum-digits=4, text-series-to-math, propagate-math-font]{siunitx}
\usetikzlibrary {arrows.meta}
\usetikzlibrary{shapes,snakes}
\usetikzlibrary {decorations.pathmorphing}

\pgfdeclarelayer{background layer}
\pgfdeclarelayer{foreground layer}
\pgfsetlayers{background layer,main,foreground layer}

\newcommand\numberthis{\addtocounter{equation}{1}\tag{\theequation}}
\makeatletter
\renewcommand{\tagform@}[1]{\maketag@@@{\normalsize(#1)\@@italiccorr}}
\makeatother

\usetikzlibrary{arrows,positioning} 
\tikzset{
>=stealth',
punkt/.style={
       rectangle,
       rounded corners,
       draw=black, very thick,
       text width=6.5em,
       minimum height=2em,
       text centered},
pil/.style={
       ->,
       thick,
       shorten <=2pt,
       shorten >=2pt,},
main/.style={
    draw, rectangle,inner sep=4pt,
    }
}

\usetikzlibrary{automata,positioning,arrows}

\graphicspath{ {./} }

\theoremstyle{remark}

\renewcommand*\descriptionlabel[1]{\hspace\labelsep
  \normalfont{#1}}

\newcommand*\circledsmall[1]{\tikz[baseline=(char.base)]{
  \node[shape=circle,draw=none,fill=orange!20!white,inner sep=0.7pt, solid] (char) {\textcolor{black}{\texttt{#1}}};}}

\newcommand*\squaredsmall[1]{\tikz[baseline=(char.base)]{
  \node[regular polygon, regular polygon sides=5,draw=none,fill=blue!15!white,inner sep=0.1pt, solid] (char) {\textcolor{black}{\small\texttt{#1}}};}}

\definecolor{mGreen}{rgb}{0,0.6,0}
\definecolor{mGray}{rgb}{0.5,0.5,0.5}
\definecolor{mPurple}{rgb}{0.58,0,0.82}
\definecolor{backgroundColour}{rgb}{0.95,0.95,0.92}
\definecolor{mybluecolor}{rgb}{0.2549019607843137, 0.2117647058823529, 0.9823529411764706}

\lstdefinestyle{CStyle}{
tabsize = 2, 
showstringspaces = false, 
backgroundcolor=\color{backgroundColour},   
numbers = left, 
commentstyle = \color{green}, 
keywordstyle = \color{blue}, 
stringstyle = \color{red}, 
rulecolor = \color{black}, 
basicstyle = \small \ttfamily , 
breaklines = true, 
numberstyle = \tiny,
language=Java
}

\def \xcolor {RoyalBlue}
\def \xpcolor {Blue}
\def \ycolor {RubineRed}
\def \ypcolor {Plum}
\def \zcolor {Bittersweet}
\def \zpcolor {Brown}
\def \lcolor {OliveGreen}

\newcommand{\varix}{\textcolor{\xcolor}{x}}
\newcommand{\variy}{\textcolor{\ycolor}{y}}

\ExplSyntaxOn
\cs_new:Npn \betad
  {{\mathchoice{\betadtext}{\betadtext}{\betadsub}{\betadsub}}}
\cs_new:Npn \betadtext
  { \group_begin:
    \exp_args:NNx\cs_set:Npn \next{\the\font}
    \hbox_set:Nn \l_tmpa_box {$\next\alpha$}
    \vbox_to_ht:nn {\box_ht:N \l_tmpa_box}
      { \hbox:n{$\next\beta$} \vss }
    \group_end: }

\cs_new:Npn \betadsub
  { \group_begin:
    \exp_args:NNx\cs_set:Npn \next{\the\font}
    \hbox_set:Nn \l_tmpa_box {$\next\alpha$}
    \hbox_set:Nn \l_tmpb_box {$\next\beta$}
    \box_scale:Nnn\l_tmpb_box {7/10}{7/10}
    \vbox_to_ht:nn {\box_ht:N \l_tmpa_box}
      { \box_use_drop:N\l_tmpb_box \vss }
    \box_clear:N\l_tmpa_box
    \group_end: }
\ExplSyntaxOff

\newlist{compactitem}{itemize}{3} 
\setlist[compactitem]{label={\Large \textbullet},nosep,leftmargin=*}
\crefname{compactitemi}{Item}{Items}

\newlist{compactenum}{enumerate}{3} 
\setlist[compactenum]{label=(\arabic*), nosep,leftmargin=*}
\crefname{compactenumi}{Item}{Items}

\makeatletter
\newcommand{\customlabel}[2]{%
   \protected@write \@auxout {}{\string \newlabel {#1}{{#2}{\thepage}{#2}{#1}{}} }%
   \hypertarget{#1}{#2}
}
\makeatother

\newcommand{\Prog}{\mathsf{Pseq}}

\newcommand{\ConcProg}{\mathsf{P}}

\newcommand{\Events}{E}

\newcommand{\LabelDom}{\mathsf{Lab}}
\newcommand{\ValueDom}{\mathsf{Val}}

\newcommand{\LocationDom}{\mathsf{Loc}}
\newcommand{\ThreadDom}{\mathsf{Tid}}
\newcommand{\EventDom}{\mathsf{E}}
\newcommand{\ReadDom}{\mathsf{R}}

\newcommand{\WriteDom}{\mathsf{W}}

\newcommand{\RMWDom}{\mathsf{RMW}}

\newcommand{\ctx}[1]{\mathsf{ctx}({#1})}

\newcommand{\LinearTrace}{\mathcal{T}}
\newcommand{\Summary}{\mathsf{Sum}}
\newcommand{\ContextId}{\mathsf{cid}}

\newcommand{\SummaryStates}{Q}
\newcommand{\SummaryLastValue}{\phi}
\newcommand{\SummaryExternalRf}{\mathsf{Erf}}

\newcommand{\SummaryDom}{\mathsf{Sums}}

\newcommand{\NP}{\mathsf{NP}}
\newcommand{\NEXPTIME}{\mathsf{NEXP}}
\newcommand{\PSPACE}{\mathsf{PSPACE}}
\newcommand{\PCPInstance}{\mathcal{I}}
\newcommand{\PCPAlphabet}{\Gamma}
\newcommand{\True}{\mathsf{T}}
\newcommand{\False}{\mathsf{F}}

\newcommand{\LTS}{\mathcal{A}}
\newcommand{\States}{\mathcal{Q}}
\newcommand{\Alphabet}{\Sigma}
\newcommand{\Trans}{\mathcal{T}}
\newcommand{\InitState}{\mathsf{q}_{\mathsf{in}}}
\newcommand{\FinalState}{\mathsf{q}_{\mathsf{fn}}}

\newcommand{\In}{\mathsf{range}}
\newcommand{\EarliestWrite}{\mathsf{Ew}}
\newcommand{\LatestWrite}{\mathsf{Lw}}

\newcommand{\numThreads}{{|\mathsf{Tid}|}}
\def\Path{\mathcal{P}}
\newcommand{\tuple}[1]{\langle #1 \rangle}

\newcommand{\LPath}[1]{\mathrel{\overset{\smash{\raisebox{-0.5ex}{\footnotesize $#1$}}}{\rightsquigarrow}}}
\newcommand{\LTo}[1]{\xrightarrow{\raisebox{-0.5ex}[0ex][0ex]{\footnotesize $#1$} } }
\newcommand{\nLTo}[1]{\not\xrightarrow{\raisebox{-0.5ex}[0ex][0ex]{\footnotesize \hspace{5pt}$#1$} } }

\newcommand{\LTransTo}[1]{\xhookrightarrow{\raisebox{-0.5ex}[0ex][0ex]{\footnotesize $#1$} } }

\newcommand{\Paragraph}[1]{\smallskip\noindent{\bf #1}}
\newcommand{\SubParagraph}[1]{\smallskip\noindent{\em #1}}
\newcommand{\Nats}{\mathbb{N}}

\newcommand{\IntSet}[1]{[#1]}

\newcommand{\CSei}{\event_i}
\newcommand{\CSej}{\event_j}
\newcommand{\CSeb}{\event_b}
\newcommand{\CSwi}{\wt_i}
\newcommand{\CSrmwi}{\rmw_i}

\newcommand{\CSwj}{\wt_j}
\newcommand{\CSrj}{\rd_j}

\newcommand{\CSid}{\mathsf{idx}}
\newcommand{\CSidp}{\mathsf{idx'}}
\newcommand{\CSwp}{\wt'}
\newcommand{\CSwpp}{\wt''}

\newcommand{\op}{\mathsf{op}}
\newcommand{\tid}{\mathsf{tid}}
\newcommand{\event}{e}
\newcommand{\id}{\mathsf{id}}
\newcommand{\llab}{\mathsf{lab}}
\newcommand{\lloc}{\mathsf{loc}}

\newcommand{\val}{\mathsf{val}}
\newcommand{\rd}{\mathtt{r}}
\newcommand{\wt}{\mathtt{w}}
\newcommand{\rmw}{\mathtt{rmw}}

\colorlet{colorPO}{darkgray!80!black}
\colorlet{colorSB}{brown}
\colorlet{colorRF}{blue}
\colorlet{colorCO}{red!80!black}
\colorlet{colorMO}{red!80!black}
\colorlet{colorOB}{orange}
\colorlet{colorFR}{orange}
\colorlet{colorECO}{orange}
\colorlet{colorCOM}{magenta}
\colorlet{colorSW}{teal}
\colorlet{colorHB}{green!40!black}
\colorlet{colorPPO}{magenta}
\colorlet{colorRSEQ}{green!40!black}
\colorlet{colorSC}{violet}
\colorlet{colorGW}{brown}
\colorlet{colorPSC}{violet}
\colorlet{colorREL}{olive}
\colorlet{colorSO}{violet}
\colorlet{colorWB}{olive}
\colorlet{colorDOB}{violet}
\colorlet{colorRMW}{brown}
\colorlet{colorVARa}{Fuchsia}
\colorlet{colorVARb}{RedOrange}
\colorlet{colorDP}{Fuchsia}

\newcommand{\ExecutionGraphsOf}[2]{\llbracket{#1}\rrbracket}

\newcommand{\Word}[2]{\mathsf{Word}(#1,#2)}

\newcommand{\ch}{\mathsf{lt}}

\newcommand{\identity}[1]{[#1]}

\newcommand{\po}{\mathsf{\color{colorPO}po}}

\newcommand{\rf}{\mathsf{\color{colorRF}rf}}

\newcommand{\mo}{\mathsf{\color{colorMO}mo}}

\newcommand{\hb}{\mathsf{\color{colorHB}hb}}

\newcommand\locx[1]{{{#1}_{x}}}

\newcommand{\ramm}{\mathsf{RA}}

\newcommand{\tsomm}{\mathsf{TSO}}
\newcommand{\psomm}{\mathsf{PSO}}

\newcommand{\tx}{{t_x}}
\newcommand{\ty}{{t_y}}
\newcommand{\txp}{t'_x}
\newcommand{\typ}{{t'_y}}
\newcommand{\tax}{{t_{\alpha}^x}}
\newcommand{\taxp}{{{t'_{\alpha}}^x}}
\newcommand{\tay}{{t_{\alpha}^y}}
\newcommand{\tayp}{{{t'_{\alpha}}^y}}
\DeclareRobustCommand{\tbx}{{t_{\betad}^x}}
\DeclareRobustCommand{\tbxp}{{{t'_{\betad}}^x}}
\DeclareRobustCommand{\tby}{{t_{\betad}^y}}
\DeclareRobustCommand{\tbyp}{{{t'_{\betad}}^y}}

\newcommand{\vxa}{\textcolor{\xcolor}{x_{\alpha}}}
\newcommand{\vxap}{\textcolor{\xpcolor}{x'_{\alpha}}}
\newcommand{\vxb}{\textcolor{\xcolor}{x_{\betad}}}
\newcommand{\vxbp}{\textcolor{\xpcolor}{x'_{\betad}}}
\newcommand{\vya}{\textcolor{\ycolor}{y_{\alpha}}}
\newcommand{\vyap}{\textcolor{\ypcolor}{y'_{\alpha}}}
\newcommand{\vyb}{\textcolor{\ycolor}{y_{\betad}}}
\newcommand{\vybp}{\textcolor{\ypcolor}{y'_{\betad}}}

\newcommand{\vzax}{\textcolor{\zcolor}{z_{\alpha}^x}}
\newcommand{\vzaxy}{\textcolor{\zcolor}{z_{\alpha}^{x,y}}}
\newcommand{\vzay}{\textcolor{\zcolor}{z_{\alpha}^y}}
\newcommand{\vzaxp}{\textcolor{\zpcolor}{{{z'_{\alpha}}^x}}}
\newcommand{\vzaxyp}{\textcolor{\lcolor}{{\ell_{\alpha}}}}
\newcommand{\vzayp}{\textcolor{\zpcolor}{{{z'_{\alpha}}^y}}}
\newcommand{\vzbx}{\textcolor{\zcolor}{z_{\betad}^x}}
\newcommand{\vzbxy}{\textcolor{\zcolor}{z_{\betad}^{x,y}}}
\newcommand{\vzby}{\textcolor{\zcolor}{z_{\betad}^y}}
\newcommand{\vzbxp}{\textcolor{\zpcolor}{{{z'_{\betad}}^x}}}
\newcommand{\vzbxyp}{\textcolor{\lcolor}{{\ell_{\betad}}}}
\newcommand{\vzbyp}{\textcolor{\zpcolor}{{{z'_{\betad}}^y}}}

\newcommand{\ov}{\overline}

\newcommand{\AnyValue}{*}
\newcommand{\ndet}[1]{\textcolor{black}{\textsc{NDet}(#1)}}
\newcommand{\firstit}{\ensuremath{\textcolor{black}{\textsc{First()}}}}

\newcommand{\vaux}{\ensuremath{\textcolor{black}{aux}}}

\newcommand{\term}{\mathsf{term}}

\newcommand{\vc}{\ensuremath{\textcolor{black}{cnt}}}
\newcommand{\vcy}{\ensuremath{\textcolor{black}{cnt_y}}}

\tikzset{
   every path/.style={>=stealth},
   po/.style={->,color=colorPO, thick},
   seqbef/.style={->,color=colorPO, thick},
   ppo/.style={->,color=colorPPO, thick},
   sw/.style={->,color=colorSW, thick},
   rf/.style={->,color=colorRF,densely dashed, thick},
   mrf/.style={->,color=colorRF,densely dashed, thick},   
   urf/.style={->,color=colorRF, thick},
   fr/.style={->,color=colorFR,dashed, thick},
   hb/.style={->,color=colorHB,thick, thick},
   wavyhb/.style={->,color=colorHB,thick, thick, decorate, decoration={snake, amplitude=1pt}},
   mo/.style={->,color=colorMO,densely dotted, very thick},
   dob/.style={->,color=colorDOB, very thick},
   ob/.style={->,color=colorOB, very thick},
   rmw/.style={->,color=colorRMW,thick, thick},
   rseq/.style={->,color=colorRSEQ,thick,dotted, thick},
   com/.style={->,color=colorCOM,thick, thick},
}

\tikzset{
    ncbar angle/.initial=90,
    ncbar/.style={
        to path=(\tikztostart)
        -- ($(\tikztostart)!#1!\pgfkeysvalueof{/tikz/ncbar angle}:(\tikztotarget)$)
        -- ($(\tikztotarget)!($(\tikztostart)!#1!\pgfkeysvalueof{/tikz/ncbar angle}:(\tikztotarget)$)!\pgfkeysvalueof{/tikz/ncbar angle}:(\tikztostart)$)
        -- (\tikztotarget)
    },
    ncbar/.default=0.5cm,
}

\tikzset{round left paren/.style={ncbar=0.5cm,out=110,in=-110}}
\tikzset{round right paren/.style={ncbar=0.5cm,out=70,in=-70}}

\title{On the Decidability of Verification under Release/Acquire}

\author{Giovanna Kobus Conrado}
\orcid{0000-0001-9474-6505}
\affiliation{%
  \institution{Aarhus University}
  \city{Aarhus}
  \country{Denmark}
}
\email{gkc@cs.au.dk}

\author{Andreas Pavlogiannis}
\orcid{0000-0002-8943-0722}
\affiliation{%
  \institution{Aarhus University}
  \city{Aarhus}
  \country{Denmark}
}
\email{pavlogiannis@cs.au.dk}

\setcopyright{none}
\acmDOI{}
\acmISBN{}
\acmConference{}

\begin{document}

\setlength{\textfloatsep}{8pt plus 1.0pt minus 2.0pt}
\setlength{\intextsep}{8pt plus 1.0pt minus 2.0pt}
\setlength{\abovecaptionskip}{4pt plus 1pt minus 1pt}
\setlength{\abovedisplayskip}{4pt}
\setlength{\belowdisplayskip}{4pt}
\setlength{\abovedisplayshortskip}{4pt}
\setlength{\belowdisplayshortskip}{4pt}

\begin{abstract}
The verification of concurrent programs under weak-memory models is a burgeoning effort, owing to the increasing adoption of weak memory in concurrent software and hardware.
Release/Acquire has become the standard model for high-performance concurrent programming, adopted by common mainstream languages and computer architectures.
In a surprising result, Abdulla et al. (PLDI'19) proved that reachability in this model is undecidable when programs have access to atomic Read-Modify-Write (RMW) operations.
Moreover, undecidability holds even for executions with just 4 contexts, and is thus immune to underapproximations based on bounded context switching.
The canonical, RMW-free case was left as a challenging question, proving a non-primitive recursive lower bound as a first step, and has remained open for the past seven years.

In this paper, we settle the above open question by proving that reachability is undecidable even in the RMW-free fragment of Release/Acquire, thereby characterizing the simplest set of communication primitives that gives rise to undecidability.
Moreover, we prove that bounding both the number of context switches and the number of RMWs recovers decidability, thus fully characterizing the boundary of decidability along the dimensions of RMW-bounding and context-bounding.
\end{abstract}

\begin{CCSXML}
<ccs2012>
   <concept>
       <concept_id>10011007.10011074.10011099.10011692</concept_id>
       <concept_desc>Software and its engineering~Formal software verification</concept_desc>
       <concept_significance>500</concept_significance>
       </concept>
   <concept>
       <concept_id>10003752.10003790.10002990</concept_id>
       <concept_desc>Theory of computation~Logic and verification</concept_desc>
       <concept_significance>500</concept_significance>
       </concept>
 </ccs2012>
\end{CCSXML}

\ccsdesc[500]{Software and its engineering~Formal software verification}
\ccsdesc[500]{Theory of computation~Logic and verification}

\keywords{concurrency, weak-memory models, bounded context switches}

\maketitle

\section{Introduction}\label{SEC:INTRO}

The standard sequential view of concurrent programs~\cite{Lamport1979} allows to reason about program executions in terms of concrete interleavings between different threads.
Although this view is intuitive and reflects the mental model of most practitioners, it fails to faithfully account for modern  program executions, where compiler and architectural optimizations may cause threads to hold divergent views of the shared memory.
Weak memory models~\cite{Sarkar2009,Alglave2012,Batty2011,Vafeiadis2015,Pulte2017} are formal specifications of such weak executions that a program may exhibit due to such divergence, and are becoming a robust and standard means for rigorous high-performance concurrent programming.

One of the most fundamental modern concurrency models follows the Release/Acquire paradigm, which constitutes the realization of message-passing-style communication over shared memory.
In high level,
\begin{enumerate*}[label=(\roman*)]
\item interthread communication is causally consistent, as per standard in message passing, and
\item the shared memory further imposes a coherence order on each memory location, which requires that all communication via each location appears sequentially consistent.
\end{enumerate*}
The Release/Acquire model is found in the concurrency semantics of several modern programming languages and hardware platforms~\cite{Batty2011,Pulte2017}.

As the behavior of a concurrent program depends on the underlying memory model, program-verification algorithms are memory-model dependent.
The most common verification question is state reachability:~\emph{``can the program reach a particular state, e.g., one that violates an assertion?''}.
Although reachability under Sequential Consistency has been long known to be $\PSPACE$-complete~\cite{Kozen1977,Sistla1985}, its study under weak memory is an ongoing effort in the programming languages community.
The intricacies of weak interthread interaction make the design of verification algorithms particularly challenging, while the problem itself is typically of much higher complexity (e.g., non-primitive recursive for $\tsomm$/$\psomm$~\cite{Atig2010}).

Verification under Release/Acquire was studied in~\cite{Abdulla19}, where the problem was shown to be undecidable.
The authors found this \emph{``[...] quite a surprising result considering the simplicity and intuitiveness [of Release/Acquire semantics]''}.
This undecidability becomes even more startling when contrasted to close variants of Release/Acquire, namely Strong/Weak/Localized Release/Acquire, which enjoy decidable reachability~\cite{Lahav2022,Singh2024}.

Interestingly, the undecidability proof of~\cite{Abdulla19} makes intricate use of the somewhat complex atomic Read-Modify-Write (RMW) operations, and does not hold for programs that only access the shared memory via traditional reads and writes.
The decidability over RMW-free programs was left as a challenging open problem, establishing a non-primitive recursive lower bound as a first step, via a reduction from lossy channel systems~\cite{Abdulla1993,Schnoebelen2002}.
This question has remained open for the past seven years, also resurfacing in other works in the meantime~\cite{Lahav2022,Singh2024}.

\Paragraph{Our contributions.}
Our first contribution settles the above open question.

\begin{restatable}{theorem}{thmundecidability}\label{thm:undecidability}
State reachability for RMW-free programs under Release/Acquire is undecidable.
\end{restatable}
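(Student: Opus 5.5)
We prove \cref{thm:undecidability} by a reduction from Post's Correspondence Problem (the macros \verb|\PCPInstance| and \verb|\PCPAlphabet| suggest this is the intended source). Given an instance $\PCPInstance=\{(u_i,v_i)\}_{i}$ over $\PCPAlphabet$, we build an RMW-free program $\ConcProg$ and a target state such that the target is reachable under $\ramm$ iff $\PCPInstance$ has a solution. The idea is to use two \emph{generator} threads $\tx$ and $\ty$. Thread $\tx$ nondeterministically guesses a sequence of indices $i_1\cdots i_k$ and writes the letters of $u_{i_1}\cdots u_{i_k}$ to a location $\varix$. Thread $\ty$ does the same for $v_{i_1}\cdots v_{i_k}$ on $\variy$. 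Each letter is tagged with a bit recording whether it starts a new tile, so that the index sequences can be compared as well. A \emph{checker} then reads both streams letter by letter and compares them, reaching the target state only once both streams end with a terminator at the same step.

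The central difficulty is that without RMWs, $\ramm$ reads may be stale. A reader can skip writes, which is a lossy behaviour and is exactly why the naive construction only yields the known lossy-channel (non-primitive recursive) lower bound. It can also be constrained by causality but not forced to read the immediate $\mo$-successor. We therefore need a gadget that enforces \emph{exact}, loss-free transmission of a sequence through a location. My plan is a handshake or acknowledgment protocol based on $\hb$ and coherence. After the writer $\tx$ writes letter $j$ on $\varix$, it waits until it reads an acknowledgment value on a dedicated location $\vzax$ written by the reader. The reader, after reading letter $j$, writes the acknowledgment. Because the writer observes the acknowledgment, $\hb$ orders the reader's read of letter $j$ before the writer's write of letter $j{+}1$. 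Causality together with coherence then forbids the reader from reading $j{+}1$ before $j$, and forbids skipping.

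The obstacle is that the acknowledgment location is itself subject to the same skipping problem. Acknowledgments must carry information distinguishing successive rounds, and we have only finitely many values and no RMW. I would alternate values and use pairs of locations in a two-phase (alternating-bit style) protocol. The $\alpha$/$\betad$ variable families suggest two phases: one phase writes $\vxa$ and is acknowledged on $\vzax$, the other writes $\vxb$ and is acknowledged on $\vzbx$. The goal is that any skipped write produces an $\hb$ cycle or a violation of $\mo$ consistency (a read from a write that is $\mo$-overwritten by something $\hb$-before the read). The key lemma would state: in every $\ramm$-consistent execution of the gadget, the $k$-th read of the checker reads from the $k$-th write of the generator. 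I expect to prove it by induction on $k$, examining the possible $\rf$ sources and deriving a forbidden $(\hb;\mo)$ or $\hb;\rf^{-1}$ pattern in each bad case. This lemma is the step I expect to be hardest, because weak reads can also be \emph{too early} (from an older write) and not only too late, and both directions need separate arguments. Auxiliary primed copies such as $\vxap$ and the locations $\vzaxyp$ would probably be needed to block the remaining anomalies.

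The final step is to combine the gadgets. The checker synchronises with both channels in lockstep and compares the letters and tile-boundary tags read from $\tx$ and $\ty$. Soundness then follows from the lemma: any execution reaching the target spells identical words with consistent tile indices, so it yields a PCP solution. Completeness follows by exhibiting the interleaved, essentially sequentially consistent execution for a given solution, which is trivially $\ramm$-consistent. Since PCP is undecidable, \cref{thm:undecidability} follows.
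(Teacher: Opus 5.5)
Your no-skipping gadget is where the plan breaks. An acknowledgment read by the writer puts the checker's $j$-th read $\hb$-before the generator's $(j{+}1)$-st write. That does rule out skipping, but it also turns every channel into a one-place buffer, so no generator can ever run ahead of the checker. A PCP solution needs exactly that. The $m$-th tile begins at letter position $|u_{i_1}\cdots u_{i_{m-1}}|$ on one side and $|v_{i_1}\cdots v_{i_{m-1}}|$ on the other, and this offset is unbounded in general (a computable bound would make PCP decidable). Comparing the two words letter by letter \emph{and} the two index sequences tile by tile therefore forces one side's stream to be held back for an unbounded time. With acknowledgments on every channel, the execution you would need contains an $\hb$-cycle once the offset exceeds what one-place buffers absorb.

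Your completeness step exposes the problem. Suppose soundness held for all $\ramm$-consistent executions while completeness were witnessed by an (essentially) sequentially consistent execution. Then $\PCPInstance$ would be solvable iff the target is reachable under SC, which is decidable for finite-state programs. Any correct reduction must therefore rely on witnesses in which readers lag unboundedly behind writers, with no $\hb$ from reader back to writer. The paper's witness even has a cyclic $(\hb\cup\mo)^+$, as it must, since Strong Release/Acquire reachability is decidable. Proving such a witness consistent is the hard half of the proof, not a triviality.

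Separately, comparing tile-boundary tags in lockstep with the letters is wrong. Boundaries coincide only when $u_{i_j}=v_{i_j}$ for every $j$; compare $aa\cdot ba$ with $a\cdot aba$ in the paper's example. The index sequences must be sent on their own locations ($\vya$, $\vyb$) and compared by a separate reader ($\ty$) at its own rate. Each side's index writer must also be kept synchronized with its letter writer (via $\vzaxy$, $\vzaxyp$). Note that $\alpha$ and $\beta$ name the two sides of the instance, not two protocol phases.

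The missing idea is to enforce no-skipping through $\mo$-edges instead of $\rf$-edges from verifier to guesser. In the paper, $\tx$ writes its own tagged value to $\vxa$ immediately before reading $\vxa$. Read-coherence then yields $\wt_i(\tx,\vxa)\LTo{\mo}\wt_i(\tax,\vxa)$ without adding anything to $\hb$. Stale reads are excluded by read-coherence against this intervening write of $\tx$. Companion threads $\taxp$ and $\txp$ are $\rf$-handshaked (with mod-4 counters) only with their own partners $\tax$ and $\tx$, which costs no asynchrony between sides, and they replay the same pattern on $\vxap$.

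Suppose $\rd_i(\tx,\vxa)$ read a later write of $\tax$. Then $\wt_{i-1}(\taxp,\vxap)\LTo{\hb}\rd_{i-1}(\tx,\vxap)$. But that read obtains its value from $\wt_{i-1}(\txp,\vxap)$, and $\wt_{i-1}(\txp,\vxap)\LTo{\mo}\wt_{i-1}(\taxp,\vxap)$, which violates read-coherence. The symmetric argument covers $\vxap$. Because $\hb$ then flows only from guessers to verifiers, the unbounded drift survives. Completeness then requires a careful monotonicity analysis of $\hb$ with respect to event indices, including the decreasing $\po$-edges caused by bridge events and by reads skipped in the first iteration, rather than an SC interleaving.
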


\cref{thm:undecidability} strengthens the undecidability of~\cite{Abdulla19} to the minimal setting of programs with only read and write memory operations.
This is, perhaps, even more surprising, considering that RMW-free programs have fundamental limitations under Release/Acquire~\cite{Castaneda24}, e.g., they cannot implement lock-free mutual exclusion in the style of the classical protocols by Petersen~\cite{Peterson1981} and Dekker~\cite{Dijkstra1962}.

Our proof is via a reduction from Posts' Correspondence Problem (PCP), following the high-level approach of~\cite{Abdulla19}, where a set of guesser threads guesses a solution to PCP and communicates it to a set of verifier threads.
The crux of that proof is a ``no-skipping'' property, where the verifier threads make heavy use of RMWs in order to not skip any part of the solution communicated by the guesser threads.
In the RMW-free case, achieving no skipping is considerably more challenging, and it has been open whether it can be enforced with plain read and write operations.

One standard approach to tractable verification of concurrent programs is by under-approximating program behavior by bounding the number of context switches between threads~\cite{Qadeer2005}.
Unfortunately, the proof of~\cite{Abdulla19} shows that, under Release/Acquire with RMWs, undecidability holds even with just 4-context executions, and thus bounding context switches is not sufficient to make the problem decidable.
On the other hand, the no-skipping property in our reduction behind \cref{thm:undecidability} uses an unbounded number of context switches.
It is thus natural to ask:~do bounded context switches recover decidability for RMW-free programs?
We show that this is indeed the case.

\begin{restatable}{theorem}{thmboundedcontextswitches}\label{thm:bounded_context_switches}
State reachability for RMW-free programs under Release/Acquire is decidable when restricted to executions with a bounded number of context switches.  
Moreover, the problem remains decidable for general programs when the number of RMWs in an execution is also bounded. 
\end{restatable}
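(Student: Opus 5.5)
Our plan is to reduce bounded-context reachability under Release/Acquire to a finite-state (or well-structured, but finitely branching) search over \emph{context summaries}. Fix a bound $K$ on the number of contexts. An execution with at most $K$ contexts is a sequence of segments, each executed by a single thread. Within a segment, a thread runs sequentially, but it may read from writes of earlier segments (of any thread) as long as the reads respect $\hb$ and the per-location $\mo$ orders, and later segments may in turn read from its writes. The key observation is about what a segment can affect. It affects the future only through which of its writes become $\mo$-maximal or remain \emph{readable} for each thread's current view. It is affected by the past only through each thread's view per location. We therefore abstract each context by a summary $\Summary$. This summary records the thread's local state at the segment's start and end. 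It also records, for every location, the value of the last write of the segment (the $\SummaryLastValue$ component) and, where relevant, the earliest write. Finally, it records the set of external reads-from choices $\SummaryExternalRf$, each tagged by the context identifier $\ContextId$ of the source segment.

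Concretely, we will proceed in the following steps. First, we prove a \emph{normalization lemma}: if some $K$-context execution reaches the target state, then there is one where, inside each segment, the only writes that matter to other segments are the first and last write to each location of that segment (the $\EarliestWrite$/$\LatestWrite$ of the segment). Intermediate writes of a segment can be placed in $\mo$ so as to be invisible. Second, we argue that a read of a thread $t$ in segment $j$ from a segment $i<j$ is only ever consistent with such extremal writes, or it can be redirected to them without changing the read value. This uses the fact that in RA a thread may read any $\mo$-later write than its view, and that RMW-freedom means no write is forced to be $\mo$-adjacent to anything. Third, with this normalization, each execution is determined up to equivalence by $K$ summaries. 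Each summary has size bounded in $|\LocationDom|$, the value domain, $K$, and the local state space. Hence only finitely many summary sequences exist. Fourth, for a given sequence, consistency (acyclicity of $\hb\cup\mo\cup$ from-read restricted to the abstract events) is checkable. Realizability of each segment by its thread is a reachability question in a finite automaton: the thread program with external reads constrained to the guessed values and sources. This gives decidability, and in fact an $\NP$-style guess-and-check bound for fixed $K$ in a suitable encoding.

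For the second part, with a bound $B$ on the number of RMWs, we extend the summaries. Each RMW is an event that must be $\mo$-immediately after the write it reads. This is exactly what breaks the normalization, because an RMW can pin an intermediate write of some segment. Since there are at most $B$ such events, we additionally record, per segment, up to $B$ pinned writes (with value and location) and the $\mo$-adjacency constraints among them. The normalization lemma then keeps extremal writes plus pinned ones, and the summary space stays finite.

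We expect the main obstacle to be the normalization lemma: showing that the $\mo$ order of every location can be rebuilt so that only extremal (and pinned) writes of each segment are externally read. This requires care that the rebuilt $\mo$ is still consistent with $\hb$ induced by cross-segment reads-from, since choosing a write inside segment $i$ as source synchronizes the reader with a prefix of segment $i$ only. We would handle this by ordering all writes of a segment contiguously in each location's $\mo$, and choosing, for each external read, the $\hb$-earliest write in the source segment carrying the required value that is not $\mo$-overwritten in the reader's view. We would then verify acyclicity by induction over segment indices.
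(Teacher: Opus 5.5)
\textbf{The normalization lemma in your Steps 1--2 is false, and the rest of the argument depends on it.}

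Take a program where $t_1$ executes $\wt(x,1)\,\wt(x,2)\,\wt(x,3)$ and $t_2$ executes $\rd(x,2)$. The final state is reachable with two contexts: $t_1$'s run, then $t_2$'s run, with $\rd(x,2)$ reading the middle write. This is consistent, since $\wt(x,3)$ is not $\hb$-before the read. Yet in every two-context witness, $t_2$ reads an intermediate write. Redirecting that read to the first or last write of the segment changes the value read, so it is not a valid transformation.

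Intermediate writes also cannot be ``placed in $\mo$ so as to be invisible''. Write-coherence forces them to lie $\mo$-between the segment's first and last writes. There, any thread whose view is not yet past them may read them.

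Keeping one representative per (location, value) does not fix this either, for two reasons:
\begin{itemize}
\item Which of several same-valued writes a remote thread reads determines which prefix of the source segment it becomes $\hb$-synchronized with. That, in turn, constrains what it may later read on other locations.
\item A single later segment can read unboundedly many distinct writes of an earlier one. For example, it can read alternating values on $x$, with each read forcing the reader's view strictly forward.
\end{itemize}
So no fixed-size per-context record of externally visible writes exists, and the finiteness claim of Step 3 has no basis. Your final paragraph, which picks among writes ``carrying the required value'', already implicitly concedes that intermediate writes matter.

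\textbf{What is missing.} The number of externally relevant writes of a context is bounded not by a constant but by the number of reads in \emph{later} contexts. This is why the paper argues backwards through the contexts:
\begin{itemize}
\item Each event gets a summary $\tuple{\SummaryStates,\SummaryLastValue,\SummaryExternalRf}$.
\item Two same-context events with equal summaries can be collapsed, unless the segment between them contains a write read by a later context, or the collapse changes the $\hb$-reach of a latest write. Collapsing deletes the segment, redirects orphaned reads to $\LatestWrite(\CSei,x)$, and gives that write the $\mo$-position of $\LatestWrite(\CSej,x)$.
\item Each impeding write blocks at most $|\LocationDom|+1$ collapses.
\item Hence the last context has bounded length, and each earlier context's length is bounded in terms of the total length of the later ones. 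This yields a small-model property, and decidability follows by enumeration.
\item For bounded RMWs, condition (v) forbids replacing an RMW as the latest write, and each RMW impedes at most one collapse.
\end{itemize}

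\textbf{A separate error in Step 4.} Your consistency check, acyclicity of $\hb\cup\mo\cup\mathsf{fr}$, characterizes sequential consistency, not Release/Acquire. The store-buffering outcome is RA-consistent and realizable with two contexts, but it contains a $\po\cup\mathsf{fr}$ cycle. So this check would make the procedure incomplete even with a correct abstraction.
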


\cref{thm:undecidability} and \cref{thm:bounded_context_switches}, along with the undecidability of~\cite{Abdulla19}, fully characterize the decidability landscape for reachability under Release/Acquire, with respect to context switches and RMWs:~bounding the number of each is both sufficient and necessary for decidability.

\section{Preliminaries}\label{SEC:PRELIM}

In this section we define concurrent programs and their Release/Acquire semantics.

\Paragraph{General notation.}
Given an integer $i\in \mathbb{N}^+$, we let $\IntSet{i}=\{1,\cdots, i\}$. 
For a binary relation $B$, we write $B^{?}$,  $B^{+}$ and $B^*$ for the reflexive, transitive and reflexive-transitive closure of $B$, respectively, while $B^{-1}$ denotes the inverse of $B$.
We often write $\event_1\LTo{B}\event_2$ to denote that $(\event_1, \event_2)\in B$, and call this a $B$-edge.
Given two binary relations $A$, $B$, we denote by $A;B$ their composition.
Given a set $X$, we let $[X]$ be the identity relation over $X$.

We take $\ValueDom \subseteq \mathbb{N}$, $\LocationDom \subseteq \{x,y,\dots\}$, $\ThreadDom \subseteq \{t_1,t_2, \dots\}$  to be finite sets of values, shared memory locations and thread identifiers, respectively.

\subsection{Concurrent Programs}\label{SUBSEC:PRELIM_PROG}

We define sequential programs as labeled transitions systems (LTS). 
At a high level, a sequential program $\Prog_t$ of some thread $t\in\ThreadDom$ is an LTS in which each of its transitions emits a letter indicating an interaction with the shared memory via writes, reads and read-modify-writes. 
A concurrent program $\ConcProg$ is the composition of its corresponding sequential programs.

\Paragraph{Labeled Transition Systems.}
A \emph{Labeled Transition System} (LTS) over an alphabet $\Alphabet$ is a tuple $\LTS=\tuple{\States, \Trans, \InitState, \FinalState}$, where
\begin{enumerate*}[label=(\roman*)]
\item $\States$ is a finite set of states,
\item $\Trans\subseteq \States\times \Alphabet \times \States$ is a set of transitions,
\item $\InitState\in \States$ is the unique initial state, and
\item $\FinalState\in \States$ is the unique final state.
\end{enumerate*}
We use $\LTS.\States$, $\LTS.\Trans$, $\LTS.\InitState$ and $\LTS.\FinalState$  to refer to the respective components of $\LTS$. We often write $p\LTransTo{\alpha}q$ to denote that $(p,\alpha, q)\in \Trans$. A state $q\in \LTS.\States$ is reachable in $\LTS$ if $\InitState \LTransTo{\alpha}^* q$.
A sequence $\sigma=\alpha_1,\alpha_2,\dots,\alpha_n$ such that
$\InitState\LTransTo{\alpha_1}p_1\LTransTo{\alpha_2}\dots\LTransTo{\alpha_n}p_{n}$, where $p_i \in \States$ is a \emph{word} of $\LTS$. 
A word $\sigma=\alpha_1,\alpha_2,\dots,\alpha_n$ \emph{reaches} some state $q\in \LTS.\States$ if 
$\InitState\LTransTo{\alpha_1}p_1\LTransTo{\alpha_2}\dots p_{n-1}\LTransTo{\alpha_n}q$, where  $p_i \in \States$.

\Paragraph{Event labels.}
An \emph{event label} is either a \emph{read label} $\rd(t,  x, v_r)$, a \emph{write label} $\wt(t, x, v_w)$, or a \emph{read-modify-write label} $\rmw(t,x,v_r,v_w)$, where $t\in \ThreadDom$, $x\in\LocationDom$, and $v_r,v_w \in \ValueDom$.
For example, $\rd(t, x, v_r)$ denotes the action of thread $t$ reading the value $v_r$ from the shared memory location $x$.
We define $\LabelDom$ to be the set of all event labels and $\LabelDom^t$ to be the set of all labels with thread id $t$.

\Paragraph{Sequential programs.}
A \emph{sequential program} $\Prog_t$ of thread $t$ is represented as an LTS over the alphabet $\LabelDom^t$.
A transition $p\LTransTo{\alpha}q$ of $\Prog_t$ captures the fact that $\Prog_{t}$ executes an instruction that writes to/reads from the shared memory, where $\alpha$ labels the respective operation.
In doing so, $\Prog_t$ transitions from its state $p$ to state $q$.

\Paragraph{Concurrent programs.} 
A \emph{concurrent program} (or simply a \emph{program}) $\ConcProg = \tuple{\Prog_{t_1}, \Prog_{t_2}, \dots, \Prog_{t_\numThreads}}$  is the composition of sequential programs $\Prog_{t_i}$ for $t_i\in \ThreadDom$. 
The state of a concurrent program is a sequence $\tuple{p_1,p_2,\cdots, p_{\numThreads}}$, where $p_i \in \Prog_{t_i}.\States$. 
The initial and final states of $\ConcProg$ are $\tuple{\Prog_{t_1}.\InitState, \Prog_{t_2}.\InitState, \dots, \Prog_{t_{\numThreads}}.\InitState}$ and $\tuple{\Prog_{t_1}.\FinalState, \Prog_{t_2}.\FinalState, \dots, \Prog_{t_{\numThreads}}.\FinalState}$, respectively.
\subsection{Release/Acquire Semantics}\label{SUBSEC:PRELIM_EG} 

We now present the Release/Acquire memory model.

\Paragraph{Events.}
Events represent instructions of a program that interact with shared memory, and they are simply event labels (as emitted by $\ConcProg$) attached to some arbitrary unique identifier. 
Formally, an \emph{event} $\event$ is a pair $\tuple{\id, \llab}$, where $\id\in \Nats$ is an identifier and $\llab\in \LabelDom$ is a label. 
We write $\llab(\event)$, $\tid(\event)$, $\lloc(\event)$ and  $\op(\event)$ to refer to the label, thread id, location and operation, respectively, of $\event$. 
If $\op(\event)\in \{\wt,\rmw\}$, then $\event$ is a \emph{writing} event, while if $\op(\event)\in \{\rd,\rmw\}$, then $\event$ is a \emph{reading} event.
We write $\val_{\rd}(\event)$ and $\val_{\wt}(\event)$ to refer to the value read by a reading event $\event$ and the value written by writing event $\event$, respectively. 
We often identify an event only by its label, or even a partial label, when some components of the label are irrelevant or clear from the context. 
For example, we may refer to an event $\rd(t,x,v_r)$, $\rd(t,x)$ or $\rd(x)$. 
We let $\EventDom$ be the domain of events; $\ReadDom$, $\WriteDom$ and $\RMWDom$ are its subsets containing only read, write and RMW events, respectively. 

Given a set of events $X$, we let $\locx{X}$ denote the set of events of $X$ on location $x$, i.e, $\locx{X}=\{\event \in X\colon \lloc(\event)=x\}$. 
Similarly,  $X^t$ is the set of events of $X$ on thread $t$, i.e, $X^t=\{\event\in X\colon \tid(\event)=t \}$. 
We extend this notation to binary relations $B$ over $\EventDom$, e.g., $\locx{B}=\{(\event_1,\event_2) \in B\colon \lloc(\event_1)=\lloc(\event_2)=x\}$. 

\Paragraph{Execution graphs.} 
As per standard in Release/Acquire and related memory models, we \mbox{represent} program executions as \emph{execution graphs} $G=\tuple{\Events, \po, \rf, \mo}$.
We often write $G.\Events$, $G.\po$, $G.\rf$ and $G.\mo$ for the respective components of $G$.
$G.\Events$ denotes the set of events that participate in the execution.
The \emph{program order} $G.\po\subseteq (\bigcup_{t\in \ThreadDom}{\EventDom^t} \times {\EventDom^t})$ captures the total order in which events of the same thread executed.
The \emph{reads-from relation}
$G.\rf\subseteq (\bigcup_{x\in \LocationDom}(\locx{\WriteDom} \cup \locx{\RMWDom})\times (\locx{\ReadDom}\cup \locx{\RMWDom}))$ connects every reading event to a writing event that the former obtains its value from.
Hence, we require that
\begin{enumerate*}[label=(\roman*)]
\item the values of the connected events match, i.e., $(\event_w, \event_r)\in \rf \implies (\val_{\wt}(\event_w)=\val_{\rd}(\event_r))$, and 
\item a reading event has exactly one incoming edge, i.e., $(\event_{w_1}, \event_r), (\event_{w_2}, \event_r)\in \rf\implies (\event_{w_1}=\event_{w_2})$  and $\forall (\event_r \in \ReadDom\cup \RMWDom), \exists \event_w \in \Events \colon (\event_w, \event_r)\in \rf$.
Finally, the \emph{modification order} $G.\mo\subseteq \bigcup_{x\in \LocationDom} ((\locx{\WriteDom} \cup \locx{\RMWDom}) \times (\locx{\WriteDom} \cup \locx{\RMWDom}))$ represents a total order on the write/RMW events in each location.
\end{enumerate*}

\newcommand\tikzaxX{1}
\newcommand\tikzaxY{1}

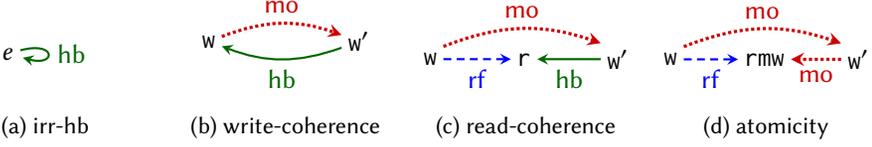
\begin{figure}
\begin{subfigure}[b]{.23\textwidth}
\centering
    \begin{tikzpicture}[ line width=1pt, main/.style = {rectangle, inner sep=2pt}] 

    \node[main, fill=white] at (0*\tikzaxX,0*\tikzaxY) (e1){$\event$}; 
    \node[main, fill=white] at (0*\tikzaxX,-0.5*\tikzaxY) (e2){}; 
    \path[hb] (e1) edge [loop right] node {$\hb$} (e1);
    
\end{tikzpicture}
\caption{\customlabel{eq:hbirr}{irr-hb}}
\label{subfig:axioms_ra_hbirr}
\end{subfigure}%
\begin{subfigure}[b]{.23\textwidth}
\centering
\begin{tikzpicture}[ line width=1pt, main/.style = {rectangle, inner sep=2pt}] 

\node[main, fill=white] at (0*\tikzaxX,0*\tikzaxY) (e1){$\wt$}; 
\node[main, fill=white] at (2*\tikzaxX,0*\tikzaxY) (e2){$\wt'$}; 
\draw [mo] (e1) to[bend left=20] node [midway, above] {$\mo$} (e2);
\draw [hb] (e2) to[bend left=20] node [midway, below] {$\hb$} (e1);
    
\end{tikzpicture}
\caption{\customlabel{eq:wc}{write-coherence}}
\label{subfig:axioms_ra_wc}
\end{subfigure}%
\begin{subfigure}[b]{.23\textwidth}
\centering
\begin{tikzpicture}[ line width=1pt, main/.style = {rectangle, inner sep=2pt}] 

\node[main, fill=white] at (0*\tikzaxX,0*\tikzaxY) (e1){$\wt$}; 
\node[main, fill=white] at (2.5*\tikzaxX,0*\tikzaxY) (e2){$\wt'$}; 
\node[main, fill=white] at (1.25*\tikzaxX,0*\tikzaxY) (e3){$\rd$};
\draw [mo] (e1.north east) to[bend left=20] node [midway, above] {$\mo$} (e2.north west);
\draw [hb] (e2) to node [midway, below] {$\hb$} (e3);
\draw [rf] (e1) to node [midway, below] {$\rf$} (e3);

\end{tikzpicture}
\caption{\customlabel{eq:rc}{read-coherence}}
\label{subfig:axioms_ra_rc}
\end{subfigure}
\begin{subfigure}[b]{.23\textwidth}
\centering
\begin{tikzpicture}[ line width=1pt, main/.style = {rectangle, inner sep=2pt}] 

\node[main, fill=white] at (0*\tikzaxX,0*\tikzaxY) (e1){$\wt$}; 
\node[main, fill=white] at (2.5*\tikzaxX,0*\tikzaxY) (e2){$\wt'$}; 
\node[main, fill=white] at (1.25*\tikzaxX,0*\tikzaxY) (e3){$\rmw$};
\draw [mo] (e1.north east) to[bend left=20] node [midway, above] {$\mo$} (e2.north west);
\draw [mo] (e2) to node [midway, below] {$\mo$} (e3);
\draw [rf] (e1) to node [midway, below] {$\rf$} (e3);

\end{tikzpicture}
\caption{\customlabel{eq:at}{atomicity}}
\label{subfig:axioms_ra_at}
\end{subfigure}%
\caption{
The four axioms of Release/Acquire, phrased as forbidden patterns.
 }
\label{fig:axioms_ra}
\end{figure}

\Paragraph{Release/Acquire semantics.} 
We follow the standard (see, e.g.,~\cite{Lahav2022}) axiomatic definition of Release/Acquire over execution graphs $G$, phrased as axioms over the relations of $G$, as well as the derived \emph{happens-before} relation $\hb \triangleq (\po \cup \rf)^{+}$ (often interpreted as ``causality'').
In particular, $G$ is \emph{Release/Acquire-consistent} if it satisfies the following irreflexivity properties (see also \cref{fig:axioms_ra}):
\begin{compactitem}
\item \ref{eq:hbirr} states that $\hb$ is irreflexive, excluding the presence of causality cycles,
\item \ref{eq:wc} states that $\mo;\hb$ is irreflexive, i.e., $\mo$ follows causality,
\item \ref{eq:rc} states that $\mo;\hb;\rf^{-1}$ is irreflexive, i.e., a reading event obtains its value from the $\mo$-maximal writing event among those that are causally before the former, and
\item \ref{eq:at} states that $\mo;\mo;\rf^{-1}$ is irreflexive, i.e., an RMW obtains its value from the $\mo$-maximal writing event among those that are $\mo$-ordered before the former, capturing the fact that the reading and writing part of the RMW occur atomically.
\end{compactitem}

\Paragraph{Release/Acquire program semantics.}
Given an execution graph $G=\tuple{\Events, \po, \rf, \mo}$ and a thread $t_i\in \ThreadDom$, there is a unique topological ordering $\event_1, \event_2,\dots, \event_n$ of $G.\po^{t_i}$.
For such a case, we let $\Word{G}{t_i}=\llab(\event_1), \llab(\event_2),\dots, \llab(\event_n)$.
Given a concurrent program $\ConcProg=\tuple{\Prog_{t_1}, \Prog_{t_2}, \cdots, \Prog_{t_{\numThreads}}}$, the \emph{Release/Acquire semantics} of $\ConcProg$, denoted $\ExecutionGraphsOf{\ConcProg}{\ramm}$, is the set of execution graphs $G$ such that
\begin{enumerate*}[label=(\roman*)]
\item $G$ is Release/Acquire-consistent, and 
\item for all $t_i \in \ThreadDom$, we have that $\Word{G}{t_i}$ is a word of $\Prog_{t_i}$.
\end{enumerate*}

\Paragraph{Release/Acquire reachability.}
Given a concurrent program $\ConcProg=\tuple{\Prog_{t_1}, \Prog_{t_2}, \cdots, \Prog_{t_{\numThreads}}}$, a state $\tuple{p_1, p_2, \dots, p_{\numThreads}}$ of $\ConcProg$, and an execution graph $G\in \ExecutionGraphsOf{\ConcProg}{\ramm}$, we say that $G$ \emph{reaches} $\tuple{p_1, p_2, \dots, p_{\numThreads}}$ if $\Word{G}{t_i}$ reaches $p_i$ for all $i\in \IntSet{\numThreads}$.
The reachability problem is to decide, given a concurrent program $\ConcProg$, whether there exists an execution graph $G\in \ExecutionGraphsOf{\ConcProg}{\ramm}$ that reaches the final state of $\ConcProg$, i.e., $\tuple{\Prog_{t_1}.\FinalState, \Prog_{t_2}.\FinalState, \dots, \Prog_{t_{\numThreads}}.\FinalState}$\footnote{The problem of requiring a \emph{single} thread to reach the final state is equally expressive, in the sense that the two problems trivially reduce to each other.}.

\section{Undecidability of RMW-free Reachability}\label{SEC:UNDECRA}

In this section we prove \cref{thm:undecidability}.
We note that the proof is somewhat combinatorially involved.
To benefit exposition, we begin with a proof overview (\cref{SUBSEC:UNDECIDABILITY_OVERVIEW}), followed by the formal reduction (\cref{SUBSEC:UNDECIDABILITY_FORMAL}), a proof of its soundness (\cref{SUBSEC:UNDECIDABILITY_SOUNDNESS}) and its completeness (\cref{SUBSEC:UNDECIDABILITY_COMPLETENESS}).

\Paragraph{Event indexing.}
For notational convenience, when we talk about an execution $G$ of a concurrent program, we write $\wt_i(t, x)$ and $\rd_i(t,x)$ to refer to the $i$-th read and write, respectively, event of thread $t$ on location $x$ in $G$.
We may also refer to $i$ as the \emph{index} of $\wt_i(t, x)$ or $\rd_i(t, x)$, and refer to an event of index $i$ as $\event_i$. 
In our figures, these events are represented as $\wt_i(x,v)$ and $\rd_i(x,v)$, respectively, where $v$ is the value written or read; we omit the thread information since it is visually clear from context.

\Paragraph{Post's correspondence problem.}
Post's correspondence problem (PCP)~\cite{Post46} takes as input two tuples $\tuple{\alpha_1, \cdots, \alpha_n}$ and $\tuple{\beta_1, \cdots, \beta_n}$ of non-empty words over some alphabet $\PCPAlphabet$, i.e., $\alpha_i, \beta_i\in \PCPAlphabet^*$.
A solution to PCP is a non-empty sequence of indexes $j_1, \cdots, j_k \in \{1, \cdots, n\}^*$ such that $\alpha_{j_1}\cdots \alpha_{j_k}=\beta_{j_1}\cdots \beta_{j_k}$, where $\cdot$ denotes word concatenation.
PCP is well-known to be undecidable.

\subsection{Proof Overview}\label{SUBSEC:UNDECIDABILITY_OVERVIEW}

\Paragraph{The general construction.}
The undecidability of reachability with RMWs was proven in~\cite{Abdulla19} via a reduction from PCP.
Given an instance $\PCPInstance$ of PCP, the proof constructs a concurrent program $\ConcProg$ such that there exists an execution in $\ExecutionGraphsOf{\ConcProg}{\ramm}$ in which every thread reaches a particular $\term$ state iff $\PCPInstance$ has a solution.
We follow a similar approach here, though not having RMWs in our disposal appears to require a considerably more intricate construction.

Our program $\ConcProg$ consists of 12 threads and 20 locations in total.
At a high level, the main threads, $\tax$, $\tay$, $\tbx$, $\tby$, $\tx$, and $\ty$ interact through the main locations, $\vxa$, $\vya$, $\vzaxyp$, $\vxb$, $\vyb$, and $\vzbxyp$ as follows (see \cref{fig:pcp_example} for an illustration): 

\begin{compactitem}
\item Threads $\tax$ and $\tay$ independently guess a solution to PCP, one index at a time, each by setting non-deterministically its own (i.e., thread-local) location $\vaux$.
To ensure that they guess the same solution, $\tay$ communicates its sequence of indices  $j_1, \cdots, j_k \in \{1, \dots, n\}^*$, one by one, to $\tax$ via the location $\vzaxyp$, and the latter verifies that it matches its own sequence.
Moreover, $\tay$ writes to $\vya$, one by one, the sequence $j_1, \cdots, j_k$,
while $\tax$ writes to $\vxa$, letter by letter, the corresponding word $\alpha_{j_1}\cdots \alpha_{j_k}$.

\item Threads $\tbx$ and $\tby$ work similarly, independently guessing the same solution to PCP, through thread-local locations $\vaux$.
The location $\vzbxyp$ is used by $\tby$ to communicate its sequence of indices $j'_1, \cdots, j'_{k'}\in \{1, \dots, n\}^*$ to $\tbx$.
Moreover, thread $\tby$ writes to $\vyb$, one by one, $j'_1, \cdots, j'_{k'}$, and thread $\tbx$ writes to $\vxb$, letter by letter, the word $\beta_{j'_1}\cdots \beta_{j'_{k'}}$.

\item Thread $\tx$ guesses, letter by letter, a sequence $\ch_1, \cdots, \ch_{\ell}\in \PCPAlphabet^*$. 
It executes a while loop that, in its $i$-th iteration, reads the $i$-th value written to $\vxa$ (by $\tax$) and the $i$-th value written to $\vxb$ (by $\tbx$),
and verifies that each value equals $\ch_i$.
It, thus, verifies that $\alpha_{j_1}\cdots \alpha_{j_k}=\beta_{j'_1}\cdots \beta_{j'_{k'}}$.

\item Thread $\ty$ guesses, index by index, a sequence $\id_1, \cdots, \id_{m} \in \{1, \dots, n\}^*$. 
It executes a while loop that, in its $i$-th iteration, reads the $i$-th value written to $\vya$ (by $\tay$) and the $i$-th value written to $\vyb$ (by $\tby$),
and verifies that each value equals $\id_i$.
It, thus, verifies that $j_1,\cdots,j_k = j'_1,\cdots ,j'_{k'}$.
\end{compactitem}

\newcommand\tikzeX{0.5}
\newcommand\tikzeY{1}

\begin{figure}
\centering
\begin{tikzpicture}[ line width=1pt, main/.style = {inner sep=1pt}]

\draw [->] (-5.5*\tikzeX,1.5*\tikzeY) -- (-5.5*\tikzeX,-4.5*\tikzeY) node [near start, above=1.5cm]{$\tay$}; 

\draw [->] (-2*\tikzeX,1.5*\tikzeY) -- (-2*\tikzeX,-4.5*\tikzeY) node [near start, above=1.5cm]{$\tax$}; 

\draw [->] (1.5*\tikzeX,1.5*\tikzeY) -- (1.5*\tikzeX,-4.5*\tikzeY) node [near start, above=1.5cm]{$\tx$}; 

\draw [->] (4.5*\tikzeX,1.5*\tikzeY) -- (4.5*\tikzeX,-4.5*\tikzeY) node [near start, above=1.5cm]{$\ty$}; 

\draw [->] (8*\tikzeX,1.5*\tikzeY) -- (8*\tikzeX,-4.5*\tikzeY) node [near start, above=1.5cm]{$\tbx$}; 

\draw [->] (11.5*\tikzeX,1.5*\tikzeY) -- (11.5*\tikzeX,-4.5*\tikzeY) node [near start, above=1.5cm]{$\tby$};

\node[main, fill=white] at (-5.5*\tikzeX,1*\tikzeY) (w1i1){{\footnotesize$\wt_1(\vya, 2)$}}; 

\node[main, fill=white] at (-5.5*\tikzeX,0.5*\tikzeY) (w1al){{\footnotesize$\wt_1(\vzaxyp, 2)$}}; 

\node[main, fill=white] at (-2*\tikzeX,0.5*\tikzeY) (w1c1){{\footnotesize$\wt_1(\vxa, a)$}}; 

\node[main, fill=white] at (-2*\tikzeX,-0.5*\tikzeY) (w1c2){{\footnotesize$\wt_2(\vxa, a)$}}; 

\node[main, fill=white] at (-2*\tikzeX,-1*\tikzeY) (r1al){{\footnotesize$\rd_1(\vzaxyp, 2)$}}; 

\node[main, fill=white] at (-5.5*\tikzeX,-1.5*\tikzeY) (w1i2){{\footnotesize$\wt_2(\vya, 1)$}}; 

\node[main, fill=white] at (-5.5*\tikzeX,-2*\tikzeY) (w2al){{\footnotesize$\wt_2(\vzaxyp, 1)$}}; 

\node[main, fill=white] at (-2*\tikzeX,-2*\tikzeY) (w1c3){{\footnotesize$\wt_3(\vxa, b)$}}; 

\node[main, fill=white] at (-2*\tikzeX,-3*\tikzeY) (w1c4){{\footnotesize$\wt_4(\vxa, a)$}}; 

\node[main, fill=white] at (-2*\tikzeX,-3.5*\tikzeY) (r2al){{\footnotesize$\rd_2(\vzaxyp, 1)$}};

\node[main, fill=white]  at (1.5*\tikzeX,0.5*\tikzeY) (r211){{\footnotesize$\rd_1(\vxa,a)$}}; 

\node[main, fill=white] at (1.5*\tikzeX,0*\tikzeY) (r212){{\footnotesize$\rd_1(\vxb,a)$}}; 

\node[main, fill=white] at (1.5*\tikzeX,-0.5*\tikzeY) (r221){{\footnotesize$\rd_2(\vxa,a)$}}; 

\node[main, fill=white] at (1.5*\tikzeX,-1.5*\tikzeY) (r222){{\footnotesize$\rd_2(\vxb,a)$}}; 

\node[main, fill=white] at (1.5*\tikzeX,-2*\tikzeY)(r231){{\footnotesize$\rd_3(\vxa,b)$}}; 

\node[main, fill=white] at (1.5*\tikzeX,-2.5*\tikzeY)(r232){{\footnotesize$\rd_3(\vxb,b)$}}; 

\node[main, fill=white] at (1.5*\tikzeX,-3*\tikzeY)(r241){{\footnotesize$\rd_4(\vxa,a)$}}; 

\node[main, fill=white]  at (1.5*\tikzeX,-3.5*\tikzeY) (r242){{\footnotesize$\rd_4(\vxb,a)$}};

\node[main, fill=white] (r311) at (4.5*\tikzeX,1*\tikzeY){{\footnotesize$\rd_1(\vya,2)$}}; 

\node[main, fill=white]  at (4.5*\tikzeX,0.5*\tikzeY) (r312){{\footnotesize$\rd_1(\vyb,2)$}}; 

\node[main, fill=white] at (4.5*\tikzeX,-0.5*\tikzeY) (r321){{\footnotesize$\rd_2(\vya,1)$}}; 

\node[main, fill=white] at (4.5*\tikzeX,-1*\tikzeY) (r322){{\footnotesize$\rd_2(\vyb,1)$}};

\node[main, fill=white] at (11.5*\tikzeX,0.5*\tikzeY) (w4i1){{\footnotesize$\wt_1(\vyb,2)$}}; 

\node[main, fill=white] at (11.5*\tikzeX,0*\tikzeY) (w1bl){{\footnotesize$\wt_1(\vzbxyp,2)$}}; 

\node[main, fill=white]  at (8*\tikzeX,0*\tikzeY)(w4c1){{\footnotesize$\wt_1(\vxb,a)$}}; 

\node[main, fill=white] at (8*\tikzeX,-0.5*\tikzeY) (r1bl){{\footnotesize$\rd_1(\vzbxyp,2)$}}; 

\node[main, fill=white]  at (11.5*\tikzeX,-1*\tikzeY) (w4i2){{\footnotesize$\wt_2(\vyb,1)$}};

\node[main, fill=white]  at (11.5*\tikzeX,-1.5*\tikzeY) (w2bl){{\footnotesize$\wt_2(\vzbxyp,1)$}};

\node[main, fill=white]  at (8*\tikzeX,-1.5*\tikzeY) (w4c2){{\footnotesize$\wt_2(\vxb,a)$}};  

\node[main, fill=white] at (8*\tikzeX,-2.5*\tikzeY) (w4c3){{\footnotesize$\wt_3(\vxb,b)$}}; 

\node[main, fill=white] at (8*\tikzeX,-3.5*\tikzeY) (w4c4){{\footnotesize$\wt_4(\vxb,a)$}}; 

\node[main, fill=white] at (8*\tikzeX,-4*\tikzeY) (r2bl){{\footnotesize$\rd_2(\vzbxyp,1)$}}; 

\begin{pgfonlayer}{foreground layer}

    \draw [rf] (w1i1) -- (r311);
    \draw [rf] (w1i2.east) -- (r321);

    \draw [rf] (w1c1) -- (r211);
    \draw [rf] (w1c2) -- (r221);
    \draw [rf] (w1c3) -- (r231);
    \draw [rf] (w1c4) -- (r241);

    \draw [rf] (w4i1) -- (r312);
    \draw [rf] (w4i2) -- (r322);

    \draw [rf] (w4c1) -- (r212);
    \draw [rf] (w4c2) -- (r222);
    \draw [rf] (w4c3) -- (r232);
    \draw [rf] (w4c4) -- (r242);

    \draw [rf] (w1al.south east) -- (r1al.north west);
    \draw [rf] (w2al.south east) -- (r2al.north west);

    \draw [rf] (w1bl) -- (r1bl);
    \draw [rf] (w2bl.south west) -- (r2bl.north east);

\end{pgfonlayer}

\fill[ Blue, opacity = 0.15] (0.1*\tikzeX,0.7*\tikzeY) rectangle (2.9*\tikzeX,-0.2*\tikzeY);

\fill[ Blue, opacity = 0.15] (0.1*\tikzeX,-0.3*\tikzeY) rectangle (2.9*\tikzeX,-1.7*\tikzeY);

\fill[ Blue, opacity = 0.15] (0.1*\tikzeX,-1.8*\tikzeY) rectangle (2.9*\tikzeX,-2.7*\tikzeY);

\fill[ Blue, opacity = 0.15] (0.1*\tikzeX,-2.8*\tikzeY) rectangle (2.9*\tikzeX,-3.7*\tikzeY);

\fill[ RubineRed, opacity = 0.15] (3.1*\tikzeX,1.2*\tikzeY) rectangle (5.9*\tikzeX,0.3*\tikzeY);

\fill[ RubineRed, opacity = 0.15] (3.1*\tikzeX,-0.3*\tikzeY) rectangle (5.9*\tikzeX,-1.2*\tikzeY);

\draw [decorate,decoration={brace,amplitude=4},thin, white] (-7*\tikzeX, -1.1*\tikzeY) -- (-7*\tikzeX, 1.3*\tikzeY) node (bx1) [black,midway] {};
        \node[left = 0cm of bx1,align=right] {writes \\
        $\alpha_2 = aa$};

\fill[ black, opacity = 0.08] (w1i1.north west) rectangle (r1al.south east);

\draw [decorate,decoration={brace,amplitude=4},thin, white] (-7*\tikzeX, -3.8*\tikzeY) -- (-7*\tikzeX, -1.3*\tikzeY) node (bx1) [black,midway] {};
        \node[left = 0cm of bx1,align=right] {writes\\
        $\alpha_1 = ba$};

\fill[ black, opacity = 0.08] (w1i2.north west) rectangle (r2al.south east);

\draw [decorate,decoration={brace,amplitude=4},thin, white] (13*\tikzeX, 0.7*\tikzeY) -- (13*\tikzeX, -0.6*\tikzeY) node (bx1) [black,midway] {};
        \node[right = 0cm of bx1,align=left] {writes\\
        $\beta_2 = a$};

\fill[ black, opacity = 0.08] (w4i1.north east) rectangle (r1bl.south west);

\draw [decorate,decoration={brace,amplitude=4},thin, white] (13*\tikzeX, -0.8*\tikzeY) -- (13*\tikzeX, -4.2*\tikzeY) node (bx1) [black,midway,align=left] {};
        \node[right = 0cm of bx1,align=left] {writes\\
        $\beta_1 = aba$};

\fill[ black, opacity = 0.08] (w4i2.north east) rectangle (r2bl.south west);

\end{tikzpicture}
\caption{
The high level behavior of $\tax$, $\tay$, $\tbx$, $\tby$, $\tx$, and $\ty$ on a PCP instance with the tuples $\tuple{ba,aa}$ and $\tuple{aba,a}$ as input.
The solution illustrated is the sequence $2,1$, which generates the string $aaba$. 
Words guessed by $\tax$ and $\tay$ are shown on the left; those guessed by $\tbx$ and $\tby$, on the right. 
The highlighted pairs in $\tx$ and $\ty$ mark operations that must read the same value.
}
\label{fig:pcp_example}
\end{figure}

Each of the threads ends with an instruction $\term$.
All threads can simultaneously reach $\term$ if and only if the sequence $j_1, \cdots, j_k$ (guessed by threads $\tax$ and $\tay$) and the sequence  $j'_1, \cdots, j'_{k'}$ (guessed by threads $\tbx$ and $\tby$) are such that $\alpha_{j_1}\cdots \alpha_{j_k}=\beta_{j'_1}\cdots \beta_{j'_{k'}}$  (as verified by $\tx$) and $j_1,\cdots, j_k = j'_1,\cdots, j'_{k'}$ (as verified by $\ty$).
This means that $\ConcProg$ has an execution in which every thread reaches $\term$ if and only if the corresponding PCP instance $\PCPInstance$ has a solution, as desired.

We refer to threads $\tax$, $\tay$, $\tbx$ and $\tby$ as \emph{guesser threads}, and to threads $\tx$ and $\ty$ as \emph{verifier threads}. 

\Paragraph{The challenge.}
Establishing the interaction pattern described above is tricky,
because it requires that the verifier threads do not skip any writes performed by the guesser threads.
For example, after $\rd_1(\tx, \vxa)$ has read from $\wt_1(\tax, \vxa)$, there is no mechanism (yet) to disallow the next read $\rd_2(\tx, \vxa)$ from skipping the next write $\wt_2(\tax, \vxa)$ and instead read from $\wt_3(\tax, \vxa)$.
Moreover, it is essential that this ``no-skipping'' property is enforced without using $\rf$-edges from verifier threads back to the guesser threads.
Intuitively, this is because the thread sets $\{\tax, \tay\}$, $\{\tbx, \tby\}$, 
$\{\tx\}$, and $\{\ty\}$, must execute highly asynchronously to each other.  
The presence of such $\rf$ edges, together with the existing $\rf$ edges from guesser threads to verifier threads on locations $\vxa$, $\vya$, $\vxb$, and $\vyb$, could make $\hb$ cyclic, thereby violating \ref{eq:hbirr} and producing an inconsistent execution.

Instead, we enforce no-skipping by utilizing $\mo$-edges from the verifier threads back to the guesser threads, while retaining that $\mo;\hb$ is irreflexive, as per \ref{eq:wc}\footnote{It may be worth noting that $(\hb\cup \mo)^+$ will not be irreflexive in general, but the irreflexivitity of this relation is only a requirement of write coherence of stronger models, such as Strong Release/Acquire.}.
Our solution introduces four new guesser threads ($\taxp$, $\tayp$, $\tbxp$, $\tbyp$) and two new verifier threads ($\txp$, and $\typ$).
Each primed thread is a variant of one of the aforementioned main threads ($\tax$, $\tay$, $\tbx$, $\tby$, $\tx$, and $\ty$), and each primed thread executes synchronously with its non-primed variant to enforce no-skipping.
For example, $\taxp$ executes synchronously with $\tax$, and $\txp$ executes synchronously with $\tx$, and the interaction between these four threads ensures that $\tx$ does not skip any of the writes of $\tax$ on $\vxa$.

In the remainder of this section we illustrate the full set of locations, and explain how various no-skipping variants are achieved.
\cref{fig:pcp_top} illustrates the full communication topology of our construction.

\newcommand\tikztgX{2.5}
\newcommand\tikztgY{2}

\begin{figure}
\centering
\scalebox{0.9}{
\begin{tikzpicture}[ line width=1pt, main/.style = {draw,}] 

            \draw[draw=none, fill=blue, opacity=0.1, thick, rounded corners] (-0.8, 2*\tikztgY-0.8) rectangle (3*\tikztgX+0.8, 2*\tikztgY+0.8);
            \node[blue] at (3*\tikztgX/2, 2*\tikztgY+1.2) {\cref{FIG:RAEX_AXY}};
            
            \draw[draw=none, fill=red, opacity=0.1, thick, rounded corners] (-0.8, 2*\tikztgY+0.8) rectangle (\tikztgX+0.8, 1*\tikztgY-0.8);
            \node[red!50!black] at (\tikztgX*-0.55, 1.45*\tikztgY) {\cref{FIG:RAEX}};

            \node[main, fill=white, draw=black, text=black, minimum size=0.8cm] (tayp) at (\tikztgX*3,2*\tikztgY){\large$\tayp$};
            \node[main, fill=white, draw=black, text=black, minimum size=0.8cm] (tay) at (\tikztgX*2,2*\tikztgY){\large$\tay$};
            \node[main, fill=white, draw=black, text=black, minimum size=0.8cm] (tax) at (\tikztgX*1,2*\tikztgY){\large$\tax$};
            \node[main, fill=white, draw=black, text=black, minimum size=0.8cm] (taxp) at (\tikztgX*0,2*\tikztgY){\large$\taxp$};

            \node[main, fill=white, draw=black, text=black, minimum size=0.8cm] (typ) at (\tikztgX*3,1*\tikztgY){\large$\typ$};
            \node[main, fill=white, draw=black, text=black, minimum size=0.8cm] (ty) at (\tikztgX*2,1*\tikztgY){\large$\ty$};
            \node[main, fill=white, draw=black, text=black, minimum size=0.8cm] (tx) at (\tikztgX*1,1*\tikztgY){\large$\tx$};
            \node[main, fill=white, draw=black, text=black, minimum size=0.8cm] (txp) at (\tikztgX*0,1*\tikztgY){\large$\txp$};

            \node[main, fill=white, draw=black, text=black, minimum size=0.8cm] (tbyp) at (\tikztgX*3,0*\tikztgY){\large$\tbyp$};
            \node[main, fill=white, draw=black, text=black, minimum size=0.8cm] (tby) at (\tikztgX*2,0*\tikztgY){\large$\tby$};
            \node[main, fill=white, draw=black, text=black, minimum size=0.8cm] (tbx) at (\tikztgX*1,0*\tikztgY){\large$\tbx$};
            \node[main, fill=white, draw=black, text=black, minimum size=0.8cm] (tbxp) at (\tikztgX*0,0*\tikztgY){\large$\tbxp$};


            \draw [dashed, ->, color=\zcolor] (tay.30) to node [above=0pt]{\small$\vzay$} (tayp.150);
            \draw [dashed, ->, color=\zpcolor] (tayp.-150) to node [below=0pt]{\small$\vzayp$} (tay.-30);

            \draw [dashed, ->, color=\zcolor] (tax.30) to node [above=0pt]{\small$\vzaxy$} (tay.150);
            \draw [dashed, ->, color=\lcolor] (tay.-150) to node [below=0pt]{\small$\vzaxyp$} (tax.-30);

            \draw [dashed, ->, color=\zcolor] (tax.150) to node [above=0pt]{\small$\vzax$} (taxp.30);
            \draw [dashed, ->, color=\zpcolor] (taxp.-30) to node [below=0pt]{\small$\vzaxp$} (tax.-150);

            \draw [dashed, ->, color=\ycolor] (ty.30) to node [above=0pt]{\small$\vya , \vyb$} (typ.150);
            \draw [dashed, ->, color=\ypcolor] (typ.-150) to node [below=0pt]{\small$\vyap, \vybp$} (ty.-30);

            \draw [dashed, ->, color=\xcolor] (tx.150) to node [above=0pt]{\small$\vxa, \vxb$} (txp.30);
            \draw [dashed, ->, color=\xpcolor] (txp.-30) to node [below=0pt]{\small$\vxap, \vxbp$} (tx.-150);

            \draw [dashed, ->, color=\zcolor] (tby.30) to node [above=0pt]{\small$\vzby$} (tbyp.150);
            \draw [dashed, ->, color=\zpcolor] (tbyp.-150) to node [below=0pt]{\small$\vzbyp$} (tby.-30);

            \draw [dashed, ->, color=\zcolor] (tbx.30) to node [above=0pt]{\small$\vzbxy$} (tby.150);
            \draw [dashed, ->, color=\lcolor] (tby.-150) to node [below=0pt]{\small$\vzbxyp$} (tbx.-30);

            \draw [dashed, ->, color=\zcolor] (tbx.150) to node [above=0pt]{\small$\vzbx$} (tbxp.30);
            \draw [dashed, ->, color=\zpcolor] (tbxp.-30) to node [below=0pt]{\small$\vzbxp$} (tbx.-150);


            \draw [->, color=\ycolor] (tay) to node [left=0pt]{\small$\vya$} (ty);
            \draw [->, color=\ypcolor] (tayp) to node [left=0pt]{\small$\vyap$} (typ);
            \draw [->, color=\xcolor] (tax) to node [right=0pt]{\small$\vxa$} (tx);
            \draw [->, color=\xpcolor] (taxp) to node [right=0pt]{\small$\vxap$} (txp);

            \draw [->, color=\ycolor] (tby) to node [left=0pt]{\small$\vyb$} (ty);
            \draw [->, color=\ypcolor] (tbyp) to node [left=0pt]{\small$\vybp$} (typ);
            \draw [->, color=\xcolor] (tbx) to node [right=0pt]{\small$\vxb$} (tx);
            \draw [->, color=\xpcolor] (tbxp) to node [right=0pt]{\small$\vxbp$} (txp);

            \draw [decorate,decoration={brace,amplitude=4},thin] (\tikztgX*3.4,2.3*\tikztgY) -- (\tikztgX*3.4,1.7*\tikztgY) node (d1) [black,midway] {};
            \node[right = 0.1cm of d1] {Guesser threads};
            
            \draw [decorate,decoration={brace,amplitude=4},thin] (\tikztgX*3.4,1.3*\tikztgY) -- (\tikztgX*3.4,0.7*\tikztgY) node (d2) [black,midway] {};
            \node[right = 0.1cm of d2] {Verifier threads};

            \draw [decorate,decoration={brace,amplitude=4},thin] (\tikztgX*3.4,0.3*\tikztgY) -- (\tikztgX*3.4,-0.3*\tikztgY) node (d3) [black,midway] {};
            \node[right = 0.1cm of d3] {Guesser threads};

\end{tikzpicture}
}
\caption{
The communication topology between the threads of $\ConcProg$. 
Dashed edges indicate a $\rf$ relation between the threads.
Filled edges indicate a $\rf$ relation \emph{and} that both threads write to the respective location. 
\cref{FIG:RAEX_AXY} illustrates in more detail the communication between threads $\taxp$, $\tax$, $\tay$ and $\tayp$. 
\cref{FIG:RAEX} illustrates in more detail the communication between threads $\taxp$, $\tax$, $\txp$ and $\tx$. 
}
\label{fig:pcp_top}
\end{figure}
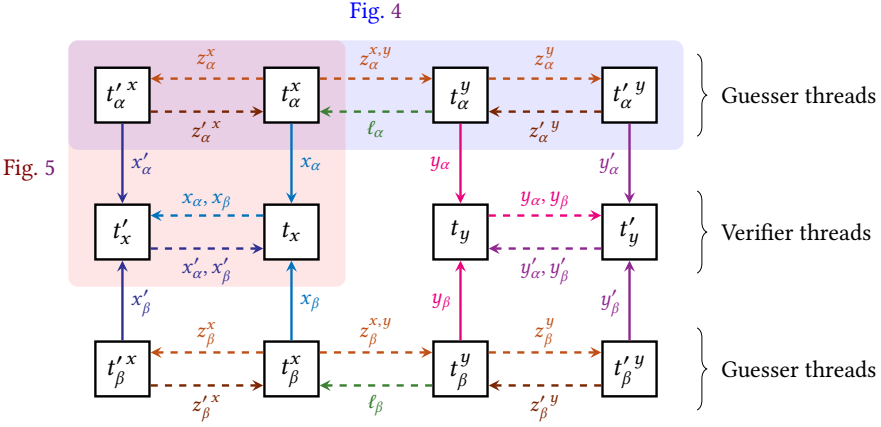

\Paragraph{Guesser threads.}
We introduce auxiliary locations $\vzax, \vzaxp, \vzay, \vzayp, \vzaxy$ and $\vzbx, \vzbxp, \vzby, \vzbyp$, $\vzbxy$; these, as well as $\vzaxyp$ and $\vzbxyp$, establish synchronization between each non-primed guesser thread and its primed variant (see top and bottom rows of \cref{fig:pcp_top}).
Overall, these locations enforce that each of the thread sets $\{\tax, \tay, \taxp, \tayp\}$ and $\{\tbx, \tby, \tbxp, \tbyp\}$ execute synchronously, meaning that the threads of each set, together, produce a sequentially-consistent execution if we ignore interactions with other threads.

All events $\event_i$ on $z$-locations ($\vzax, \vzaxp, \vzay, \vzayp, \vzaxy, \vzbx, \vzbxp, \vzby, \vzbyp, \vzbxy$) write/read a single value $\val(\event_i) = (i \bmod 4)$. 
Events  $\event_i$ on $\ell$-locations ($\vzaxyp, \vzbxyp$) write/read $\tuple{\vc,v}$, where $\vc = (i \bmod 4)$ and $v$ is some index $j \in [n]$ of the PCP instance $\PCPInstance$. 
This distinction exists because events on $\vzaxyp$ (resp., $\vzbxyp$) have a double role:~besides synchronizing $\tax$ and $\tay$ (resp., $\tbx$ and $\tby$), they are also used for communicating the guessed PCP solution of $\tay$ to $\tax$ (resp., $\tby$ to $\tbx$), so that $\tax$ (resp., $\tbx$) can verify that it matches its own guess.

\newcommand\tikzwXa{0.8}
\newcommand\tikzwYa{0.8}

\begin{figure}
\centering
\begin{tikzpicture}[ line width=1pt, main/.style = {draw,inner sep=1pt}, font=\normalsize]

\draw [->] (10*\tikzwXa,0.5*\tikzwYa) -- (10*\tikzwXa,-13*\tikzwYa);
\node[] (aa) at (10*\tikzwXa,1*\tikzwYa){$\tayp$};
\draw [->] (6*\tikzwXa,0.5*\tikzwYa) -- (6*\tikzwXa,-13*\tikzwYa);
\node[] (aa) at (6*\tikzwXa,1*\tikzwYa){$\tay$};
\draw [->] (2*\tikzwXa,0.5*\tikzwYa) -- (2*\tikzwXa,-13*\tikzwYa);
\node[] (aa) at (2*\tikzwXa,1*\tikzwYa){$\tax$};
\draw [->] (-2*\tikzwXa,0.5*\tikzwYa) -- (-2*\tikzwXa,-13*\tikzwYa);
\node[] (aa) at (-2*\tikzwXa,1*\tikzwYa){$\taxp$};

\foreach \i in {1,2,3} {
  \pgfmathsetmacro{\yoffset}{((\i-1)*2.5+2)}
  \node[main, fill=white, draw=white, inner sep=1pt] (w\i yap) at (10*\tikzwXa,-\yoffset*\tikzwYa)
    {\footnotesize$\wt_{\i}(\vyap,\tuple{\tayp,0})$};
  \node[main, fill=white, draw=white, inner sep=1pt] (w\i zay) at (10*\tikzwXa,{-(\yoffset+0.5)*\tikzwYa})
    {\footnotesize$\wt_{\i}(\vzay,\i)$};
  \node[main, fill=white, draw=white, inner sep=1pt] (r\i zayp) at (10*\tikzwXa,{-(\yoffset+1.0)*\tikzwYa})
    {\footnotesize$\rd_{\i}(\vzayp,\i)$};
}

\foreach \i in {1,2,3} {
  \pgfmathsetmacro{\yoffset}{((\i-1)*2.5)}
    \pgfmathtruncatemacro{\imod}{Mod(\i,4)}
  \pgfmathtruncatemacro{\iprev}{\i-1}
  \pgfmathtruncatemacro{\ipmod}{Mod(\iprev,4)}
  \ifnum\i>1
      \node[main, fill=white, draw=white, inner sep=1pt] (w\i ya) at (6*\tikzwXa,{-(\yoffset+0.0)*\tikzwYa})
        {\footnotesize$\wt_{\i}(\vya,\tuple{\tay, *})$};
    \else
        \node[main, fill=white, draw=white, inner sep=1pt] (w\i ya) at (6*\tikzwXa,{-(\yoffset+0.0)*\tikzwYa})
        {\footnotesize$\wt_{\i}(\vya,\tuple{\ov{\tay}, *})$};
    \fi
    \node[main, fill=gray!20, draw=white, inner sep=1pt] (w\i zaxyp) at (6*\tikzwXa,{-(\yoffset+0.5)*\tikzwYa})
    {\footnotesize$\wt_{\i}(\vzaxyp,\tuple{\i, *})$};
      \node[main, fill=white, draw=white, inner sep=1pt] (w\i zayp) at (6*\tikzwXa,{-(\yoffset+1.0)*\tikzwYa})
    {\footnotesize$\wt_{\i}(\vzayp,\i)$};
  \ifnum\i>1
    \node[main, fill=gray!20, draw=white, inner sep=1pt] (r\iprev zaxy) at (6*\tikzwXa,{-(\yoffset+1.5)*\tikzwYa})
      {\footnotesize$\rd_{\iprev}(\vzaxy,\ipmod)$};
    \node[main, fill=white, draw=white, inner sep=1pt] (r\iprev zay) at (6*\tikzwXa,{-(\yoffset+2.0)*\tikzwYa})
      {\footnotesize$\rd_{\iprev}(\vzay,\ipmod)$};
  \fi
}

\foreach \i in {1,2,3,4,5} {
  \pgfmathsetmacro{\yoffset}{(\i-1)*2.5}
  \pgfmathtruncatemacro{\imod}{Mod(\i,4)}
  \pgfmathtruncatemacro{\iprev}{\i-1}
  \pgfmathtruncatemacro{\ipmod}{Mod(\iprev,4)}
  \ifnum\i>1
  \node[main, fill=white, draw=white, inner sep=1pt] (w\i xa) at (2*\tikzwXa,{-(\yoffset+0.5)*\tikzwYa})
    {\footnotesize$\wt_{\i}(\vxa,\tuple{\tax,*})$};
    \else
\node[main, fill=white, draw=white, inner sep=1pt] (w\i xa) at (2*\tikzwXa,{-(\yoffset+0.5)*\tikzwYa})
    {\footnotesize$\wt_{\i}(\vxa,\tuple{\ov{\tax},*})$};
    \fi
    \ifnum\i=4
      \node[main, fill=white, draw=white, inner sep=1pt] (w\i zaxp) at (2*\tikzwXa,{-(\yoffset+1.0)*\tikzwYa})
        {\footnotesize$\wt_{\i}(\vzaxp,\imod)$\normalsize\squaredsmall{D}};
    \else
        \node[main, fill=white, draw=white, inner sep=1pt] (w\i zaxp) at (2*\tikzwXa,{-(\yoffset+1.0)*\tikzwYa})
            {\footnotesize$\wt_{\i}(\vzaxp,\imod)$};
    \fi
  \ifnum\i=2
    \node[main, fill=white, draw=white, inner sep=1pt] (r\iprev zax) at (2*\tikzwXa,{-(\yoffset+1.5)*\tikzwYa})
      {\footnotesize$\rd_{\iprev}(\vzax,\ipmod)$\normalsize\squaredsmall{A}};
  \fi
  \ifnum\i>2
    \node[main, fill=white, draw=white, inner sep=1pt] (r\iprev zax) at (2*\tikzwXa,{-(\yoffset+1.5)*\tikzwYa})
      {\footnotesize$\rd_{\iprev}(\vzax,\ipmod)$};
  \fi
}

\foreach \i in {1,3,4} {
  \pgfmathsetmacro{\yoffset}{(\i-1)*2.5}
  \pgfmathtruncatemacro{\idx}{(\i==1) ? 1 : ((\i==3) ? 2 : 3)}

    \node[main, fill=gray!20, draw=white, inner sep=1pt] (w\idx zaxy) at (2*\tikzwXa,{-(\yoffset+0)*\tikzwYa})
    {\footnotesize$\wt_{\idx}(\vzaxy,\idx)$};
}

\foreach \i in {2,3,5} {
  \pgfmathsetmacro{\yoffset}{(\i-1)*2.5}
  \pgfmathtruncatemacro{\idx}{(\i==2) ? 1 : ((\i==3) ? 2 : 3)}

  \node[main, fill=gray!20, draw=white, inner sep=1pt] (r\idx zaxyp) at (2*\tikzwXa,{-(\yoffset+2.0)*\tikzwYa})
    {\footnotesize$\rd_{\idx}(\vzaxyp,\tuple{\idx, *})$};
}

\foreach \i in {1,2,3,4,5} {
  \pgfmathsetmacro{\yoffset}{((\i-1)*2.5+1.5)}
    \pgfmathtruncatemacro{\imod}{Mod(\i,4)}
  \pgfmathtruncatemacro{\iprev}{\i-1}
  \pgfmathtruncatemacro{\ipmod}{Mod(\iprev,4)}
  \node[main, fill=white, draw=white, inner sep=1pt] (w\i axp) at (-2*\tikzwXa,-\yoffset*\tikzwYa)
    {\footnotesize$\wt_{\i}(\vxap,\tuple{\taxp,0})$};
    \ifnum\i<5
      \node[main, fill=white, draw=white, inner sep=1pt] (w\i zax) at (-2*\tikzwXa,{-(\yoffset+0.5)*\tikzwYa})
        {\footnotesize$\wt_{\i}(\vzax,\imod)$};
    \else
        \node[main, fill=white, draw=white, inner sep=1pt] (w\i zax) at (-2*\tikzwXa,{-(\yoffset+0.5)*\tikzwYa})
            {\footnotesize$\wt_{\i}(\vzax,\imod)$\normalsize\squaredsmall{B}};
    \fi
    \ifnum\i=4
      \node[main, fill=white, draw=white, inner sep=1pt] (r\i zaxp) at (-2*\tikzwXa,{-(\yoffset+1.0)*\tikzwYa})
    {\footnotesize$\rd_{\i}(\vzaxp,\imod)$\normalsize\squaredsmall{C}};
    \else
        \node[main, fill=white, draw=white, inner sep=1pt] (r\i zaxp) at (-2*\tikzwXa,{-(\yoffset+1.0)*\tikzwYa})
    {\footnotesize$\rd_{\i}(\vzaxp,\imod)$};
    \fi
}

\foreach \i in {1,2} {
  \draw [rf] (w\i zay.west) -- (r\i zay.east) node [midway,above left=-5pt]{};
}
\foreach \i in {1,2,3} {
  \draw [rf] (w\i zayp.east) -- (r\i zayp.west) node [midway,above left=-5pt]{};
}
\foreach \i in {2,3,4} {
  \draw [rf] (w\i zax.east) -- (r\i zax.west) node [midway,above left=-5pt]{};
}
\foreach \i in {2,3,4,5} {
  \draw [rf] (w\i zaxp.west) -- (r\i zaxp.east) node [midway,above left=-5pt]{};
}

\draw [rf] (w1zax.east) to node[]{\circledsmall{1}} (r1zax.west);
\draw [rf] (w1zaxp.west) to node[]{\circledsmall{2}} (r1zaxp.east);

\foreach \i in {1,2} {
  \draw [rf] (w\i zaxy.east) -- (r\i zaxy.west) node [midway,above left=-5pt]{};
}
\foreach \i in {1,2,3} {
  \draw [rf] (w\i zaxyp.west) -- (r\i zaxyp.east) node [midway,above left=-5pt]{};
}

\end{tikzpicture}
\caption{
An illustration of the synchronization between threads in $\{\tax, \tay, \taxp, \tayp\}$. 
Here $|\alpha_{j_1}|=2$, $|\alpha_{j_2}|=1$ and $|\alpha_{j_3}|=2$, and $\tax$ performs two writes events to $\vxa$ corresponding to the characters of its first guessed word ($\wt_1(\tax,\vxa)$ and $\wt_2(\tax,\vxa)$), then one write event for the second ($\wt_3(\tax,\vxa)$), and two write events for the third ($\wt_4(\tax,\vxa)$ and $\wt_5(\tax,\vxa)$). 
We denote by $\AnyValue$ an unspecified value in an event, which corresponds to a value specific to the guessed solution of the PCP instance.
}
\label{FIG:RAEX_AXY}
\end{figure}
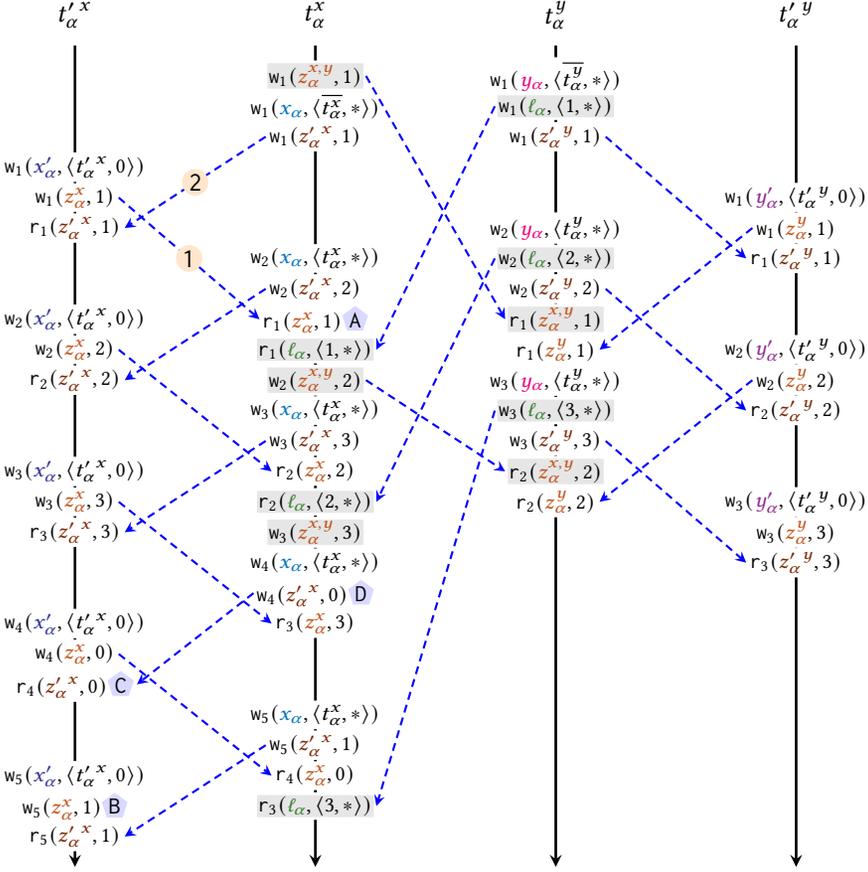

\SubParagraph{Synchronization among $\tax$, $\taxp$, $\tay$, and $\tayp$.}
We now illustrate how $\tax$, $\taxp$, $\tay$, and $\tayp$ use the aforementioned locations to interact, following the example in \cref{FIG:RAEX_AXY}.
The values shown for each event in the figure are chosen deterministically by the respective threads; our analysis establishes why the $\rf$-edges must always follow this particular pattern.
In turn, this pattern is combined later with other patterns to enforce no-skipping between the guesser and verifier threads, which is our goal in the first place.

First, observe that $\wt_1(\taxp, \vzax) \LTo{\rf} \rd_1(\tax, \vzax)$ (edge \circledsmall{1}), because $\rd_1(\tax, \vzax)$ (event \squaredsmall{A}) must read value $1$.
Any later write $\wt_i(\taxp, \vzax)$ that writes this value is such that $i\geq 5$ (for example, event \squaredsmall{B}).
However, any such $\wt_i(\taxp, \vzax)$ is preceded by $\rd_4(\taxp, \vzaxp)$ (event \squaredsmall{C}), which reads value $0$.
In turn, this value is only written by writes $\wt_j(\tax, \vzaxp)$ that appear after $\rd_1(\tax, \vzax)$ (for example, event \squaredsmall{D}).
Hence, if $\rd_1(\tax, \vzax)$ was to read from any write $\wt_i(\taxp, \vzax)$ with $i\geq 2$, this would cause an $\hb$-cycle 
\[\wt_i(\taxp, \vzax)\LTo{\rf}\rd_1(\tax, \vzax) \LTo{\po} \wt_j(\tax, \vzaxp) \LTo{\rf} \rd_4(\taxp, \vzaxp) \LTo{\po} \wt_i(\taxp, \vzax),\] violating \ref{eq:hbirr}.
Symmetric reasoning on $\rd_1(\taxp, \vzaxp)$ establishes that $\wt_1(\tax,\vzaxp)\LTo{\rf} \rd_1(\taxp, \vzaxp)$ (edge \circledsmall{2}).
The argument repeats for the next two read events $\rd_2(\tax, \vzax)$ and $\rd_2(\taxp, \vzaxp)$,
inductively establishing the no-skipping invariant
\[
\forall i \quad \wt_i(\taxp, \vzax) \LTo{\rf} \rd_i(\tax, \vzax)   \quad \text{and} \quad  \wt_i(\tax, \vzaxp) \LTo{\rf} \rd_i(\taxp,\vzaxp)
\numberthis\label{eq:pcp_inv_tax_taxp}
\]
The same pattern holds for the interaction between $\tay$ and $\tayp$ over locations $\vzay$ and $\vzayp$, establishing the no-skipping invariant
\[
\forall i \quad \wt_i(\tayp, \vzay) \LTo{\rf} \rd_i(\tay, \vzay)   \quad \text{and} \quad  \wt_i(\tay, \vzayp) \LTo{\rf} \rd_i(\tayp,\vzayp)
\numberthis\label{eq:pcp_inv_tay_tayp}
\]
Finally, a similar pattern holds for the interaction between $\tax$ and $\tay$ over locations $\vzaxy$ and $\vzaxyp$, establishing the no-skipping invariant
\[
\forall i \quad \wt_i(\tax, \vzaxy) \LTo{\rf} \rd_i(\tay, \vzaxy)   \quad \text{and} \quad  \wt_i(\tay, \vzaxyp) \LTo{\rf} \rd_i(\tax,\vzaxyp)
\numberthis\label{eq:pcp_inv_tax_tay}
\]
\SubParagraph{Synchronization among $\tbx$, $\tbxp$, $\tby$, and $\tbyp$.}
Threads $\tbx$, $\tbxp$, $\tby$, and $\tbyp$ are completely symmetric to $\tax$, $\taxp$, $\tay$, and $\tayp$.
They use locations $\vzbx$, $\vzbxp$, $\vzby$, $\vzbyp$, $\vzbxy$, and $\vzbxyp$ to form the same pattern as threads $\tax$, $\taxp$, $\tay$, and $\tayp$, and locations $\vzax$, $\vzaxp$, $\vzay$, $\vzayp$, $\vzaxy$, and $\vzaxyp$, respectively.
These establish the following no-skipping invariants, which are analogous to \cref{eq:pcp_inv_tax_taxp,eq:pcp_inv_tay_tayp,eq:pcp_inv_tax_tay}.
\begin{align*}
    \forall i \quad &\wt_i(\tbxp, \vzbx) \LTo{\rf} \rd_i(\tbx, \vzbx) \quad \text{and} \quad \wt_i(\tbx, \vzbxp) \LTo{\rf} \rd_i(\tbxp,\vzbxp)
    \stepcounter{equation}\tag{\theequation}\label{eq:pcp_inv_tbx_tbxp}\\
    \forall i \quad &\wt_i(\tbyp, \vzby) \LTo{\rf} \rd_i(\tby, \vzby) \quad \text{and} \quad \wt_i(\tby, \vzbyp) \LTo{\rf} \rd_i(\tbyp,\vzbyp)
    \stepcounter{equation}\tag{\theequation}\label{eq:pcp_inv_tby_tbyp}\\
    \forall i \quad &\wt_i(\tbx, \vzbxy) \LTo{\rf} \rd_i(\tby, \vzbxy) \quad \text{and} \quad \wt_i(\tby, \vzbxyp) \LTo{\rf} \rd_i(\tbx,\vzbxyp)
    \stepcounter{equation}\tag{\theequation}\label{eq:pcp_inv_tbx_tby}
\end{align*}
\Paragraph{Verifier threads.}
Recall that the verifier thread $\tx$ reads the values of $\vxa$ and $\vxb$ written by the guesser threads $\tax$ and $\tbx$, and $\ty$ reads $\vya$ and $\vyb$ written by $\tay$ and $\tby$ (see \cref{fig:pcp_top}).
As with the guesser threads, each of the verifier threads $\tx$ and $\ty$ has a primed variant, $\txp$ and $\typ$.
We also employ four additional auxiliary locations, $\vxap$, $\vyap$, $\vxbp$, and $\vybp$.

Events $\event_i$ on $x$-locations ($\vxa, \vxap, \vxb, \vxbp$) write/read either $\tuple{t, v}$ or $\tuple{\ov{t}, v}$,
and events $\event_i$ on $y$-locations ($\vya, \vyap, \vyb, \vybp$) write/read either $\tuple{t, v}$ or $\tuple{\ov{t}, v}$.
In both cases, if $\event_i$ is a write event, the first field $t$ records $\tid(\event_i)$ (the distinction between $t$ and $\ov{t}$ will become clear later);
if $\event_i$ is a read event, it records the thread that $\event_i$ intends to read from.
This implies that each read event $\event_i$ reads from its intended thread, even though multiple threads write to $\lloc(\event_i)$.
The second field $v$ depends on the value of its first field $t$.
For $x$-locations, $v$ is a letter $\gamma \in \PCPAlphabet$ if $t\in\{\tax,\tbx\}$, 0 if $t\in\{\taxp,\tbxp\}$, and $(i \bmod 4)$ if $t\in\{\tx,\txp\}$;
for $y$-locations, $v$ is an index $j \in [n]$ if $t\in\{\tay,\tby\}$, 0 if $t\in\{\tayp,\tbyp\}$, and $(i \bmod 4)$ if $t\in\{\ty,\typ\}$.

\input{figures/undec_ra_example}

\cref{FIG:RAEX} illustrates the interaction between the threads $\tax$, $\taxp$, $\tx$ and $\txp$, which forces $\tx$ to not skip any of the writes of $\tax$ on $\vxa$, and thus read the whole solution to the PCP instance guessed by $\tax$.
Next, we explain how this pattern is achieved.
Events that are not part of this pattern are omitted for readability.

\SubParagraph{Synchronization between $\tx$ and $\txp$.}
The locations $\vxa$, $\vxap$, $\vxb$, and $\vxbp$, synchronize $\tx$ and $\txp$.
As shown in \cref{FIG:RAEX},  $\tx$ and $\txp$ alternate reads and writes on $\vxa$ and $\vxap$, forming the same pattern that was used for the synchronization between the guesser threads in \cref{FIG:RAEX_AXY}.
The first field of the values assures that each $\rd_i(\tx, \vxap)$ reads from $\txp$ and $\rd_i(\txp, \vxa)$ reads from $\tx$.
These reads follow a pattern identical to the synchronization between the guesser threads, establishing the invariant
\[
\forall i \quad \wt_i(\tx, \vxa) \LTo{\rf} \rd_i(\txp, \vxa) \quad \text{and} \quad \wt_i(\txp, \vxap) \LTo{\rf} \rd_i(\tx, \vxap)
\numberthis\label{eq:pcp_inv_tx_txp}
\]
The same pattern holds between $\tx$ and $\txp$ over locations $\vxb$ and $\vxbp$, establishing also that
\[
\forall i \quad \wt_i(\tx, \vxb) \LTo{\rf} \rd_i(\txp, \vxb) \quad \text{and} \quad \wt_i(\txp, \vxbp) \LTo{\rf} \rd_i(\tx, \vxbp)
\numberthis
\]
\paragraph{Synchronization between $\ty$ and $\typ$.}
Similarly to above, the locations $\vya$, $\vyap$, $\vyb$, and $\vybp$, synchronize $\ty$ and $\typ$.
The pattern between $\ty$ and $\typ$ over locations $\vya$ and $\vyap$, and the pattern between $\ty$ and $\typ$ over locations $\vyb$ and $\vybp$, establish the invariants
\[
\forall i \quad \wt_i(\ty, \vya) \LTo{\rf} \rd_i(\typ, \vya) \quad \text{and} \quad \wt_i(\typ, \vyap) \LTo{\rf} \rd_i(\ty, \vyap)
\numberthis\label{eq:pcp_inv_ty_typ}
\]
\[
\forall i \quad \wt_i(\ty, \vyb) \LTo{\rf} \rd_i(\typ, \vyb) \quad \text{and} \quad \wt_i(\typ, \vybp) \LTo{\rf} \rd_i(\ty, \vybp)
\numberthis
\]
\Paragraph{No-skipping between guesser and verifier threads.}
We are finally ready to show how the no-skipping property holds on locations $\vxa$, $\vxb$, $\vya$, and $\vyb$, between guesser and verifier threads.

\SubParagraph{No-skipping on $\vxa$.}
We illustrate how $\tx$ is forced to not skip any of the writes of $\tax$ on $\vxa$, by exploiting the interactions between $\tax$, $\taxp$, $\tx$, and $\txp$, shown in \cref{FIG:RAEX}.
First, we have that $\wt_1(\tax, \vxa)\LTo{\rf}\rd_1(\tx,\vxa)$ (edge \circledsmall{1}), forced by writing/reading a unique value (the uniqueness is established by the overline in  $\ov{\tax}$, which is not used again).
Since we also have $\wt_1(\tx, \vxa)\LTo{\hb}\rd_1(\tx,\vxa)$ (per the $\po$-order of $\tx$), this implies that  $\wt_1(\tx, \vxa)\LTo{\mo}\wt_1(\tax, \vxa)$ (edge \circledsmall{2}), due to \ref{eq:rc}.
Due to \cref{eq:pcp_inv_tax_taxp}, we have that $\wt_1(\tax, \vzaxp) \LTo{\rf} \rd_1(\taxp,\vzaxp)$ (edge \squaredsmall{A}), which implies that $\wt_1(\tax, \vxa)\LTo{\hb} \wt_i(\taxp, \vxap)$ for all $i\geq 2$.
If $\wt_i(\taxp,\vxap)\LTo{\rf}\rd_1(\txp, \vxap)$ for any $i\geq 2$, then we would have $\wt_1(\tax, \vxa)\LTo{\hb}\rd_1(\txp,\vxa)$.
Since, by \cref{eq:pcp_inv_tx_txp}, we have $\wt_1(\tx, \vxa) \LTo{\rf} \rd_1(\txp, \vxa)$ (edge \squaredsmall{B}), together with $\wt_1(\tx, \vxa)\LTo{\mo}\wt_1(\tax, \vxa)$ (edge \circledsmall{2}), this would create a violation of \ref{eq:rc}.
We thus have $\wt_1(\taxp,\vxap)\LTo{\rf}\rd_1(\txp,\vxap)$ (edge \circledsmall{3}).
Since we also have $\wt_1(\txp, \vxap)\LTo{\hb}\rd_1(\txp,\vxap)$ (per the $\po$-order of $\txp$), this implies that $\wt_1(\txp, \vxap)\LTo{\mo}\wt_1(\taxp, \vxap)$ (edge \circledsmall{4}), due to \ref{eq:rc}.
Due to \cref{eq:pcp_inv_tax_taxp}, we have $\wt_1(\taxp, \vzax) \LTo{\rf} \rd_1(\tax,\vzax)$ (edge \squaredsmall{C}), which implies that $\wt_1(\taxp, \vxap)\LTo{\hb}\wt_i(\tax, \vxa)$ for all $i\geq 3$.
If $\wt_i(\tax, \vxa)\LTo{\rf}\rd_2(\tx, \vxa)$ for any $i\geq 3$, then we would have $\wt_1(\taxp, \vxap)\LTo{\hb}\rd_1(\tx, \vxap)$.
Since, by \cref{eq:pcp_inv_tx_txp}, we have $\wt_1(\txp, \vxap)\LTo{\rf}\rd_1(\tx, \vxap)$ (edge \squaredsmall{D}), together with $\wt_1(\txp, \vxap)\LTo{\mo}\wt_1(\taxp,\vxap)$, this would create a violation of \ref{eq:rc}.
We thus have $\wt_2(\tax, \vxa)\LTo{\rf}\rd_2(\tx, \vxa)$ (edge \circledsmall{5}), i.e., the second read of $\tx$ on $\vxa$ cannot skip the second write of $\tax$ on $\vxa$, as desired.
The process repeats inductively to establish the no-skipping invariant
\[
\forall i \quad \wt_i(\tax, \vxa) \LTo{\rf} \rd_i(\tx, \vxa)  \quad \text{and} \quad \wt_i(\taxp, \vxap) \LTo{\rf} \rd_i(\txp, \vxap)
\numberthis\label{eq:pcp_inv_no_skipping_xalpha}
\]
Finally, notice that we have not created any $\hb$ paths from $\tx$ or $\txp$  to any of $\tax$ or $\taxp$, and thus the two pairs of threads can execute asynchronously.

\SubParagraph{No-skipping on $\vxb$.}
The pattern of interaction between $\tbxp$, $\tbx$, $\tx$ and $\txp$ is essentially identical to the above case, and can be obtained from \cref{FIG:RAEX} by replacing every occurrence of $\alpha$ by $\beta$ (e.g.,  $\tbx$ and $\vxb$ instead of $\tax$ and $\vxa$).
This ensures that $\tx$ does not skip any of the writes of $\tbx$ on $\vxb$, i.e., 
\[
\forall i \quad \wt_i(\tbx, \vxb) \LTo{\rf} \rd_i(\tx, \vxb)  \quad \text{and} \quad \wt_i(\tbxp, \vxbp) \LTo{\rf} \rd_i(\txp, \vxbp)
\numberthis\label{eq:pcp_inv_no_skipping_xbeta}
\]
\SubParagraph{No-skipping on $\vya$ and $\vyb$.}
Finally, similar patterns force $\ty$ to not skip any of the writes on $\vya$ (by $\tay$) and $\vyb$ (by $\tby$).
The reasoning is nearly identical to the above, and can be obtained by replacing every occurrence of $x$ by $y$.
This establishes the no-skipping invariants
\[
\forall i \quad \wt_i(\tay, \vya) \LTo{\rf} \rd_i(\ty, \vya)  \quad \text{and} \quad \wt_i(\tayp, \vyap) \LTo{\rf} \rd_i(\typ, \vyap)
\numberthis\label{eq:pcp_inv_no_skipping_yalpha}
\]
\[
\forall i \quad \wt_i(\tby, \vyb) \LTo{\rf} \rd_i(\ty, \vyb)  \quad \text{and} \quad \wt_i(\tbyp, \vybp) \LTo{\rf} \rd_i(\typ, \vybp)
\numberthis\label{eq:pcp_inv_no_skipping_ybeta}
\]
\subsection{The Formal Reduction}\label{SUBSEC:UNDECIDABILITY_FORMAL}

We now present the formal reduction.
For clarity of exposition,  we use common pseudocode language to represent the concurrent program.
Though we do not explicitly define a semantics for this language, its translation to an LTS is intuitive and straightforward (see, e.g., \cite{Lahav2022}).

\begin{figure*}
\centering
\begin{tabular}{|c|c|c|c|}
\hline
Thread $\tax$ & Thread  $\tay$ & Thread $\tbx$ & Thread $\tby$\\
\hline
\begin{lstlisting}[mathescape=true,]
$\alpha_{\bot} \gets \bot$;
$\vaux \gets\ndet{\{1,\dots, n\}}$;
while $\True$ do
  $\vcy \gets (\vcy + 1)\bmod 4$;
  $\wt(\vzaxy,\vcy)$;
  for $j \in \{1,\dots,|\alpha_{\vaux}|\}$ do
    $prev \gets \vc$;
    $\vc \gets (\vc + 1)\bmod 4$
    if $\firstit$ then
      $\wt(\vxa,\tuple{\ov{\tax}, \alpha_{\vaux}[j]})$;
      $\wt(\vzaxp,\vc)$;
    else
      $\wt(\vxa, \tuple{\tax, \alpha_{\vaux}[j]} )$;
      $\wt(\vzaxp,\vc)$;
      $\rd(\vzax,prev)$;
    endif
  done
  $\rd(\vzaxyp,\tuple{\vcy, \vaux})$;
  if $\vaux = \bot$ then
    break;
  endif
  $\vaux \gets \ndet{ \{1,\dots, n, \bot\}}$;
done
$\term$
\end{lstlisting} &
\begin{lstlisting}[mathescape=true,]
$\alpha_{\bot} \gets \bot$;
$\vaux \gets\ndet{\{1,\dots, n\}}$;
while $\True$ do
  $prev \gets \vc$;
  $\vc \gets (\vc + 1)\bmod 4$;
  if $\firstit$ then
    $\wt(\vya,\tuple{\ov{\tay}, \vaux})$;
    $\wt(\vzaxyp,\tuple{\vc,\vaux})$;
    $\wt(\vzayp,\vc)$;
  else
    $\wt(\vya,\tuple{\tay, \vaux})$;
    $\wt(\vzaxyp,\tuple{\vc,\vaux})$;
    $\wt(\vzayp,\vc)$;
    $\rd(\vzaxy,prev)$;
    $\rd(\vzay,prev)$;
  endif
  if $\vaux = \bot$ then
    break;
  endif
  $\vaux \gets \ndet{ \{1,\dots, n, \bot\}}$;
done
$\term$
\end{lstlisting} &
\begin{lstlisting}[mathescape=true,]
$\beta_{\bot} \gets \bot$;
$\vaux \gets\ndet{\{1,\dots, n\}}$;
while $\True$ do
  $\vcy \gets (\vcy + 1)\bmod 4$;
  $\wt(\vzbxy,\vcy)$;
  for $j \in \{1,\dots,|\beta_{\vaux}|\}$ do
    $prev \gets \vc$;
    $\vc \gets (\vc + 1)\bmod 4$;
    if $\firstit$ then
      $\wt(\vxb,\tuple{\ov{\tbx}, \beta_{\vaux}[j]})$;
      $\wt(\vzbxp,\vc)$;
    else
      $\wt(\vxb, \tuple{\tbx, \beta_{\vaux}[j]} )$;
      $\wt(\vzbxp,\vc)$;
      $\rd(\vzbx,prev)$;
    endif
  done
  $\rd(\vzbxyp,\tuple{\vcy, \vaux})$;
  if $\vaux = \bot$ then
    break;
  endif
  $\vaux \gets \ndet{ \{1,\dots, n, \bot\}}$;
done
$\term$
\end{lstlisting} &
\begin{lstlisting}[mathescape=true,]
$\beta_{\bot} \gets \bot$;
$\vaux \gets\ndet{\{1,\dots, n\}}$;
while $\True$ do
  $prev \gets \vc$;
  $\vc \gets (\vc + 1)\bmod 4$;
  if $\firstit$ then
    $\wt(\vyb,\tuple{\ov{\tby}, \vaux})$;
    $\wt(\vzbxyp,\tuple{\vc,\vaux})$;
    $\wt(\vzbyp,\vc)$;
  else
    $\wt(\vyb,\tuple{\tby, \vaux})$;
    $\wt(\vzbxyp,\tuple{\vc,\vaux})$;
    $\wt(\vzbyp,\vc)$;
    $\rd(\vzbxy,prev)$;
    $\rd(\vzby,prev)$;
  endif
  if $\vaux = \bot$ then
    break;
  endif
  $\vaux \gets \ndet{ \{1,\dots, n, \bot\}}$;
done
$\term$
\end{lstlisting} \\
\hline
Thread $\taxp$ & Thread  $\tayp$ & Thread $\tbxp$ & Thread $\tbyp$\\
\hline
\begin{lstlisting}[mathescape=true]
while $\ndet{\{ \False, \True \}}$ do
  $\vc \gets (\vc + 1)\bmod 4$;
  $\wt(\vxap, \tuple{\taxp,0})$;
  $\wt(\vzax,\vc)$;
  $\rd(\vzaxp,\vc)$;
done
$\term$
\end{lstlisting} &
\begin{lstlisting}[mathescape=true,]
while $\ndet{\{ \False, \True \}}$ do
  $\vc \gets (\vc + 1)\bmod 4$;
  $\wt(\vyap, \tuple{\tayp,0})$;
  $\wt(\vzay,\vc)$;
  $\rd(\vzayp,\vc)$;
done
$\term$
\end{lstlisting} &
\begin{lstlisting}[mathescape=true,]
while $\ndet{\{ \False, \True \}}$ do
  $\vc \gets (\vc + 1)\bmod 4$;
  $\wt(\vxbp, \tuple{\tbxp,0})$;
  $\wt(\vzbx,\vc)$;
  $\rd(\vzbxp,\vc)$;
done
$\term$
\end{lstlisting} & 
\begin{lstlisting}[mathescape=true,]
while $\ndet{\{ \False, \True \}}$ do
  $\vc \gets (\vc + 1)\bmod 4$;
  $\wt(\vybp, \tuple{\tbyp,0})$;
  $\wt(\vzby,\vc)$;
  $\rd(\vzbyp,\vc)$;
done
$\term$
\end{lstlisting} \\
\hline
Thread $\tx$ & Thread $\ty$ & Thread $\txp$ & Thread $\typ$\\
\hline
\begin{lstlisting}[mathescape=true,]
$\vaux \gets \ndet{\PCPAlphabet}$;
while $\True$ do
  $prev \gets \vc$;
  $\vc \gets (\vc + 1)\bmod 4$;
  $\wt(\vxa,\tuple{\tx,\vc})$;
  $\wt(\vxb,\tuple{\tx,\vc})$;
  if $\firstit$ then
    $\rd(\vxa,\tuple{\ov{\tax}, \vaux})$;
    $\rd(\vxb,\tuple{\ov{\tbx}, \vaux})$;
  else
    $\rd(\vxa,\tuple{\tax, \vaux})$;
    $\rd(\vxb,\tuple{\tbx, \vaux})$;
    $\rd(\vxap,\tuple{\txp, prev})$;
    $\rd(\vxbp,\tuple{\txp, prev})$;
  endif
  if $\vaux = \bot$ then
	break;
  endif
  $\vaux \gets \ndet{ (\PCPAlphabet \cup \{\bot\})}$;
done
$\term$
\end{lstlisting} & 
\begin{lstlisting}[mathescape=true,]
$\vaux \gets  \ndet{\{1,\dots, n\}}$;
while $\True$ do
  $prev \gets \vc$;
  $\vc \gets (\vc + 1)\bmod 4$;
  $\wt(\vya,\tuple{\ty,\vc})$;
  $\wt(\vyb,\tuple{\ty,\vc})$;
  if $\firstit$ then
    $\rd(\vya,\tuple{ \ov{\tay}, \vaux})$;
    $\rd(\vyb,\tuple{\ov{\tby}, \vaux})$;
  else 
    $\rd(\vya,\tuple{\tay, \vaux})$;
    $\rd(\vyb,\tuple{\tby, \vaux})$;
    $\rd(\vyap,\tuple{\typ, prev})$;
    $\rd(\vybp,\tuple{\typ, prev})$;
  endif
  if $\vaux = \bot$ then
    break;
  endif
  $\vaux \gets \ndet{ \{1,\dots, n, \bot\}}$;
done
$\term$
\end{lstlisting} &
\begin{lstlisting}[mathescape=true,]
while $\ndet{\{\False, \True\}}$ do
  $\vc \gets (\vc + 1)\bmod 4$;
  $\wt(\vxap,\tuple{\txp,\vc})$;
  $\wt(\vxbp,\tuple{\txp,\vc})$;
  $\rd(\vxap,\tuple{\taxp,0})$;
  $\rd(\vxbp,\tuple{\tbxp,0})$;
  $\rd(\vxa,\tuple{\tx,\vc})$;
  $\rd(\vxb,\tuple{\tx,\vc})$;
done
$\term$
\end{lstlisting} &
\begin{lstlisting}[mathescape=true,]
while $\ndet{\{\False, \True\}}$ do
  $\vc \gets (\vc + 1)\bmod 4$;
  $\wt(\vyap,\tuple{\typ,\vc})$;
  $\wt(\vybp,\tuple{\typ,\vc})$;
  $\rd(\vyap,\tuple{\tayp,0})$;
  $\rd(\vybp,\tuple{\tbyp,0})$;
  $\rd(\vya,\tuple{\ty,\vc})$;
  $\rd(\vyb,\tuple{\ty,\vc})$;
done
$\term$
\end{lstlisting} \\
\hline
\end{tabular}
\caption{
The program $\ConcProg$ corresponding to a PCP instance $\PCPInstance$.
}
\label{fig:pcp_code}
\end{figure*}

\Paragraph{Reduction.}
Given a PCP instance $\PCPInstance$, we construct the program $\ConcProg$ shown in \cref{fig:pcp_code}.
$\ndet{X}$ refers to a non-deterministic choice of an element of a set $X$,
while a call to $\firstit$ in a line returns $\True$ (true) the first time this particular line is executed, and $\False$ (false) in every subsequent call.
Each counter $\vc$ is initialized to $0$, and so is $\vcy$ when present.
Every location $u$ is either written by only one thread, or all writes to $u$ in thread $t$ are of the form $\wt(u, \tuple{t, v})$ and all reads from $u$ in thread $t'$ are of the form $\rd(u, \tuple{t,v})$, where the $t$ indicates the thread this event must to read from.
The program follows the overview presented in \cref{SUBSEC:UNDECIDABILITY_OVERVIEW} and \cref{fig:pcp_top,FIG:RAEX_AXY,FIG:RAEX}.

First, consider the threads $\tax$ and $\tay$.
The main loop of each thread is over the value of its respective thread-local location $\vaux$, which contains a non-deterministic guess of the index $i\in \{1,\dots, n\}$ of the string $\alpha_i$ to be appended to the PCP sequence, or $\bot$, which marks the end of guessing.
In order to avoid producing an empty sequence, the first assignment to $\vaux$ cannot be $\bot$.
Thread $\tay$ communicates the guessed value to $\tax$ through $\vzaxyp$, and thread $\tax$ checks, at the end of its while loop, that it matches its own value.
The location $\vzaxyp$, alongside $\vzaxy$, is also used to ensure synchronization between $\tax$ and $\tay$, by creating the pattern of interaction shown in \cref{FIG:RAEX_AXY}.

Thread $\tay$ communicates its guess in $\vaux$ to thread $\ty$ by writing its value on $\vya$.
In particular, $\tay$ performs a write $\wt(\vya,\tuple{\ov{\tay}, \vaux})$ in the first iteration, and a write $\wt(\vya,\tuple{\tay, \vaux})$ in all subsequent iterations.
The purpose of using $\ov{\tay}$ instead of $\tay$ is to distinguish the first write, and not allow the first read on $\vya$ on $\ty$ to skip reading from it.
These writes on $\vya$ are also interleaved with other writes and reads on $\vzayp$ and $\vzay$, which create a pattern of interaction with $\tayp$ following \cref{FIG:RAEX_AXY}.

Thread $\tax$ also contains a for loop, inside its while loop, in which it writes the characters of $\alpha_{\vaux}$, one-by-one, to $\vxa$, to be read by $\tx$.
Similarly to the write to $\vya$ by $\tay$, the first write to $\vxa$ by $\tax$ is specially signaled by using $\tuple{\ov{\tax}, \alpha_{\vaux}[1]}$ instead of $\tuple{\tax,\alpha_{\vaux}[1]}$ as its value.
These writes are interleaved with other reads and writes on $\vzax$ and $\vzaxp$, which create a pattern of interaction with $\taxp$ following \cref{FIG:RAEX_AXY}.
This pattern and its interactions with writes to $\vxa$ is also shown in \cref{FIG:RAEX}.

Threads $\tbx$ and $\tby$ follow a symmetric process as $\tax$ and $\tay$, which can be explicitly defined by replacing all occurrences of $\alpha$ to $\beta$ above.

Next, consider the thread $\tx$.
Its main loop is over the value of $\vaux$, which contains a non-deterministic guess of the $i$-th letter of the PCP string.
It later performs  the reads $\rd(\vxa,\tuple{\ov{\tax}, \vaux})$ and $\rd(\vxb,\tuple{\ov{\tbx}, \vaux})$ in its first iteration, and the reads $\rd(\vxa,\tuple{\tax, \vaux})$ and $\rd(\vxb,\tuple{\tbx, \vaux})$ in subsequent iterations, verifying that the $i$-th letter produced by $\tax$ and the $i$-th letter produced by $\tbx$ match.
This thread also writes to locations $\vxa$ and $\vxb$, and reads from location $\vxap$ and $\vxbp$ in order to create a pattern of interaction with $\txp$ following \cref{FIG:RAEX}.
Thread $\ty$ is symmetric to $\tx$, but guesses indexes of the PCP solution instead of letters and uses the locations $\vya$, $\vyb$, $\vyap$ and $\vybp$, instead of $\vxa$, $\vxb$, $\vxap$ and $\vxbp$.

Finally, the auxiliary threads $\taxp$, $\tayp$, $\tbxp$, $\tbyp$, $\txp$ and $\typ$ perform a non-deterministic sequence of writes and reads of values not related to the PCP instance.
This sequence establishes a synchronous pattern of interaction with the main threads and force the verifier threads $\tx$ and $\ty$ to not skip any of the writes of the guesser threads $\tax$, $\tay$, $\tbx$ and $\tby$, as shown in \cref{FIG:RAEX}.

\subsection{Soundness}\label{SUBSEC:UNDECIDABILITY_SOUNDNESS}

We first establish the soundness of our reduction, as stated in the following lemma.

\begin{restatable}{lemma}{lemundecidabilitysoundness}\label{lem:undecidability_soundness}
If $\ConcProg$ has an execution where all threads reach $\term$ then $\PCPInstance$ has a solution.
\end{restatable}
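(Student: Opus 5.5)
We fix an execution $G\in\ExecutionGraphsOf{\ConcProg}{\ramm}$ in which every thread reaches $\term$, and extract a PCP solution from it. The plan is to prove, formally and by induction on the index $i$, the no-skipping invariants sketched in \cref{SUBSEC:UNDECIDABILITY_OVERVIEW}, and then read off the solution from the values flowing along the resulting $\rf$-edges.

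\textbf{Step 1: synchronization invariants.} We first prove \cref{eq:pcp_inv_tax_taxp,eq:pcp_inv_tay_tayp,eq:pcp_inv_tax_tay}, their $\beta$-analogues \cref{eq:pcp_inv_tbx_tbxp,eq:pcp_inv_tby_tbyp,eq:pcp_inv_tbx_tby}, and the verifier-pair invariants \cref{eq:pcp_inv_tx_txp,eq:pcp_inv_ty_typ} and their $\vxb,\vyb$ counterparts. All of these follow from one generic lemma about two threads $t,t'$ alternating ``write $u$ with value $i\bmod 4$, read $u'$ expecting $(i-1)\bmod 4$'' patterns. The lemma is proved by strong induction on $i$. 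The value mod $4$ rules out reading from writes of index $i'\not\equiv i \pmod 4$. The first field of the value (e.g.\ $\tuple{\tx,\cdot}$) pins the writing thread on shared locations.

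For $i'\le i-4$, the write is $\mo$-before a write that is $\hb$-before the read, since by the induction hypothesis the reads already synchronized. This contradicts \ref{eq:rc}. For $i'\ge i+4$, the write is $\po$-after a read that, by the induction hypothesis, reads from a write $\po$-after our read. This gives an $\hb$-cycle, contradicting \ref{eq:hbirr}. This is exactly the argument given for edge \circledsmall{1} in the overview, made uniform.

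\textbf{Step 2: no-skipping between guessers and verifiers.} Next we prove \cref{eq:pcp_inv_no_skipping_xalpha} (and symmetrically the invariants for $\vxb,\vya,\vyb$) by simultaneous induction on $i$, with the strengthened hypothesis that also records the $\mo$-edges $\wt_i(\tx,\vxa)\LTo{\mo}\wt_i(\tax,\vxa)$ and $\wt_i(\txp,\vxap)\LTo{\mo}\wt_i(\taxp,\vxap)$.

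The base case uses the unique value $\ov{\tax}$. In the inductive step, a read of index $i$ skipping ahead to $\wt_{i'}(\tax,\vxa)$ with $i'>i$ produces, via the Step 1 invariants for $\tax,\taxp$, an $\hb$-path from $\wt_{i}(\taxp,\vxap)$ into $\rd_{i}(\tx,\vxap)$. Combined with the recorded $\mo$-edge and the Step 1 edge $\wt_i(\txp,\vxap)\LTo{\rf}\rd_i(\tx,\vxap)$, this violates \ref{eq:rc}. Reading backwards, from $i'<i$, is excluded by \ref{eq:rc} through $\po$ within $\tax$ together with the previous reads. Once the $\rf$-edge is fixed, the new $\mo$-edges follow from \ref{eq:rc}, since $\tx$'s own write is $\po$-before the read.

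I expect this interleaved induction to be the main obstacle. The index offsets between $\tax$'s writes to $\vxa$ and its reads on $\vzax$ are shifted by one because of the \firstit{} branch, so the precise $\hb$-paths must be written carefully. I would first try to state the invariant as a single statement over $(\tax,\taxp,\tx,\txp)$ indexed by $i$, and verify it directly against the code in \cref{fig:pcp_code}.

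\textbf{Step 3: extracting the solution.} With all invariants in hand, the argument closes as follows.
\begin{itemize}
\item By \cref{eq:pcp_inv_tax_tay}, $\tax$ reads exactly the sequence of $\vaux$ values of $\tay$ and checks equality, so the two threads guess the same sequence $j_1,\dots,j_k$, with $k\ge1$.
\item By the no-skipping invariants on $\vya,\vyb$, the $i$-th iteration of $\ty$ reads the $i$-th index of both $\tay$ and $\tby$, so $j_1\cdots j_k=j'_1\cdots j'_{k'}$.
\item Both sequences must be read completely, including the terminating $\bot$: the writer's last write is $\bot$, and $\ty$ terminates only on reading $\bot$. Since the reads are in lock-step, the lengths agree.
\item Likewise, $\tx$ certifies $\alpha_{j_1}\cdots\alpha_{j_k}=\beta_{j_1}\cdots\beta_{j_k}$ letter by letter.
\end{itemize}
Hence $j_1,\dots,j_k$ is a solution of $\PCPInstance$.
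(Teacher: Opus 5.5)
Your architecture is the paper's. Step 1 gives the synchronization invariants, Step 2 carries guesser--verifier no-skipping together with the $\mo$-edges $\wt_i(\tx,\vxa)\LTo{\mo}\wt_i(\tax,\vxa)$ and $\wt_i(\txp,\vxap)\LTo{\mo}\wt_i(\taxp,\vxap)$, and Step 3 extracts the solution. Steps 1 and 3 are fine.

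The problem is the inductive step of Step 2, the one you flagged: as written it is mis-indexed and fails. From $\wt_i(\taxp,\vxap)$, the only entry into $\tax$ is $\wt_i(\taxp,\vzax)\LTo{\rf}\rd_i(\tax,\vzax)$. That read is the last event of the $(i{+}1)$-th inner iteration of $\tax$, so it comes after $\wt_{i+1}(\tax,\vxa)$. Your $\hb$-path therefore exists only for skips to $i'\ge i+2$, and the skip to $\wt_{i+1}(\tax,\vxa)$ is never excluded. Moreover, the $\mo$-edge you combine it with, $\wt_i(\txp,\vxap)\LTo{\mo}\wt_i(\taxp,\vxap)$, is a level-$i$ fact. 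It follows only from $\wt_i(\taxp,\vxap)\LTo{\rf}\rd_i(\txp,\vxap)$. That edge in turn needs $\wt_i(\tx,\vxa)\LTo{\mo}\wt_i(\tax,\vxa)$, which comes from the very edge under proof. So the step is circular.

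The fix, which is what the paper does, is to drop one level and order the two halves within each level.
\begin{itemize}
\item For $\rd_i(\tx,\vxa)$, use $\wt_{i-1}(\taxp,\vzax)\LTo{\rf}\rd_{i-1}(\tax,\vzax)$. This read closes the $i$-th inner iteration, so $\wt_{i-1}(\taxp,\vxap)\LTo{\hb}\wt_{i'}(\tax,\vxa)$ for all $i'\ge i+1$.
\item Also, $\rd_{i-1}(\tx,\vxap)$ lies in the $i$-th iteration of $\tx$, after $\rd_i(\tx,\vxa)$. Any forward skip therefore gives $\wt_{i-1}(\taxp,\vxap)\LTo{\hb}\rd_{i-1}(\tx,\vxap)$.
\item This violates \ref{eq:rc} against $\wt_{i-1}(\txp,\vxap)\LTo{\rf}\rd_{i-1}(\tx,\vxap)$ and the level-$(i-1)$ edge $\wt_{i-1}(\txp,\vxap)\LTo{\mo}\wt_{i-1}(\taxp,\vxap)$.
\item Only after deriving $\wt_i(\tx,\vxa)\LTo{\mo}\wt_i(\tax,\vxa)$ do you treat $\rd_i(\txp,\vxap)$. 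A forward skip there yields, via $\wt_i(\tax,\vzaxp)\LTo{\rf}\rd_i(\taxp,\vzaxp)$, the path $\wt_i(\tax,\vxa)\LTo{\hb}\rd_i(\txp,\vxa)$. This contradicts \ref{eq:rc} with $\wt_i(\tx,\vxa)\LTo{\rf}\rd_i(\txp,\vxa)$.
\end{itemize}

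Your backward case has a smaller hole. For $i'=i-1$ there is no $\tax$-write in between, and re-reading $\wt_{i-1}(\tax,\vxa)$ is value-consistent whenever consecutive letters coincide. You must use $\tx$'s own interleaved write. The edge $\rd_{i-1}(\tx,\vxa)\LTo{\po}\wt_i(\tx,\vxa)$ forces $\wt_{i-1}(\tax,\vxa)\LTo{\mo}\wt_i(\tx,\vxa)$ by \ref{eq:wc}. Then $\wt_i(\tx,\vxa)\LTo{\po}\rd_i(\tx,\vxa)$ yields the \ref{eq:rc} violation. The same argument with $\wt_i(\txp,\vxap)$ handles $\txp$.
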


The proof is based on the following no-skipping lemma, which establishes that for every thread $t$ and location $u$, the $i$-th read of $t$ on $u$ can only read from the $i$-th write of the (unique) thread writing the corresponding value to $u$.

\begin{restatable}{lemma}{undecidabilitynoskippingpairs}\label{lem:undecidability_no_skipping_pairs}
Consider any execution $G=(\Events, \po, \rf, \mo)$ of $\ConcProg$ where all threads reach $\term$.
For all $i$ such that the corresponding read event $\rd_i$ exist in $G$, we have the following $\rf$-edges:
{\footnotesize
\allowdisplaybreaks
\begin{align*}
&\wt_i(\taxp, \vzax) \LTo{\rf} \rd_i(\tax, \vzax), &
&\hspace{-0.9em}\wt_i(\tax, \vzaxp) \LTo{\rf} \rd_i(\taxp, \vzaxp), &
&\hspace{-1.0em}\wt_i(\tayp, \vzay) \LTo{\rf} \rd_i(\tay, \vzay), &
&\hspace{-1.1em}\wt_i(\tay, \vzayp) \LTo{\rf} \rd_i(\tayp, \vzayp), \\
&\wt_i(\tbxp, \vzbx) \LTo{\rf} \rd_i(\tbx, \vzbx), &
&\hspace{-0.9em}\wt_i(\tbx, \vzbxp) \LTo{\rf} \rd_i(\tbxp, \vzbxp), &
&\hspace{-1.0em}\wt_i(\tbyp, \vzby) \LTo{\rf} \rd_i(\tby, \vzby), &
&\hspace{-1.1em}\wt_i(\tby, \vzbyp) \LTo{\rf} \rd_i(\tbyp, \vzbyp), \\
&\wt_i(\tax, \vzaxy) \LTo{\rf} \rd_i(\tay, \vzaxy), &
&\hspace{-0.9em}\wt_i(\tay, \vzaxyp) \LTo{\rf} \rd_i(\tax, \vzaxyp), &
&\hspace{-1.0em}\wt_i(\tbx, \vzbxy) \LTo{\rf} \rd_i(\tby, \vzbxy), &
&\hspace{-1.1em}\wt_i(\tby, \vzbxyp) \LTo{\rf} \rd_i(\tbx, \vzbxyp). 
\numberthis\label{item:no_skip_0}
\end{align*}
\begin{align*}
&\wt_i(\tx, \vxa) \LTo{\rf} \rd_i(\txp, \vxa), &
&\hspace{-0.25em}\wt_i(\txp, \vxap) \LTo{\rf} \rd_i(\tx, \vxap), &
&\hspace{-0.5em}\wt_i(\tx, \vxb) \LTo{\rf} \rd_i(\txp, \vxb), &
&\hspace{-0.75em}\wt_i(\txp, \vxbp) \LTo{\rf} \rd_i(\tx, \vxbp), \\
&\wt_i(\ty, \vya) \LTo{\rf} \rd_i(\typ, \vya), &
&\hspace{-0.25em}\wt_i(\typ, \vyap) \LTo{\rf} \rd_i(\ty, \vyap), &
&\hspace{-0.5em}\wt_i(\ty, \vyb) \LTo{\rf} \rd_i(\typ, \vyb), &
&\hspace{-0.75em}\wt_i(\typ, \vybp) \LTo{\rf} \rd_i(\ty, \vybp).
\numberthis\label{item:no_skip_1}
\end{align*}
\begin{align*}
&\wt_i(\tax, \vxa) \LTo{\rf} \rd_i(\tx, \vxa), &
&\hspace{-0.25em}\wt_i(\taxp, \vxap) \LTo{\rf} \rd_i(\txp, \vxap), &
&\hspace{-0.5em}\wt_i(\tbx, \vxb) \LTo{\rf} \rd_i(\tx, \vxb), &
&\hspace{-0.75em}\wt_i(\tbxp, \vxbp) \LTo{\rf} \rd_i(\txp, \vxbp), \\
&\wt_i(\tay, \vya) \LTo{\rf} \rd_i(\ty, \vya), &
&\hspace{-0.25em}\wt_i(\tayp, \vyap) \LTo{\rf} \rd_i(\typ, \vyap), &
&\hspace{-0.5em}\wt_i(\tby, \vyb) \LTo{\rf} \rd_i(\ty, \vyb), &
&\hspace{-0.75em}\wt_i(\tbyp, \vybp) \LTo{\rf} \rd_i(\typ, \vybp).
\numberthis\label{item:no_skip_2}
\end{align*}
}
\end{restatable}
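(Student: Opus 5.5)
We prove the three groups of $\rf$-edges in order, each by induction on the index $i$. The engine of every step is a uniform ``two-thread ping-pong'' lemma, which we isolate first. It concerns two threads $t,t'$ and two locations $u,u'$ such that: thread $t'$ is the only writer of $u$ whose values can be read by $t$; $t$ is the only writer of $u'$ whose values can be read by $t'$; both threads write and read values carrying the counter $(i \bmod 4)$; and the code alternates writes and reads as in \cref{FIG:RAEX_AXY}.

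For the group \eqref{item:no_skip_0}, the value constraints apply directly. The $i$-th read of $t$ on $u$ expects the value $(i\bmod 4)$, which with the thread tags is carried only by writes $\wt_{i+4k}(t',u)$. Reading from the initial write is impossible, since its value does not match (or, for the tagged verifier locations, its tag differs). Reading from some $\wt_{i'}(t',u)$ with $i' \ge i+4$ would make that write $\po$-preceded by $\rd_{i'-1}(t',u')$, and in some variants by $\rd_{i'-3}$. We then apply the induction hypothesis to the other direction at an index strictly between $i$ and $i'$. This shows that $\rd_{i'-1}(t',u')$ reads from a write $\wt_{i'-1}(t,u')$ that is $\po$-after $\rd_i(t,u)$, which closes an $\hb$-cycle and contradicts \ref{eq:hbirr}.

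Two points require care in this argument. The induction has to be done simultaneously on both directions, $(t\to t')$ and $(t'\to t)$. It also needs a mild strengthening: even before the index-$i'$ read is settled, any read at index $i'-1$ can only read from a write of index at least $i$. This strengthening is itself obtained from the mod-4 values and the $\po$ placement of the writes. The bookkeeping for exactly where each write sits inside the loops, such as the \texttt{for} loop of $\tax$ and the off-by-one $prev$ reads, must be checked case by case for the pairs $(\tax,\taxp)$, $(\tay,\tayp)$ and $(\tax,\tay)$ against the code in \cref{fig:pcp_code}. The $\beta$-threads are symmetric. For the $\ell$-locations the counter lives in the first component of the value, and the argument is unchanged.

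Group \eqref{item:no_skip_1} is the same ping-pong lemma applied to $(\tx,\txp)$ over $(\vxa,\vxap)$ and $(\vxb,\vxbp)$, and to $(\ty,\typ)$ likewise. The only new ingredient is that the first value field restricts the possible writers to the intended thread, even though other threads also write these locations.

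Group \eqref{item:no_skip_2} is the main obstacle. It follows the argument sketched in \cref{SUBSEC:UNDECIDABILITY_OVERVIEW} for \cref{eq:pcp_inv_no_skipping_xalpha}, made into a joint induction on $i$ over the two claims $\wt_i(\tax,\vxa)\LTo{\rf}\rd_i(\tx,\vxa)$ and $\wt_i(\taxp,\vxap)\LTo{\rf}\rd_i(\txp,\vxap)$.

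The base case uses the unique value $\ov{\tax}$. For the step, the hypothesis at index $i$ together with $\wt_i(\tx,\vxa)\LTo{\po}\rd_i(\tx,\vxa)$ gives $\wt_i(\tx,\vxa)\LTo{\mo}\wt_i(\tax,\vxa)$ by \ref{eq:rc}. Suppose $\rd_i(\txp,\vxap)$ read from some $\wt_{i'}(\taxp,\vxap)$ with $i'>i$. By \eqref{item:no_skip_0}, $\wt_i(\tax,\vxa)$ would be $\hb$-before that write, and hence $\hb$-before $\rd_i(\txp,\vxa)$. That read reads from $\wt_i(\tx,\vxa)$ by \eqref{item:no_skip_1}, which contradicts \ref{eq:rc}. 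Reads from writes with $i'<i$ are excluded symmetrically, because $\rd_{i-1}(\txp,\vxap)$ already read from $\wt_{i-1}$ and $\mo$ respects $\po$ by \ref{eq:wc}. Alternatively, we can argue that an earlier write is $\mo$-before a $\po$-earlier write that is $\hb$-visible. The symmetric argument then yields the index-$(i+1)$ edge on $\vxa$.

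The delicate part is checking that the required $\hb$-paths really exist with the correct index offsets, given the placement of $\rd(\vxap,\cdot)$ in iteration $i+1$ of $\tx$ with $prev=i$. It is equally important that no read may go to an $\mo$-earlier write; we handle this via \ref{eq:rc}, using the $\po$-earlier read's source. The cases $\vxb$, $\vya$ and $\vyb$ are verbatim renamings.
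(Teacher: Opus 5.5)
Your overall plan is the paper's: a simultaneous induction on $i$, mod-$4$ counters and thread tags to restrict candidate writers, an $\hb$-cycle against forward skipping, and, for \eqref{item:no_skip_2}, the edge $\wt_i(\tx,\vxa)\LTo{\mo}\wt_i(\tax,\vxa)$ obtained from \ref{eq:rc}. Your treatment of \eqref{item:no_skip_2} matches the appendix proof. The gap is in the forward-skip step of your ping-pong argument for \eqref{item:no_skip_0} and \eqref{item:no_skip_1}.

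\emph{Circularity.} Invoking ``the induction hypothesis at an index strictly between $i$ and $i'$'' is circular, since only indices below $i$ are available. Your fallback ``strengthening'' cannot follow from the mod-$4$ values and $\po$ placement either: these give no lower bound on the index of the write a read takes its value from. The only lower bound at hand comes from \ref{eq:rc} with the hypothesis at $i-1$: a read of $t'$ on $u'$ that is $\po$-after $\rd_{i-1}(t',u')$ reads from a write of index at least $i-1$.

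\emph{Why your chosen reads fail.} That bound is too weak for them. Since $i'\equiv i \pmod 4$, the read $\rd_{i'-1}(t',u')$ expects residue $i-1$, so $\wt_{i-1}(t,u')$ remains a legal source. For example, $\rd_{i'-1}(\taxp,\vzaxp)$ may read from $\wt_{i-1}(\tax,\vzaxp)$. That write lies in inner iteration $i-1$ of $\tax$, $\po$-\emph{before} $\rd_i(\tax,\vzax)$ (iteration $i+1$), so no cycle closes. Your alternative $\rd_{i'-3}(\taxp,\vzaxp)$ fails for the same reason: its residue is $i+1$, and $\wt_{i+1}(\tax,\vzaxp)$ also precedes $\rd_i(\tax,\vzax)$ in $\po$.

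\emph{The missing idea} is to choose the intermediate read by its residue. The paper uses $\rd_{i+2}(\taxp,\vzaxp)$, which is $\po$-before every $\wt_{i'}(\taxp,\vzax)$ with $i'\geq i+4$. Its same-valued sources are the writes $\wt_{i+2+4k}(\tax,\vzaxp)$.
\begin{itemize}
\item Those of index at most $i-2$ are $\mo$-before $\wt_{i-1}(\tax,\vzaxp)$ (single writer, \ref{eq:wc}). That write is $\hb$-before $\rd_{i+2}(\taxp,\vzaxp)$ via the hypothesis $\wt_{i-1}(\tax,\vzaxp)\LTo{\rf}\rd_{i-1}(\taxp,\vzaxp)$, so \ref{eq:rc} rules them out.
\item All remaining sources have index at least $i+2$, hence are $\po$-after $\rd_i(\tax,\vzax)$. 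This yields the cycle $\wt_{i'}(\taxp,\vzax)\LTo{\rf}\rd_i(\tax,\vzax)\LTo{\po}\wt_{i+2+4k}(\tax,\vzaxp)\LTo{\rf}\rd_{i+2}(\taxp,\vzaxp)\LTo{\po}\wt_{i'}(\taxp,\vzax)$.
\end{itemize}
The analogous choice works for the other pairs.

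The same \ref{eq:rc} argument, not a value mismatch, is also what excludes the backward sources $\wt_{i-4k}(t',u)$, $k\geq 1$, of $\rd_i(t,u)$ itself. Your sketch of \eqref{item:no_skip_0} only discusses the initial write and $i'\geq i+4$.

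A smaller point in \eqref{item:no_skip_2}: your first variant for backward reads (``$\rd_{i-1}(\txp,\vxap)$ already read from $\wt_{i-1}$ and $\mo$ respects $\po$'') does not exclude $\rd_i(\txp,\vxap)$ re-reading $\wt_{i-1}(\taxp,\vxap)$ itself, since all these writes carry the value $\tuple{\taxp,0}$. Your second variant works: it uses the verifier's own write $\wt_i(\txp,\vxap)$ as the intervening $\mo$-later, $\hb$-visible write. That is what the paper does.
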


\cref{item:no_skip_0} establishes the synchronization of guesser threads; its proof follows the steps outlined in the paragraph \emph{Guesser threads} of \cref{SUBSEC:UNDECIDABILITY_OVERVIEW}.
\cref{item:no_skip_1} establishes the synchronization of verifier threads; its proof follows the same pattern as that of \cref{item:no_skip_0}, as discussed in the paragraph \emph{Verifier threads} of \cref{SUBSEC:UNDECIDABILITY_OVERVIEW}.
Finally, \cref{item:no_skip_2} establishes no-skipping between guesser and verifier threads; its proof is more involved, and follows the general steps outlined in the paragraph \emph{No-skipping between guesser and verifier threads} of \cref{SUBSEC:UNDECIDABILITY_OVERVIEW}.

\Paragraph{Proof of \cref{lem:undecidability_soundness}.}
We finally argue that if all threads in $\ConcProg$ reach $\term$, then $\PCPInstance$ has a solution.

Let $j_1, \dots, j_k, \bot$ be the sequence of values guessed on $\vaux$ in $\tay$.
Since, by \cref{item:no_skip_0} of \cref{lem:undecidability_no_skipping_pairs}, we have $\wt_i(\tay, \vzaxyp) \LTo{\rf} \rd_i(\tax, \vzaxyp)$, this must also be the sequence of values guessed on $\vaux$ in $\tax$.
Notice that $j_i\neq \bot$ for all $i\in[k]$, which is enforced by the looping condition on \texttt{while} on $\tax$ and $\tay$. 
Moreover, the initial assignment of $\vaux$ enforces $k\geq 1$. 
The second field of the values written to $\vya$ by $\tay$ thus forms the sequence $j_1, \dots, j_k, \bot$, and the second field of the values written to $\vxa$ by $\tax$ forms the word $\alpha_{j_1}\cdots \alpha_{j_k} \cdot \bot$. 
Similarly, let $j'_1, \dots, j'_{k'}, \bot$ be the sequence of values guessed on $\vaux$ in $\tby$ and $\tbx$.
The interaction of $\tby$ and $\tbx$ on $\vzbxyp$ ensures this is indeed the same sequence.
The second fields of the values written to $\vyb$ by $\tby$ form the sequence $j'_1, \dots, j'_{k'}, \bot$, and the second fields of the values written to $\vxb$ by $\tbx$ form the word $\beta_{j'_1}\cdots \beta_{j'_{k'}} \cdot \bot$.

Let $\id_1, \dots, \id_{m}, \bot$ be the sequence of values guessed on $\vaux$ in $\ty$. 
By construction, this is the sequence of values that $\ty$ reads from $\vya$ and $\vyb$. 
By \cref{item:no_skip_2} of \cref{lem:undecidability_no_skipping_pairs}, $\wt_i(\tay, \vya) \LTo{\rf} \rd_i(\ty, \vya)$ and $\wt_i(\tby, \vyb) \LTo{\rf} \rd_i(\ty, \vyb)$ for all $i\in[m]$, and since the values read correspond to $\id_1, \dots, \id_{m}, \bot$, we have that $\id_1, \dots, \id_{m} = j_1, \dots, j_k = j'_1, \dots, j'_{k'}$. 
The final value $\bot$ forces $k = k' = m$, i.e., that the sequence of values written by $\tay$ and $\tby$ is read fully. 
A similar argument holds for $\tx$, yielding that the sequence $\ch_1, \dots, \ch_{\ell}, \bot$ assumed by $\vaux$ in $\tx$ is such that $\ch_1 \cdots \ch_{\ell} = \alpha_{j_1}\cdots \alpha_{j_k}=\beta_{j'_1}\cdots \beta_{j'_{k'}}$. 
We thus have that $j_1, \dots, j_k$ is a solution to the PCP instance $\PCPInstance$.
\subsection{Completeness}\label{SUBSEC:UNDECIDABILITY_COMPLETENESS}

Finally, we establish the completeness of the reduction:~if $\PCPInstance$ has a solution then all threads in $\ConcProg$ can reach $\term$ simultaneously.

\begin{restatable}{lemma}{lemundecidabilitycompleteness}\label{lem:undecidability_completeness}
If $\PCPInstance$ has a solution then $\ConcProg$ has an execution where all threads reach $\term$.
\end{restatable}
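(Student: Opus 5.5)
Given a solution $j_1,\dots,j_k$ of $\PCPInstance$, let $w=\alpha_{j_1}\cdots\alpha_{j_k}=\beta_{j_1}\cdots\beta_{j_k}$ and $L=|w|$. We fix all nondeterministic choices and then exhibit an explicit execution graph. The guessers $\tax,\tay,\tbx,\tby$ choose $\vaux$ as $j_1,\dots,j_k,\bot$. $\tx$ chooses the letters of $w$ followed by $\bot$, and $\ty$ chooses $j_1,\dots,j_k,\bot$. Each primed thread runs its loop exactly as many times as its partner needs. Thus $\taxp$ (resp.\ $\tbxp$) iterates once per write of $\tax$ (resp.\ $\tbx$) on $\vxa$ (resp.\ $\vxb$), namely $L+1$ times counting the $\bot$ write. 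Similarly $\tayp,\tbyp$ iterate $k+1$ times, $\txp$ iterates $L+1$ times, and $\typ$ iterates $k+1$ times. The $\rf$ relation is exactly the set of edges listed in \cref{lem:undecidability_no_skipping_pairs}, which pair the $i$-th read with the $i$-th write of the intended thread. By construction of the program these edges match values: the counters mod $4$ agree, and the letters/indices agree since all threads guessed consistently. In particular the edge $\rd(\vzaxyp)$ works because $\tax$ and $\tay$ made the same guesses. One technicality is the first read $\rd_i$ on a $z$-location, whose counter value is $prev$. I would check the off-by-one indexing in the code carefully so that the $i$-th read really carries the value of the $i$-th write.

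To get $\mo$ and consistency, the cleanest route is to build the execution operationally rather than check the axioms directly. I would produce a single interleaving (a total order $\ch$ of all events) and set $\mo$ on each location to be the order of its writes in $\ch$. Then I would show that every read reads from a write that is not $\hb$-overwritten. I would not use SC, since reads here are not $\mo$-latest globally. Instead I would use a standard Release/Acquire operational view: a read may read any write $\wt$ such that no $\mo$-later write on that location is in the read's $\hb$-past. With $\mo$ induced by $\ch$ and $\hb\subseteq\ch$ (all $\rf$ sources precede their reads in $\ch$), \ref{eq:hbirr} and \ref{eq:wc} hold immediately, and \ref{eq:at} is vacuous. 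So everything reduces to \ref{eq:rc}: for each $\rf$-edge $\wt\to\rd$, no write $\wt'$ on the same location with $\wt\LTo{\mo}\wt'\LTo{\hb}\rd$.

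For the schedule I would run the four-thread group $\{\tax,\taxp,\tay,\tayp\}$ in lockstep as in \cref{FIG:RAEX_AXY}, and likewise the $\beta$ group. On the verifier side, each iteration $i$ of $\tx$ and $\txp$ is placed in the order: writes of $\tx$ on $\vxa,\vxb$, then writes of $\txp$ on $\vxap,\vxbp$, then the reads. The guesser writes $\wt_i(\tax,\vxa)$ and $\wt_i(\taxp,\vxap)$ are scheduled just after the verifier's own $i$-th writes, matching the $\mo$-edges \circledsmall{2}, \circledsmall{4} of \cref{FIG:RAEX}. Concretely, I would run all guesser groups to completion first, except that the $\vxa/\vxap$ (and $\vya/\vyap$, $\beta$-analogues) writes of the guessers are interleaved with the verifiers' writes. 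Since no $\rf$ goes from verifiers to guessers, reordering verifier events relative to guesser events does not break $\hb\subseteq\ch$ as long as each verifier read follows its source.

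The main obstacle is verifying \ref{eq:rc} on the shared locations $\vxa,\vxap$ (and analogues), which are written by two threads. Take the read $\rd_i(\tx,\vxa)$ reading $\wt_i(\tax,\vxa)$. Its $\hb$-past contains $\wt_{\le i}(\tx,\vxa)$, which must be $\mo$-before $\wt_i(\tax,\vxa)$; the interleaving above ensures this. Its $\hb$-past also contains, through guesser synchronization, exactly $\wt_{\le i}(\tax,\vxa)$ and no later $\tax$ write, because $\rd_i(\tx,\vxa)$ reads the $i$-th write and guesser $\hb$ only reaches forward. The danger is $\hb$-paths of the form $\tx\to\txp\to\taxp\to\tax$. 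Here $\txp$ reads $\vxap$ from $\taxp$, whose $\hb$-past includes $\tax$ writes up to index roughly $i+1$, and this knowledge flows back to $\tx$ via $\vxap$ at the next iteration. I would carefully compute $\hb$-pasts index by index, using the lag of one in the $z$-synchronization reads (the $prev$ reads). The goal is to confirm that whenever $\tx$ performs $\rd_i(\tx,\vxa)$, the latest $\tax$ write in its past is $\wt_i$, and that $\wt_i(\tx,\vxa)$ is $\mo$-before it. The symmetric check handles $\txp$ on $\vxap$ and $\vxa$. This bookkeeping is where I expect the real work to lie.
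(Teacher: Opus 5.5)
There is a genuine gap. Your first paragraph (fixing the guesses, the iteration counts, and taking $\rf$ to be exactly the edges of \cref{lem:undecidability_no_skipping_pairs}) matches the paper. The consistency argument, however, rests on a single interleaving that contains $\hb$ and whose restriction to the writes of each location is $\mo$. Any such interleaving contains $\hb\cup\mo$, so you are in effect building a Strong Release/Acquire witness. Your ``operational view'' with writes appended in schedule order is exactly that semantics, and for this program such witnesses need not exist.

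Here is a concrete instance where it fails. Take $\alpha=\tuple{a^7,a,a}$ and $\beta=\tuple{a,a,a^7}$, with solution $1,2,2,2,2,2,3$. The $7$th letter lies in the first $\alpha$-word, while the $6$th letter lies in the sixth $\beta$-word. Every execution reaching $\term$ then contains the cycle
\begin{align*}
&\wt_{7}(\tx,\vxa)\LTo{\mo}\wt_{7}(\tax,\vxa)\LTo{\hb}\wt_4(\tay,\vya)\LTo{\rf}\rd_4(\ty,\vya)\LTo{\po}\wt_5(\ty,\vyb)\\
&\quad\LTo{\mo}\wt_5(\tby,\vyb)\LTo{\hb}\wt_6(\tbx,\vxb)\LTo{\rf}\rd_6(\tx,\vxb)\LTo{\po}\wt_{7}(\tx,\vxa).
\end{align*}
Each edge is forced:
\begin{itemize}
\item The two $\mo$-edges are forced by read-coherence. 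For example, $\wt_7(\tx,\vxa)\LTo{\po}\rd_7(\tx,\vxa)$, and the latter reads $\wt_7(\tax,\vxa)$.
\item The $\rf$-edges are forced by \cref{lem:undecidability_no_skipping_pairs}.
\item The first $\hb$-step goes through the bridges: $\wt_7(\tax,\vxa)$ is $\po$-before $\wt_2(\tax,\vzaxy)$, which is read in the third iteration of $\tay$, before $\wt_4(\tay,\vya)$.
\item The second $\hb$-step also goes through the bridges: $\wt_5(\tby,\vyb)\LTo{\po}\wt_5(\tby,\vzbxyp)\LTo{\rf}\rd_5(\tbx,\vzbxyp)$ closes the fifth outer iteration of $\tbx$, and $\wt_6(\tbx,\vxb)$ lies in the sixth.
\end{itemize}
So no interleaving of the kind you want exists, and the index bookkeeping in your last paragraph cannot produce one.

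The intuition is that three alignment demands collide. $\tx$ locks $\tax$ and $\tbx$ together letter by letter. $\ty$ locks $\tay$ and $\tby$ together word by word. The bridges tie each guesser's letters to its words. Once the two factorizations drift apart, these demands conflict. This is what the paper's footnote points to when it says $(\hb\cup\mo)^+$ is not irreflexive in general. It must be so: SRA-consistent executions are RA-consistent, so soundness carries over verbatim. A uniform SRA witness would therefore make Strong Release/Acquire reachability undecidable, contradicting \cite{Lahav2022}.

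What the proof needs instead is to define $\mo$ directly rather than read it off a schedule, as the paper does. Order the writes on each location by index, and at equal indices put the verifier's write first (e.g., $\wt_i(\tx,\vxa)\LTo{\mo}\wt_i(\tax,\vxa)$). Consistency is then proved purely through index monotonicity of $\hb$:
\begin{enumerate}
\item Restrict attention to minimal $\hb$-paths.
\item Show that minimal paths between non-bridge events inside $\ThreadDom_x$, or inside $\ThreadDom_y$, avoid bridge events. This is what keeps the letter/word drift out of every comparison.
\item Deduce that same-location write-to-write $\hb$ is increasing and write-to-read $\hb$ is non-decreasing.
\item Handle the equal-index read-coherence triples separately. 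Use that decreasing read-to-read $\po$-edges drop the index by exactly one, and that an $\hb$-edge from $\wt_i(\taxp,\vxap)$ to $\wt_j(\tax,\vxa)$ forces $i<j-1$.
\end{enumerate}
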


Given a solution to $\PCPInstance$, we construct an execution graph $G$ where every thread behaves according to $\ConcProg$, and then prove that $G$ is consistent under Release/Acquire.
This direction is considerably more intricate compared to soundness:~whereas the no-skipping lemma for soundness (\cref{lem:undecidability_no_skipping_pairs}) follows by analyzing the interaction of small sets of threads (each of size either $2$ or $4$), proving the consistency of $G$ has to account for all threads simultaneously.

\Paragraph{The execution graph $G=\tuple{\Events, \po, \rf, \mo}$.}
Let $j_1, \dots, j_k$ be a solution to $\PCPInstance$; we construct an execution graph $G=\tuple{\Events, \po, \rf, \mo}$ of $\ConcProg$ in which all threads reach $\term$.

\SubParagraph{The events of $G$.}
We specify the events of $G$ by listing how each thread 
behaves according to the solution $j_1, \dots, j_k$ of $\PCPInstance$. 
The event sequence of the non-primed threads  $\tax, \tay, \tbx, \tby, \tx$, and $\ty$ is defined by the values assumed by their respective locations $\vaux$, since this resolves all non-determinism locally in each thread.
For the threads $\tax$, $\tay$ $\tbx$, $\tby$, and $\ty$, we assign to the respective locations $\vaux$ the sequence of values $j_1, \dots, j_k, \bot$, while for $\tx$, we assign the sequence of letters in $\alpha_{j_1}\cdots \alpha_{j_k} \cdot \bot$.
Thus, $\tay$ (resp., $\tby$) writes $j_1, \dots, j_k, \bot$ (one-by-one, in this order) to $\vya$ (resp., $\vyb$), and $\tax$ (resp., $\tbx$) writes $\alpha_{j_1} \cdots \alpha_{j_k} \cdot \bot$ (letter-by-letter, in this order) to $\vxa$ (resp., $\vxb$) in its innermost loop. 
Threads $\tx$ and $\ty$ read the respective values, one-by-one, in each iteration.

The event sequence of the primed threads $\taxp, \tayp, \tbxp, \tbyp, \txp,$ and $\typ$, is defined by how many times their while loops execute.
The threads $\taxp$, $\tbxp$, and $\tx$ execute their loops $|\alpha_{j_1}\cdots \alpha_{j_k} \cdot \bot|$ times, and threads $\tayp$, $\tbyp$, and $\ty$ executes their loops $k+1$ times.

\SubParagraph{Specifying $G.\rf$.}
Recall that we use notation of the form $\wt_i(\tax, \vxa)$ to refer to the $i$-th write of $\tax$ on $\vxa$, and $\rd_i(\txp, \vxbp)$ to refer to the $i$-th read of $\txp$ on $\vxbp$. 
Also recall that we use event values to force each read $\rd_i$ to read from a particular thread.
We construct a reads-from relation that specifies $\wt_i\LTo{\rf}\rd_i$ for all $i$, where $\wt_i$ is the $i$-th write of the unique thread writing a value that matches that of $\rd_i$.
In particular, $\rf$ consists exactly of the relations shown in \cref{item:no_skip_0,item:no_skip_1,item:no_skip_2} 
of \cref{lem:undecidability_no_skipping_pairs}.
It is straightforward to verify that all the such $\rf$ relationships are valid in the sense that the values read and written match.

\SubParagraph{Specifying $G.\mo$.}
Next, we specify the $\mo$ relation of $G$.
First, for any two writes of the form $\wt_i(t, u)$ and $\wt_j(t',u)$ (i.e., the $i$-th and the $j$-th write on location $u$ of threads $t$ and $t'$, respectively), if $i<j$ then $\wt_i\LTo{\mo} \wt_j$.
It remains to order concurrent writes with the same index. 
We resolve these orderings through the following $\mo$-edges.
{
\footnotesize
\allowdisplaybreaks
\begin{align*}
&\wt_i(\tx, \vxa) \LTo{\mo} \wt_{i}(\tax, \vxa), &
&\hspace{-0.25em}\wt_i(\tx, \vxb) \LTo{\mo} \wt_{i}(\tbx, \vxb), &
&\hspace{-0.5em}\wt_i(\ty, \vya) \LTo{\mo} \wt_{i}(\tay, \vya), &
&\hspace{-0.75em}\wt_i(\ty, \vyb) \LTo{\mo} \wt_{i}(\tby, \vyb), \\
&\wt_i(\txp, \vxap) \LTo{\mo} \wt_i(\taxp, \vxap), &
&\hspace{-0.25em}\wt_i(\txp, \vxbp) \LTo{\mo} \wt_i(\tbxp, \vxbp), &
&\hspace{-0.5em}\wt_i(\typ, \vyap) \LTo{\mo} \wt_i(\tayp, \vyap), &
&\hspace{-0.75em}\wt_i(\typ, \vybp) \LTo{\mo} \wt_i(\tbyp, \vybp).
\numberthis\label{eq:undecidability_mo_edges}
\end{align*}
}%
\cref{FIG:RAEX} illustrates some of these $\mo$-edges.
Note that each of the above locations is written by two threads, while each other location is written by a single thread, hence $\mo$ is total on each location.

In the remaining of this section, we prove that $G$ is consistent under Release/Acquire.
A key notion used extensively in our proofs is monotonicity with respect to event indices, as defined below.

\Paragraph{Monotonicity.}
Given a binary relation $R$ over the events of $G$, we call an $R$-edge $\event_i\LTo{R}\event_j$ \emph{non-decreasing} (resp., \emph{increasing}) if $i\leq j$ (resp., $i<j$), where $i$ and $j$ are the indices of $\event_i$ and $\event_j$, respectively.
We say that $R$ is non-decreasing (resp., increasing) if all its edges are non-decreasing (resp., increasing).
Observe that $\rf$ and $\mo$ are non-decreasing, by construction.
The crux of our proofs for the consistency of $G$ lies in  some key monotonicity properties for $\hb$ that imply the irreflexivity of the relations in \cref{fig:axioms_ra}.
The main challenge is that $\hb$ is \emph{not} non-decreasing, in general.

\Paragraph{Decreasing $\po$-edges.}
The $\hb$ relation is not non-decreasing because certain $\po$-edges of $G$ are decreasing.
We identify and characterize two classes of such edges that will require special treatment in our monotonicity lemmas later.
The first class involves edges that can decrease by more than one.
We call the events that receive such edges \emph{bridge events}.
The second class consists of \emph{decreasing read-to-read $\po$-edges} between non-bridge events.

\SubParagraph{Bridge events.}
Thread $\tax$ (resp., $\tbx$) writes to $\vzaxy$ and reads from $\vzaxyp$  (resp., $\vzbxy$ and $\vzbxyp$) less often than it writes/reads other locations.
This is because these locations are only accessed once per iteration of the \emph{outer} loop of $\tax$ (resp., $\tbx$), while other locations are accessed once per iteration of the \emph{inner} loop of $\tax$ (resp., $\tbx$).
This results in $\po$-edges entering the events of the outer loop potentially decreasing arbitrarily.
We call each location among $\vzaxy$, $\vzaxyp$, $\vzbxy$, $\vzbxyp$, a \emph{bridge location},
and an event $\event$ accessing a bridge location a \emph{bridge event}.
Bridge events are highlighted in \cref{FIG:RAEX_AXY}.
The next lemma captures some monotonicity properties of $\po$-edges that do not involve bridge events. 
Its proof follows immediately by examining the sequence of events that each thread executes in isolation.

\begin{restatable}{lemma}{lempostrictmonotonic}\label{lem:po_strict_monotonic}
Consider any $\po$-edge $\event_i\LTo{\po}\wt_j$ where $\wt_j$ is a non-bridge write.
Then
\begin{enumerate*}[label=(\roman*)]
\item\label{item:po_non_decreasing}
$i\leq j$, and
\item\label{item:po_increasing}
if $\op(\event_i)=\rd$, then $i<j$.
\end{enumerate*}
\end{restatable}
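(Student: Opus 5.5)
The lemma concerns only the event sequence of each thread in isolation, so the plan is a thread-by-thread inspection of the code in \cref{fig:pcp_code}, instantiated with the executions fixed for $G$. Since $\po$ is transitive, I will first reduce to \emph{immediate} predecessors. Suppose that for every non-bridge write $\wt_j$, each event $\event_i$ preceding it in the same thread satisfies \ref{item:po_non_decreasing} and \ref{item:po_increasing}. That is exactly the statement, so nothing is lost by checking it directly. To organize the check, I will describe, for each thread $t$ and each non-bridge location $u$ written by $t$, the index of the $j$-th write on $u$ as a loop-iteration count. I will then compare it with the index of every earlier event in $t$.

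Concretely, in every thread each non-bridge location is accessed at most once per iteration of the loop that contains it. For $\tax$ and $\tbx$ this is the inner \texttt{for} loop, counted globally across outer iterations. For $\tay$, $\tby$, $\tx$, $\ty$ and all primed threads it is the main loop. Hence the $j$-th write $\wt_j(t,u)$ lies in the $j$-th (global) iteration $I_j$ of that loop.

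Take any non-bridge event $\event_i$ earlier in $t$, on location $u'$. If it is a write, it occurs in some iteration $I_{i'}$ with $i' = i \le j$, because writes on $u'$ likewise happen once per iteration starting from the first. If it is a read, I will check that reads on non-bridge locations always lag by one. In $\tax$ the read $\rd(\vzax,prev)$ is skipped in the first iteration, so the $i$-th read lies in iteration $i+1$. The same holds for the reads of $\tay$ on $\vzay$, and for the reads of $\tx$ on $\vxap$ and $\vxbp$. In the primed threads the reads come after the writes within the same iteration, e.g. $\rd(\vzaxp,\vc)$ follows $\wt(\vzax,\vc)$ and $\wt(\vxap,\dots)$. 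For $\tx$ and $\ty$, the reads on $\vxa$, $\vxb$ (resp.\ $\vya$, $\vyb$) are performed in every iteration but after the writes of that iteration. Thus a read $\rd_i$ that is $\po$-before $\wt_j$ lies either in an earlier iteration, giving $i<j$ directly, or in iteration $I_j$ before $\wt_j$. The second case never happens with the same index, since all reads in an iteration follow the writes of that iteration, or carry index one less. In both cases $i<j$.

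It remains to treat bridge events $\event_i$ preceding a non-bridge write $\wt_j$, which occur only in $\tax$, $\tbx$, $\tay$, $\tby$. In $\tay$, the bridge write on $\vzaxyp$ occurs once per main iteration, so the same counting applies: the $i$-th write sits in iteration $i$, and the $i$-th read of $\vzaxy$ sits in iteration $i+1$. In $\tax$, the $i$-th access to $\vzaxy$ or $\vzaxyp$ happens in outer iteration $i$. By that point at least $i-1$ inner iterations have completed (each $\alpha$ is non-empty), and the read of $\vzaxyp$ in outer iteration $i$ comes after at least $i$ inner iterations. Any later non-bridge write therefore has index at least $i$ for the bridge write and at least $i+1$ for the bridge read. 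Here I need $|\alpha_{\vaux}|\ge 1$, which holds by the non-emptiness assumption of PCP. The last outer iteration with $\vaux=\bot$ needs separate handling: there I will take $|\alpha_\bot|=1$ with the single letter $\bot$, which is what the construction of $G$ assumes. This bridge-event bookkeeping in $\tax$ and $\tbx$, especially the terminal $\bot$ iteration, is the step where I expect the most care is needed. The rest is routine case-checking.
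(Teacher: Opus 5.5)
Your proposal is correct and follows essentially the paper's own argument. Both count global loop iterations thread by thread. Both use that within every iteration all writes precede all reads, and that the reads skipped in the first iteration make read indices lag by one. Both handle the bridge events of $\tax$ and $\tbx$ via the non-emptiness of each $\alpha_{\vaux}$; the paper's only variation is that it treats the bridge-read case by passing to the next write on $\vzaxy$.

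Your extra care about the terminal $\bot$ iteration is unnecessary. The bound for the $i$-th write on $\vzaxy$ only uses the earlier, non-$\bot$, outer iterations. No event follows the final read of $\vzaxyp$.
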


\SubParagraph{Read-to-read edges.}
In the first iteration of the while loop in $\tx$ and $\ty$, read events in certain locations are skipped, while read events in other locations are executed. 
This causes the indices of the read events in the skipped locations to lag behind the indices of the other read events by one, thereby causing decreasing read-to-read $\po$-edges.
For example, the edge $\rd_3(\tx,\vxa)\LTo{\po}\rd_2(\tx,\vxap)$, in \cref{FIG:RAEX}, is decreasing because in the first iteration of its while loop, $\tx$ reads from $\vxa$, but not $\vxap$.
The next lemma is an exhaustive account of all decreasing $\po$-edges between non-bridge read events.
Similarly to \cref{lem:po_strict_monotonic}, its proof follows directly by examining the sequence of events that each thread generates in isolation.

\begin{restatable}{lemma}{lemdecreasingrrpo}\label{lem:decreasing_r_r_po}
Consider any decreasing edge $\rd_i\LTo{\po}\rd_j$ where neither $\rd_i$ nor $\rd_j$ is a bridge event.
Then $j=i-1$, and $\tuple{\rd_i, \rd_j}$ is one of the following pairs
\begin{align*}
\footnotesize
&\tuple{\rd_i(\tx, \vxa), \rd_j(\tx, \vxap)}, \quad \tuple{\rd_i(\tx, \vxa), \rd_j(\tx, \vxbp)}\\
&\tuple{\rd_i(\tx, \vxb), \rd_j(\tx, \vxap)}, \quad \tuple{\rd_i(\tx, \vxb), \rd_j(\tx, \vxbp)}\\
&\tuple{\rd_i(\ty, \vya), \rd_j(\ty, \vyap)}, \quad \tuple{\rd_i(\ty, \vya), \rd_j(\ty, \vybp)}\\
&\tuple{\rd_i(\ty, \vyb), \rd_j(\ty, \vyap)}, \quad \tuple{\rd_i(\ty, \vyb), \rd_j(\ty, \vybp)}\numberthis
\label{eq:decreasing_r_r_po}
\end{align*}
In particular, all decreasing edges $\rd_i\LTo{\po}\rd_j$ between non-bridge read events are such that $\rd_i$ reads from a guesser thread. 
\end{restatable}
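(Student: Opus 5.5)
The proof is a direct inspection of the per-thread event sequences of $G$, as for \cref{lem:po_strict_monotonic}. Two facts do most of the work. First, every $\po$-edge is a composition of immediate $\po$-successor edges. Second, within each thread the index of the events on each fixed location increases by exactly one per iteration of the loop that accesses it. So a $\po$-edge $\rd_i\LTo{\po}\rd_j$ can only be decreasing if some location is accessed in more iterations than another location that is accessed later in the same thread.

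\textbf{Step 1: which threads can contain such an edge.} I go through the threads of \cref{fig:pcp_code} one at a time.
\begin{compactitem}
\item The primed threads $\taxp,\tayp,\tbxp,\tbyp,\txp,\typ$ execute the same body in every iteration. Every location they access is accessed exactly once per iteration, so all events of the $m$-th iteration carry index $m$. Hence every $\po$-edge in these threads is non-decreasing.
\item In $\tay$ and $\tby$, the reads on $\vzaxy,\vzay$ (resp.\ $\vzbxy,\vzby$) are skipped in the first iteration. A read in iteration $m$ therefore has index $m-1$. The only non-bridge reads are those on $\vzay$ (resp.\ $\vzby$), and they all have this same lag of one. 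So any edge between two of them goes from index $m-1$ to index $m'-1$ with $m\le m'$, and is non-decreasing.
\item In $\tax$ and $\tbx$, the non-bridge reads are only those on $\vzax$ (resp.\ $\vzbx$). The reads on $\vzaxyp$ and $\vzbxyp$ are bridge events and are excluded. Reads on a single location are $\po$-ordered increasingly, so no decreasing edge arises here.
\end{compactitem}

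\textbf{Step 2: the remaining threads $\tx$ and $\ty$.} In iteration $m$, thread $\tx$ reads $\vxa$ and $\vxb$ with index $m$. The reads on $\vxap$ and $\vxbp$ are skipped in iteration $1$, so in iteration $m\ge 2$ they have index $m-1$. Now take reads in iterations $m\le m'$. An edge from a $\vxap$/$\vxbp$ read to a $\vxa$/$\vxb$ read goes from $m-1$ to $m'$, which is increasing. An edge between two reads of the same kind is non-decreasing. The only remaining case is an edge from a $\vxa$/$\vxb$ read of index $m$ to a $\vxap$/$\vxbp$ read of index $m'-1$. This is decreasing iff $m'-1<m$, i.e.\ iff $m'=m$, and then $j=i-1$. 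This gives exactly the four $\tx$ pairs of \eqref{eq:decreasing_r_r_po}. The case of $\ty$ is symmetric with $y$ in place of $x$ and gives the other four pairs.

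\textbf{Step 3: the final claim.} In every listed pair, $\rd_i$ is on $\vxa,\vxb,\vya$, or $\vyb$. By the thread tags in the read values, these reads of $\tx$ and $\ty$ read from $\tax,\tbx,\tay,\tby$ respectively, which are guesser threads.

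The only place that needs care is the bookkeeping of the first iteration, where the $\firstit$ branch drops some reads. Once that offset of one is fixed, the rest is routine.
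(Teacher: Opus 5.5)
Your proof is correct and takes essentially the same approach as the paper: a thread-by-thread inspection of the event sequences, where the only decreasing read-to-read edges occur in $\tx$ and $\ty$, because skipping the primed-location reads in the first iteration creates an index offset of exactly one. The paper prunes the threads upfront, noting that same-location read edges are increasing so only threads with non-bridge reads on several locations ($\tx,\txp,\ty,\typ$) need checking; your Step~1 reaches the same reduction by going through the threads one at a time.
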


\Paragraph{Alternating and minimal $\hb$-paths.}
We call a $\hb$-path \emph{alternating} if it has the form
\[
a\LTo{\po^?}\wt_{i_1}\LTo{\rf}\rd_{i_1}\LTo{\po}\wt_{i_2}\LTo{\rf}\rd_{i_2}\LTo{\po}\cdots\LTo{\po}\wt_{i_{\ell}}\LTo{\rf}\rd_{i_{\ell}}\LTo{\po^?}b
\numberthis\label{eq:undecidability_hb_minimal}
\]
i.e., it is an alternating sequence of $\po$ and $\rf$ edges.
We write $\Path\colon a\LPath{\hb}b$ to denote such an alternating $\hb$-path.
Note that if $a \LTo{\hb}b$, then there exists an alternating $\hb$-path of the form of \cref{eq:undecidability_hb_minimal} from $a$ to $b$.
We call $\Path$ a cycle if $a=b$ and $\Path$ contains at least one edge.
We say that $\Path$ is \emph{contained} in a thread set $\{t_1,\dots, t_m\}$ if all of its events $\event$ are such that $\tid(\event)\in \{t_1,\dots, t_m\}$.
We call $\Path$ \emph{minimal} if every $\po$-edge of $\Path$ is on a distinct thread.
Finally, we call $\Path\colon \event_i\LPath{\hb} \event_j$ non-decreasing (resp., increasing) if $i\leq j$ (resp., $i<j$).
The next lemma establishes \ref{eq:hbirr} for $G$.

\begin{restatable}{lemma}{lemhbirr}\label{lem:hbirr}
The $\hb$ relation is irreflexive.  
\end{restatable}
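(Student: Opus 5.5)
The plan is to argue by contradiction: assume $\hb$ has a cycle, project it onto the communication topology of \cref{fig:pcp_top}, and show that it must lie entirely inside one small group of threads, where it can be ruled out directly. So suppose $\event\LTo{\hb}\event$. Then there is an alternating cycle $\Path\colon\event\LPath{\hb}\event$ of the form \cref{eq:undecidability_hb_minimal}, and we take $\Path$ with the fewest edges.

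\emph{Step 1: confine the cycle.} Every inter-thread edge of $\Path$ is an $\rf$-edge, and $G.\rf$ consists exactly of the edges listed in \cref{item:no_skip_0,item:no_skip_1,item:no_skip_2} of \cref{lem:undecidability_no_skipping_pairs}. Reading off their thread pairs, every $\rf$-edge either stays inside one of the four groups $\{\tax,\taxp,\tay,\tayp\}$, $\{\tbx,\tbxp,\tby,\tbyp\}$, $\{\tx,\txp\}$, $\{\ty,\typ\}$, or goes from a guesser group to a verifier group. The cross edges on $\vxa,\vxap,\vxb,\vxbp,\vya,\vyap,\vyb,\vybp$ always have a guesser as writer and a verifier as reader. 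Contract each group to a node. The resulting quotient graph is acyclic: both guesser groups point only to the two verifier groups, and there is no edge between the two verifier groups. Hence a cycle cannot use any cross edge, and $\Path$ is contained in a single group.

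\emph{Step 2: rule out cycles inside a group.} For each group we exhibit a total order of its events that extends both $\po$ and the in-group $\rf$-edges. That order is exactly the synchronous schedule of \cref{FIG:RAEX_AXY} (resp.\ \cref{FIG:RAEX}), which exists by construction.

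For $\{\tx,\txp\}$ the schedule is built iteration by iteration:
\begin{itemize}
\item iteration $i$ of $\tx$ up to its writes $\wt_i(\tx,\vxa),\wt_i(\tx,\vxb)$;
\item the writes $\wt_i(\txp,\vxap),\wt_i(\txp,\vxbp)$ of $\txp$;
\item the reads of $\tx$ in iteration $i$, whose reads of $\vxap,\vxbp$ have index $i-1$ and so read from writes already placed;
\item the remaining reads of $\txp$ in iteration $i$.
\end{itemize}
To show formally that every in-group $\rf$-edge points forward, we would assign each event the integer $2i$ or $2i+1$ according to its iteration $i$ and its phase, then check each edge. Here \cref{lem:po_strict_monotonic} and \cref{lem:decreasing_r_r_po} give the needed index relations along $\po$. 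The decreasing read-to-read edges of \cref{lem:decreasing_r_r_po} start at a read from a guesser thread, so by Step 1 they never occur on an in-group cycle. The group $\{\ty,\typ\}$ is identical.

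\emph{Step 3: the guesser groups, the expected main obstacle.} This is where bridge events appear and where indices on different locations diverge arbitrarily, so plain index monotonicity fails. The approach is to timestamp events not by their own index but by a global step counter. For $\{\tax,\taxp,\tay,\tayp\}$, stamp each event by the pair (outer-loop iteration $o$ of $\tax$/$\tay$, inner step $s$ within that iteration). Events of $\taxp$ inherit the stamp of the inner iteration of $\tax$ they synchronize with, those of $\tayp$ the outer iteration of $\tay$, plus a phase bit. Order these stamps lexicographically. We then check that every $\po$-edge and every $\rf$-edge of \cref{item:no_skip_0} strictly increases the stamp (or keeps it with an increasing phase). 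The bridge edges $\wt_o(\tax,\vzaxy)\LTo{\rf}\rd_o(\tay,\vzaxy)$ and $\wt_o(\tay,\vzaxyp)\LTo{\rf}\rd_o(\tax,\vzaxyp)$ are the delicate cases. The first goes from the start of outer iteration $o$ of $\tax$ to the end of outer iteration $o+1$ of $\tay$. The second goes from the start of outer iteration $o$ of $\tay$ to the end of outer iteration $o$ of $\tax$, after all its inner steps. Both are forward once $\tay$'s iteration $o$ is placed before the inner steps of $\tax$'s iteration $o$. The $\beta$ group is symmetric.

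Since every edge of any in-group alternating cycle strictly increases a well-founded stamp, no such cycle exists. This contradicts the assumption, so $\hb$ is irreflexive.
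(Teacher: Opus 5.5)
Your proof is correct, but it takes a different route from the paper.

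\textbf{The paper's route.} The paper takes a shortest alternating cycle, so that every $\po$-edge lies on a distinct thread and each thread is visited once. The two-way part of the topology in \cref{fig:pcp_top} consists of paths such as $\taxp$--$\tax$--$\tay$--$\tayp$, and the guesser-to-verifier edges are one-way. Hence such a cycle can only bounce between two adjacent threads, with shape $\wt_i\LTo{\rf}\rd_i\LTo{\po}\wt_j\LTo{\rf}\rd_j\LTo{\po}\wt_i$. Index monotonicity then gives $i<j<i$:
\begin{itemize}
\item $\rf$ is non-decreasing;
\item read-to-write $\po$-edges into non-bridge writes are increasing by \cref{lem:po_strict_monotonic};
\item for the bridge pairs $\{\tax,\tay\}$ and $\{\tbx,\tby\}$, a direct check shows that $\rd(\tax,\vzaxyp)\LTo{\po}\wt(\tax,\vzaxy)$ and $\rd(\tay,\vzaxy)\LTo{\po}\wt(\tay,\vzaxyp)$ are increasing.
\end{itemize}

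\textbf{Your route.} You confine the cycle only to a whole group, via the acyclic quotient. You then rule out cycles inside each group with an explicit linear extension of $\po\cup\rf$. Your stamps do work, including the two bridge edges. Place iterations $o$ of $\tay$ and $\tayp$ before the inner steps of outer iteration $o$ of $\tax$. Place iteration $s$ of $\taxp$ between inner steps $s$ and $s+1$ of $\tax$.

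\textbf{Trade-off.} The paper's minimality trick is exactly what removes what you call the main obstacle. In a two-thread cycle only the locations shared by that pair occur, so indices never diverge across locations. It also reuses monotonicity facts needed later anyway. Your argument needs neither minimality nor the index lemmas. It proves something stronger and more transparent: each group admits a sequentially consistent schedule. The price is specifying the stamps for bridge events by hand.

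\textbf{Minor points.} Your appeal to \cref{lem:po_strict_monotonic,lem:decreasing_r_r_po} in Step 2 is unnecessary, because the iteration-based stamps suffice on their own. Your choice of a cycle with the fewest edges is never actually used.
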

\begin{proof}
Assume towards contradiction that there is an alternating $\hb$-cycle $\Path \colon a\LPath{\hb}a$.
Since no thread reads from itself, $\Path$ cannot be contained on any single thread.
Moreover, wlog, we may assume that $\Path$ is minimal.
Then, due to the communication topology (\cref{fig:pcp_top}), $\Path$ must be fully contained in one of the following pairs of threads
\[
\{\tax,\taxp\}, \quad \{\tax,\tay\}, \quad \{\tay,\tayp\}, \quad
\{\tbx,\tbxp\}, \quad \{\tbx,\tby\}, \quad \{\tby,\tbyp\}, \quad
\{ \tx, \txp \}, \quad \{ \ty, \typ \}\  ,
\]
since there is no communication (i.e., no $\rf$-edges) between any other pair of threads.
Since a cycle spanning multiple threads contains a write event, and the presence of a cycle implies the presence of all its cyclic shifts, we can also assume wlog that $\Path$ starts and ends on a write event, and thus has the following form
\[
\Path\colon \wt_{i}\LTo{\rf}\rd_{i}\LTo{\po}\wt_{j}\LTo{\rf}\rd_{j}\LTo{\po}\wt_{i} \ .
\]
Note that every event of $\Path$ accesses a location that is accessed by both threads that $\Path$ traverses.
We consider two cases, depending on the pairs of threads that $\Path$ is contained in.

\SubParagraph{Pairs $\{\tax,\taxp\}$, $\{\tay,\tayp\}$, $\{\tbx,\tbxp\}$, $\{\tby,\tbyp\}$, $\{\tx,\txp\}$ and $\{\ty,\typ\}$.}
None of these thread pairs share bridge locations, so all events in $\Path$ are non-bridge events.
By \cref{lem:po_strict_monotonic}, read-to-write $\po$-edges are increasing, and by construction, $\rf$-edges are non-decreasing.
Therefore $i < j < i$, a contradiction.
    
\SubParagraph{Pairs $\{\tax,\tay\}$ and $\{\tbx, \tby\}$.}
We first consider the thread pair $\{\tax,\tay\}$. 
The shared locations of this pair are $\vzaxy$ and $\vzaxyp$, hence every event in $\Path$ must access one of these two locations.
The only read-to-write $\po$-edges connecting events in these locations are of the form $\rd(\tax,\vzaxyp)\LTo{\po}\wt(\tax,\vzaxy)$ and  $\rd(\tay,\vzaxy)\LTo{\po}\wt(\tay,\vzaxyp)$.
Note that $\tax$ alternates writes to $\vzaxy$ and reads from $\vzaxyp$, starting with a write to $\vzaxy$. 
Therefore, $\po$-edges of the form $\rd(\tax,\vzaxyp)\LTo{\po}\wt(\tax,\vzaxy)$ are increasing (note, for instance,  $\rd_1(\tax,\vzaxyp)\LTo{\po}\wt_2(\tax,\vzaxy)$ in \cref{FIG:RAEX_AXY}).
Similarly, $\tay$ alternates writes to $\vzaxyp$ and reads from $\vzaxy$, and additionally skips the first read to $\vzaxy$. 
Therefore, $\po$-edges of the form $\rd(\tay,\vzaxy)\LTo{\po}\wt(\tay,\vzaxyp)$ are also increasing (note, for instance,  $\rd_1(\tay,\vzaxy)\LTo{\po}\wt_3(\tay,\vzaxyp)$ in \cref{FIG:RAEX_AXY}).
This yields $i < j < i$, a contradiction.

The same argument applies to the pair $\{\tbx,\tby\}$, which follows an analogous pattern.
In this case, the respective locations are $\vzbxy$ and $\vzbxyp$, and all edges of the form $\rd(\tbx,\vzbxyp)\LTo{\po}\wt(\tbx,\vzbxy)$ and  $\rd(\tby,\vzbxy)\LTo{\po}\wt(\tby,\vzbxyp)$ are increasing.
\end{proof}
Besides establishing \ref{eq:hbirr}, \cref{lem:hbirr} allows us to consider only minimal $\hb$-paths between events of $G$, as stated in the following lemma.

\begin{restatable}{lemma}{lemhbminimal}\label{lem:hb_minimal}
For any two events $a$, $b$ of $G$, if $a\LTo{\hb}b$ then there exists a minimal $\hb$-path $\Path\colon a\LPath{\hb}b$.
\end{restatable}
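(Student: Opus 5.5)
Take an alternating $\hb$-path $\Path\colon a\LPath{\hb}b$ with the fewest edges among all alternating $\hb$-paths from $a$ to $b$. Such a path exists because $a\LTo{\hb}b$ gives at least one alternating path (as noted after \cref{eq:undecidability_hb_minimal}), and path lengths are natural numbers. We then show that this shortest path is minimal, i.e., that its $\po$-edges lie on pairwise distinct threads.

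Suppose two $\po$-edges of $\Path$ lie on the same thread $t$. Write the first one as $\event\LTo{\po}\event'$ and the second as $\event''\LTo{\po}\event'''$, where $\event$ occurs earlier in $\Path$ than $\event''$. Either endpoint may be $a$ or $b$ when the edge is one of the optional $\po^?$ edges at the ends. Since $\po^{t}$ is a total order on the events of $t$, we compare $\event$ with $\event'''$.
\begin{itemize}
\item If $\event\LTo{\po}\event'''$, we replace the whole segment of $\Path$ from $\event$ to $\event'''$ by this single $\po$-edge. Both endpoints are on $t$, so this is a legal shortcut.
\item If $\event=\event'''$, we simply delete that segment.
\end{itemize}
In both cases the result is again an alternating $\hb$-path from $a$ to $b$. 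The only subtle point is that alternation could break where the shortcut meets its neighbours. This does not happen: the edge entering $\event$ and the edge leaving $\event'''$ are $\rf$-edges, or else $\event$ and $\event'''$ are the endpoints $a$ and $b$. Hence the new path still alternates $\po$ and $\rf$, with optional $\po$ edges only at its ends. It is also strictly shorter, since it removes at least one $\rf$-edge. This contradicts the choice of $\Path$.

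The remaining case is $\event'''\LTo{\po}\event$, and here we use \cref{lem:hbirr}. The subpath of $\Path$ from $\event$ to $\event'''$ gives $\event\LTo{\hb}\event'''$. Composing with $\event'''\LTo{\po}\event$ gives $\event\LTo{\hb}\event$, which contradicts the irreflexivity of $\hb$. Every case therefore leads to a contradiction, so $\Path$ is minimal.

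The main obstacle is the bookkeeping in the shortcut step: we must check that the path stays alternating in the form of \cref{eq:undecidability_hb_minimal}, including the degenerate cases where $\event=a$ or $\event'''=b$. The order-theoretic part is immediate from the totality of $\po^t$ together with \cref{lem:hbirr}.
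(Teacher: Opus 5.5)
Your proof is correct and follows the paper's argument: you compare the source of the first same-thread $\po$-edge with the target of the second, shortcut with a single $\po$-edge when they are $\po$-ordered forward, and rule out the backward order via \cref{lem:hbirr}; choosing a shortest path rather than iterating the shortcut exhaustively makes no real difference. One small remark: your case $\event=\event'''$ is actually impossible, because the subpath from $\event$ back to itself is already an $\hb$-cycle, so it belongs with your \cref{lem:hbirr} case; this matters because deleting that segment could leave two consecutive $\rf$-edges, so your claim that the result is again alternating is not justified there.
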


This result aids us in establishing further properties of $\hb$. 
To further simplify following lemmas by exploiting the symmetry of our construction, we group threads in two sets.

\Paragraph{Thread sets $\ThreadDom_x$ and $\ThreadDom_y$.}
In our construction, threads split naturally into two groups according to the locations they access: $\ThreadDom_x = \{\tax,\taxp,\tbx,\tbxp,\tx,\txp\}$ for threads reading and writing $x$-locations, and $\ThreadDom_y = \{\tay,\tayp,\tby,\tbyp,\ty,\typ\}$ for $y$-locations.
This separation is key to \cref{lem:write_to_write_no_bridge}, 
which states that, wlog, $\hb$-paths between non-bridge events in the same thread set are bridge-free.

\begin{restatable}{lemma}{lemwritetowritenobridge}\label{lem:write_to_write_no_bridge}
Let $a, b$ be non-bridge events such that $\tid(a),\tid(b)\in \ThreadDom_x$ or $\tid(a),\tid(b)\in \ThreadDom_y$.
If $a \LTo{\hb} b$, any minimal $\hb$-path $\Path\colon a\LPath{\hb}b$ does not contain bridge events.
\end{restatable}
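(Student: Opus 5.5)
The plan is to exploit the communication topology (\cref{fig:pcp_top}) together with minimality. The key structural fact is that the only $\rf$-edges crossing between $\ThreadDom_x$ and $\ThreadDom_y$ are on the bridge locations $\vzaxy,\vzaxyp$ (between $\tax$ and $\tay$) and $\vzbxy,\vzbxyp$ (between $\tbx$ and $\tby$). Conversely, every bridge location is accessed only by the pair $\{\tax,\tay\}$ or $\{\tbx,\tby\}$, so every $\rf$-edge between bridge events crosses from one thread set to the other. Moreover, an $\rf$-edge always connects a write and a read on the same location, so a bridge event can only be entered or left via $\rf$ through another bridge event.

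Fix non-bridge $a,b$ in the same thread set, say $\ThreadDom_x$ (the $\ThreadDom_y$ case is symmetric), and a minimal alternating path $\Path\colon a\LPath{\hb}b$ in the form of \cref{eq:undecidability_hb_minimal}. Suppose toward contradiction that $\Path$ contains a bridge event. Since $a$ and $b$ are non-bridge, any bridge event $c$ on $\Path$ is an intermediate event of the form $\wt_{i_m}$ or $\rd_{i_m}$. Its incident $\rf$-edge on $\Path$ connects it to another bridge event in the opposite thread set. Hence $\Path$ contains at least one $\rf$-edge between the two thread sets. Because $a$ and $b$ both lie in $\ThreadDom_x$, the path must leave $\ThreadDom_x$ and later return, so it crosses an even and positive number of times. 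Every crossing goes through $\tax\leftrightarrow\tay$ or $\tbx\leftrightarrow\tby$.

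The heart of the argument is to show that such an excursion forces some thread to be visited twice, contradicting minimality, which requires each $\po$-edge of $\Path$ to lie on a distinct thread. Consider the first crossing out of $\ThreadDom_x$. It has the form $\wt(\tax,\vzaxy)\LTo{\rf}\rd(\tay,\vzaxy)$, or the analogous edge from $\tbx$ to $\tby$, since $\tax$ only writes $\vzaxy$ and $\tay$ only writes $\vzaxyp$. Inside $\ThreadDom_y$ the path can only move among $\tay,\tayp,\ty,\typ$ (or the $\beta$ counterparts), since $\rf$-edges in $\ThreadDom_y$ never connect the $\alpha$-guessers to the $\beta$-guessers directly; they meet only at $\ty$, which writes nothing read by the guessers. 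So the path cannot reach $\tby$ after entering via $\tay$, because the $\rf$-edges go from guessers to verifiers and never back. Therefore the return crossing must be $\wt(\tay,\vzaxyp)\LTo{\rf}\rd(\tax,\vzaxyp)$, which re-enters $\tax$. This gives a contradiction in each of the two possible situations. If the path passes through $\tax$ before exiting, then $\tax$ hosts two $\po$-segments, contradicting minimality. If the path starts in $\tax$ itself, then $a\in\tax$ and we obtain a $\po$-edge on $\tax$ used twice, again contradicting minimality. The residual degenerate possibility is that $\Path$ leaves $\tax$ from $a$ and returns to $b$ in $\tax$ with no other $\po$-edge on $\tax$. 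In this case the first and last $\po^?$ segments are both on $\tax$; minimality should count these, and if not, I would splice the path by replacing the excursion with $a\LTo{\po}b$ or derive an $\hb$-cycle via \cref{lem:hbirr}.

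The main obstacle is making the reachability claim inside $\ThreadDom_y$ rigorous, namely that from $\tay$ one cannot reach $\tby$. I would carry this out by a case inspection of \cref{fig:pcp_top}. The only $\rf$-sources into $\tby$ are $\tbyp$ (via $\vzby$) and $\tbx$ (via $\vzbxy$). Thread $\tbyp$ is reachable only from $\tby$. Thread $\tbx$ lies in $\ThreadDom_x$, so reaching it would need another crossing through $\tax$, which again contradicts minimality. I would also check carefully the treatment of the endpoint segments $a\LTo{\po^?}$ and $\LTo{\po^?}b$ in the definition of minimality.
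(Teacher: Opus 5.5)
Your proof is correct and takes the same route as the paper. Bridge events can only appear as endpoints of bridge $\rf$-edges, each of which crosses between $\ThreadDom_x$ and $\ThreadDom_y$, so returning requires a second crossing; the topology forces both crossings through $\tax$ (or both through $\tbx$), which puts two $\po$-edges on that thread and contradicts minimality. Your explicit check that $\tby$ is unreachable from $\tay$ inside $\ThreadDom_y$ spells out what the paper leaves to \cref{fig:pcp_top}, and your ``degenerate'' endpoint case needs no splicing. The paper's definition of minimality already counts the end segments $a\LTo{\po}\wt(\tax,\vzaxy)$ and $\rd(\tax,\vzaxyp)\LTo{\po}b$ as $\po$-edges of $\Path$, and they are non-trivial because $a$ and $b$ are non-bridge.
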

The following three lemmas state some additional monotonicity properties of $\hb$ that will be useful later.
\cref{lem:write_to_write_no_bridge_increasing} establishes monotonicity properties of minimal $\hb$-paths that avoid bridge events (for instance, the ones just established in \cref{lem:write_to_write_no_bridge}), 
\cref{lem:rr_bridge_monotonic} states that $\hb$-edges between same-location reads on verifier threads are increasing, and \cref{lem:ww_dec_2} states that certain $\hb$-edges from writes on a primed thread to its non-primed variant increase the index by at least two.

\begin{restatable}{lemma}{lemwritetowritenobridgeincreasing}\label{lem:write_to_write_no_bridge_increasing}
Consider any minimal $\hb$-path $\Path\colon\wt_i\LPath{\hb}\wt_j$ that does not contain a bridge event. 
Then
\begin{enumerate*}[label=(\roman*)]
\item\label{item:ww_nb_non_dec}
$i\leq j$, and
\item\label{item:ww_nb_inc}
if $\tid(\wt_i)\neq \tid(\wt_j)$, then $i<j$.
\end{enumerate*}
\end{restatable}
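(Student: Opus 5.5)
We decompose $\Path$ into its alternating segments and track how the index changes along each edge. Since $\Path$ starts and ends on a write, it has the form
\[
\wt_i=\wt_{i_0}\LTo{\rf}\rd_{i_1}\LTo{\po}\wt_{i_2}\LTo{\rf}\rd_{i_3}\LTo{\po}\cdots\LTo{\rf}\rd_{i_{2m-1}}\LTo{\po}\wt_{i_{2m}}=\wt_j ,
\]
possibly preceded by an initial $\po$-edge $\wt_i\LTo{\po}\wt_{i'}$ into a write of the same thread. (A trailing $\po$-edge from a read lands on a write, so it is already covered by this form.) By minimality, each thread contributes at most one $\po$-segment, so the initial edge and the first read-to-write edge cannot share a thread. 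If the initial edge is present, it can be folded into the first write; if it is absent, $m\ge 1$ unless $\Path$ is empty. The empty path gives $i=j$ and $\tid(\wt_i)=\tid(\wt_j)$, so both claims hold.

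For claim \ref{item:ww_nb_non_dec}, we bound each edge type. Every $\rf$-edge is non-decreasing by construction of $G.\rf$, since $\rf$ consists exactly of the index-preserving edges of \cref{lem:undecidability_no_skipping_pairs}. Every $\po$-edge in $\Path$ enters a non-bridge write, because $\Path$ contains no bridge events. By \cref{lem:po_strict_monotonic}\ref{item:po_non_decreasing}, each such edge (including a possible initial write-to-write edge) is non-decreasing. By \cref{lem:po_strict_monotonic}\ref{item:po_increasing}, each read-to-write $\po$-edge is in fact increasing. Chaining these inequalities gives $i\le j$.

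For claim \ref{item:ww_nb_inc}, suppose $\tid(\wt_i)\neq\tid(\wt_j)$. Then $\Path$ must contain at least one $\rf$-edge, because $\po$ alone never leaves a thread. Consider the last $\rf$-edge $\wt\LTo{\rf}\rd$ of $\Path$. The read $\rd$ is followed by at least one more edge or equals the endpoint, and the endpoint is a write, so $\rd\neq\wt_j$. Hence $\rd\LTo{\po}\wt'$ for some subsequent write $\wt'$ of $\Path$. This is a read-to-write $\po$-edge into a non-bridge write, and by \cref{lem:po_strict_monotonic}\ref{item:po_increasing} it strictly increases the index. Combined with the non-decreasing bounds on all other edges, this gives $i<j$.

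The main point to check is that the endpoints behave as claimed. Specifically, the path must not end with an $\rf$-edge (impossible, since $\rf$ ends in a read and $\wt_j$ is a write), and no $\po$-edge of $\Path$ may enter a read whose index could drop, as with the read-to-read edges of \cref{lem:decreasing_r_r_po}. In an alternating path every $\po$-edge goes from a read or the start event to a write, so decreasing read-to-read edges never appear, and \cref{lem:po_strict_monotonic} suffices throughout.
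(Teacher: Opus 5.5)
Your proof is correct and follows essentially the same route as the paper. The shared steps are: $\rf$-edges preserve indices by construction; every $\po$-edge on $\Path$ enters a non-bridge write, so \cref{lem:po_strict_monotonic} makes it non-decreasing; and when $\tid(\wt_i)\neq\tid(\wt_j)$, some $\rf$-edge must be followed by a read-to-write $\po$-edge, which is strictly increasing.
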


\begin{restatable}{lemma}{lemrrbridgemonotonic}\label{lem:rr_bridge_monotonic}
Every $\hb$-edge $\rd_i\LTo{\hb}\rd_j$ such that $\lloc(\rd_i)=\lloc(\rd_j)$ and $\tid(\rd_i),\tid(\rd_j) \in \{\tx, \txp, \ty, \typ\}$, is such that $i<j$. 
\end{restatable}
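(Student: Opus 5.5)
We prove \cref{lem:rr_bridge_monotonic} by taking a minimal $\hb$-path from $\rd_i$ to $\rd_j$ and checking that the index strictly increases along it.

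\textbf{Setup.} Consider an $\hb$-edge $\rd_i\LTo{\hb}\rd_j$ with $\lloc(\rd_i)=\lloc(\rd_j)$ and both threads in $\{\tx,\txp,\ty,\typ\}$. By \cref{lem:hb_minimal} there is a minimal alternating path $\Path\colon \rd_i\LPath{\hb}\rd_j$. These reads are non-bridge events. A shared location is either an $x$-location, so both threads lie in $\ThreadDom_x$, or a $y$-location, so both lie in $\ThreadDom_y$. Hence \cref{lem:write_to_write_no_bridge} shows that $\Path$ contains no bridge events.

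\textbf{Case 1: $\Path$ is a single $\po$-edge.} Then $\rd_i$ and $\rd_j$ are same-thread, same-location reads. Since $i$ and $j$ count reads of that thread on that location, $i<j$ follows directly.

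\textbf{Case 2: $\Path$ leaves the thread.} Write
\[
\Path\colon \rd_i \LTo{\po} \wt_{a}\LPath{\hb}\wt_{b}\LTo{\rf}\rd_{b}\LTo{\po^?}\rd_j .
\]
The first step must be a $\po$-edge to a write, since a read has no outgoing $\rf$-edge.
\begin{itemize}
\item By \cref{lem:po_strict_monotonic}\ref{item:po_increasing}, $i<a$.
\item \cref{lem:write_to_write_no_bridge_increasing}\ref{item:ww_nb_non_dec} applies to the bridge-free minimal subpath $\wt_a\LPath{\hb}\wt_b$ (possibly $a=b$) and gives $a\le b$.
\item The $\rf$-edge gives index equality $b$ to $b$ by construction.
\end{itemize}

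\textbf{Main obstacle.} The final $\po^?$-edge $\rd_b\LTo{\po}\rd_j$ may be decreasing. By \cref{lem:decreasing_r_r_po}, this happens only when $\rd_b$ reads from a guesser thread and $j=b-1$, so naively we only get $j\ge b-1\ge a-1\ge i$, which is not strict. I would close the gap in one of two ways.

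\textbf{Route (a).} Show that the writer $\wt_b$ cannot be on a guesser thread. A guesser thread has no $\hb$-path back from the verifier pair, because the topology in \cref{fig:pcp_top} has no $\rf$-edges from $\{\tx,\txp,\ty,\typ\}$ into guesser threads. Since $\Path$ starts at a verifier thread, every event on it lies in a verifier thread. Therefore $\rd_b$ reads from a verifier thread, the edge $\rd_b\LTo{\po}\rd_j$ is non-decreasing, and $j\ge b\ge a>i$.

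\textbf{Route (b).} If the topology argument has a subtle flaw, fall back on an enumeration of the pairs in \cref{eq:decreasing_r_r_po}, using that their $\rd_j$ is on a primed location that differs from the unprimed location of $\rd_i$. Route (a) looks clean, though, so I expect the main work to be justifying that no path from a verifier read reaches a guesser event.
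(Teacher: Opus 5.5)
Your proposal is correct and is essentially the paper's proof. You take a minimal $\hb$-path, get $i<a$ from \cref{lem:po_strict_monotonic}, and note that the middle portion is non-decreasing. Your Route (a) is then exactly how the paper closes the gap: the path cannot leave $\{\tx,\txp\}$ or $\{\ty,\typ\}$, so $\rd_b$ reads from a verifier thread and the final $\po$-edge is non-decreasing by \cref{lem:decreasing_r_r_po}. The only cosmetic difference is that you obtain bridge-freeness and $a\le b$ via \cref{lem:write_to_write_no_bridge,lem:write_to_write_no_bridge_increasing}, while the paper reads both directly off the path being confined to a verifier pair.
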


\begin{restatable}{lemma}{lemwwdec}\label{lem:ww_dec_2}
Every $\hb$-edge $\wt_i\LTo{\hb}\wt_j$ where $\tuple{\wt_i,\wt_j}$ is one of the following pairs
\begin{align*}
\footnotesize
&\tuple{\wt_i(\taxp, \vxap), \wt_j(\tax, \vxa)}, \quad \tuple{\wt_i(\tayp, \vyap), \wt_j(\tay, \vya)}\\
&\tuple{\wt_i(\tbxp, \vxbp), \wt_j(\tbx, \vxb)}, \quad \tuple{\wt_i(\tbyp, \vybp), \wt_j(\tby, \vyb)}
\end{align*}
is such that $i<j-1$. 
\end{restatable}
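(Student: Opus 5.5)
The plan is to reduce to a single concrete path shape and then count indices. I treat the pair $\tuple{\wt_i(\taxp, \vxap), \wt_j(\tax, \vxa)}$ in detail. The other three pairs are obtained by renaming: $\alpha\leftrightarrow\betad$ for the $\betad$-pairs, and, for the $y$-pairs, using $\vzay$ and the thread $\tay$ in place of $\vzax$ and $\tax$.

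\emph{Step 1: pick a good path.} Suppose $\wt_i(\taxp,\vxap)\LTo{\hb}\wt_j(\tax,\vxa)$. By \cref{lem:hb_minimal} there is a minimal alternating path $\Path\colon \wt_i\LPath{\hb}\wt_j$. Both endpoints are non-bridge events on threads of $\ThreadDom_x$. Hence \cref{lem:write_to_write_no_bridge} applies, and $\Path$ contains no bridge events.

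\emph{Step 2: show $\Path$ stays in $\{\taxp,\tax\}$.} I use the communication topology of \cref{fig:pcp_top}, restricted to non-bridge locations. Inside $\ThreadDom_x$, the only $\rf$-edges entering $\tax$ are on $\vzax$, and these come from $\taxp$. The remaining $\rf$-edges either go from guesser threads to verifier threads, or stay within $\{\tx,\txp\}$; the same holds with $\betad$ in place of $\alpha$. Consequently no $\hb$-path leaves the verifier threads and comes back to a guesser thread. Likewise, no $\hb$-path passes from $\{\taxp,\tax\}$ to $\{\tbxp,\tbx\}$ without bridge locations. So $\Path$ only visits $\taxp$ and $\tax$. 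Minimality then forces exactly one $\po$-segment per thread:
\[
\wt_i(\taxp,\vxap)\LTo{\po}\wt_k(\taxp,\vzax)\LTo{\rf}\rd_k(\tax,\vzax)\LTo{\po}\wt_j(\tax,\vxa).
\]
Here the $\rf$-edge has matching indices because $G.\rf$ is exactly the edge set of \cref{item:no_skip_0}.

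\emph{Step 3: count indices.} In $\taxp$, the $m$-th loop iteration writes $\wt_m(\vxap)$ and then $\wt_m(\vzax)$. Therefore $\wt_i(\taxp,\vxap)\LTo{\po}\wt_k(\taxp,\vzax)$ implies $k\ge i$.

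In $\tax$, the reads of $\vzax$ are skipped in the very first inner iteration, as that iteration is guarded by $\firstit$. So $\rd_k(\tax,\vzax)$ is executed in the inner iteration that writes $\wt_{k+1}(\tax,\vxa)$, and it comes after that write. The first write to $\vxa$ that is $\po$-after $\rd_k(\tax,\vzax)$ is therefore $\wt_{k+2}(\tax,\vxa)$. Any $\vzaxy$ and $\vzaxyp$ bridge events lying between these events do not change the count. Hence $j\ge k+2\ge i+2$, that is, $i<j-1$.

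The main obstacle is Step 2: justifying rigorously that a minimal bridge-free path cannot detour through $\tx$, $\txp$, or the $\betad$-threads. I would do this by listing, for each thread, which threads it reads from on non-bridge locations, and observing that $\{\taxp,\tax\}$ has no incoming $\rf$-edges from outside itself. The index bookkeeping in Step 3 is then routine inspection of \cref{fig:pcp_code}.
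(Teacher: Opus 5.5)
Your proof is correct and follows the paper's argument. You take a minimal path via \cref{lem:hb_minimal}, get bridge-freeness from \cref{lem:write_to_write_no_bridge}, confine the path to $\{\tax,\taxp\}$ using the communication topology, and then count iterations to see that the first $\vxa$-write after the $k$-th read of $\vzax$ by $\tax$ has index at least $k+2$. The only minor difference is that you use minimality to pin the path down to exactly one $\rf$-edge on $\vzax$, whereas the paper just notes that the rest of the path is non-decreasing; both routes give $i\le k<j-1$.
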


We now arrive at the two general monotonicity properties of $\hb$ between same-location events:~$\hb$ is increasing between writes (\cref{lem:hb_conflicting_writes_strictly_monotonic}), and non-decreasing from a write to a read (\cref{lem:hb_c_monotonic}).

\begin{restatable}{lemma}{lemhbconflictingwritesstrictlymonotonic}\label{lem:hb_conflicting_writes_strictly_monotonic}
For any location $u$, we have that $\identity{\WriteDom};\hb_u;\identity{\WriteDom}$ is increasing.
\end{restatable}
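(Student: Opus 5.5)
We need to show: for writes $\wt_i,\wt_j$ on the same location $u$ with $\wt_i\LTo{\hb}\wt_j$, we have $i<j$. The plan is a case split on which location $u$ is and which threads write it, since every location is written either by a single thread or by exactly two threads (the pairs listed in \cref{eq:undecidability_mo_edges}).

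\textbf{Single-writer locations.} If $u$ is written by only one thread $t$, then $\wt_i,\wt_j$ are both in $t$. An $\hb$-edge between two events of the same thread must agree with $\po$, because otherwise $\hb$ would contain a cycle, contradicting \cref{lem:hbirr}. So $\wt_i\LTo{\po}\wt_j$, and the indices count writes of $t$ on $u$ in $\po$-order, which gives $i<j$. The same argument covers any pair of same-thread writes on a two-writer location. This disposes of all bridge locations and all $z$-locations.

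\textbf{Two-writer locations, different threads.} The remaining locations are $\vxa,\vxb,\vxap,\vxbp$ and their $y$-analogues. There, $\tid(\wt_i)\neq\tid(\wt_j)$, and both threads lie in $\ThreadDom_x$, respectively $\ThreadDom_y$. None of these events is a bridge event. By \cref{lem:hb_minimal} we take a minimal path $\Path\colon\wt_i\LPath{\hb}\wt_j$. By \cref{lem:write_to_write_no_bridge} this path contains no bridge events. Then \cref{lem:write_to_write_no_bridge_increasing}\ref{item:ww_nb_inc} applies directly and yields $i<j$.

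\textbf{Main obstacle.} The real content sits in \cref{lem:write_to_write_no_bridge,lem:write_to_write_no_bridge_increasing}, which I am assuming. The delicate point in this proof is making sure their hypotheses hold. Both endpoints must be non-bridge events in the same thread set, which holds because every two-writer location is an $x$- or $y$-location shared by a guesser/verifier pair within one set. I also need that the two-writer locations are exactly those in \cref{eq:undecidability_mo_edges}, which I verify by inspecting \cref{fig:pcp_code}. If the cited lemmas were not available in that generality, I would fall back on arguing that any alternating path avoiding bridges has non-decreasing $\rf$-edges and non-decreasing $\po$-edges into writes by \cref{lem:po_strict_monotonic}\ref{item:po_non_decreasing}. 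Strict increase would then come from the fact that a path changing threads passes through a read followed by a $\po$-edge into a write, which is increasing by \cref{lem:po_strict_monotonic}\ref{item:po_increasing}.
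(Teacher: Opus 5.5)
Your proposal is correct and follows the paper's proof step for step. The same-thread case is handled by irreflexivity of $\hb$ (\cref{lem:hbirr}), which already covers single-writer locations. The cross-thread case forces $u$ to be one of the eight shared non-bridge $x$/$y$-locations, and then goes through \cref{lem:hb_minimal}, \cref{lem:write_to_write_no_bridge}, and item (ii) of \cref{lem:write_to_write_no_bridge_increasing}, exactly as the paper does.
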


The proof relies on the fact that for any two write events $\wt_i$, $\wt_j$ with $\wt_i\LTo{\hb_u}\wt_j$, either $\tid(\wt_i)=\tid(\wt_j)$ or $\tid(\wt_i)\neq \tid(\wt_j)$.
If $\tid(\wt_i)=\tid(\wt_j)$,  \cref{lem:hbirr} implies that $\wt_i\LTo{\po}\wt_j$ and thus $i< j$.
Otherwise, a case analysis on $u$ shows that either $\tid(\wt_i),\tid(\wt_j)\in \ThreadDom_x$ or $\tid(\wt_i),\tid(\wt_j)\in \ThreadDom_y$.
Then, \cref{lem:hb_minimal} states that there exists a minimal $\hb$-path $\Path\colon\wt_i\LPath{\hb}\wt_j$, \cref{lem:write_to_write_no_bridge} implies that $\Path$ does not contain bridge events, and, finally, \cref{lem:write_to_write_no_bridge_increasing} implies that $\Path$ is increasing.

\begin{restatable}{lemma}{lemhbcmonotonic}\label{lem:hb_c_monotonic}
For any location $u$, we have that $\identity{\WriteDom};\hb_u;\identity{\ReadDom};$ is non-decreasing.
\end{restatable}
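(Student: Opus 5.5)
Fix a location $u$ and an edge $\wt_i\LTo{\hb}\rd_j$ with $\lloc(\wt_i)=\lloc(\rd_j)=u$; the goal is $i\leq j$. I will reduce to \cref{lem:hb_conflicting_writes_strictly_monotonic} by examining the last step into $\rd_j$. By \cref{lem:hb_minimal} there is a minimal alternating path $\Path\colon\wt_i\LPath{\hb}\rd_j$. Its final edge into $\rd_j$ is either an $\rf$-edge or a $\po$-edge, and I split into these cases.

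\emph{Case 1: the last edge is $\wt'\LTo{\rf}\rd_j$.} Here $\wt'$ is a write on $u$ and, by construction of $\rf$, $\wt'$ has index exactly $j$. Either $\wt'=\wt_i$, giving $i=j$, or $\wt_i\LTo{\hb}\wt'$ with both writes on $u$. In the latter case \cref{lem:hb_conflicting_writes_strictly_monotonic} gives $i<j$.

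\emph{Case 2: the last edge is $e\LTo{\po}\rd_j$.} Because $\Path$ is alternating, there is some $\rd'$ on the same thread as $\rd_j$ with either $\rd'=e$ or $\rd'\LTo{\po}e$. First suppose $\tid(\rd_j)$ writes to $u$. Then $u$ is one of the verifier-shared locations $\vxa,\vxb,\vya,\vyb,\vxap,\ldots$, and $\rd_j$ lies on a verifier thread ($\txp$ reads $\vxa$, $\tx$ reads $\vxap$, and so on). Tracing the thread's code, each read on $u$ with index $j$ is $\po$-preceded by its own thread's write on $u$ with index $j$, and by no later write on $u$. The useful predecessor is therefore the entry point of $\Path$ into the thread, namely the read $\rd'$ that receives the last $\rf$-edge. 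I then split on the location of $\rd'$:
\begin{compactitem}
\item If $\lloc(\rd')=u$, then the path up to $\rd'$ falls under Case 1, so its index is at least $i$. \cref{lem:rr_bridge_monotonic} then gives $\mathrm{idx}(\rd')<j$.
\item Otherwise, $\rd'$ reads from a write $\wt''$ on another location, and $\wt_i\LTo{\hb}\wt''$. Here I use the index relations among a thread's events: \cref{lem:po_strict_monotonic}, \cref{lem:decreasing_r_r_po}, and \cref{lem:write_to_write_no_bridge_increasing}, which applies to $\wt_i\LPath{\hb}\wt''$ since by \cref{lem:write_to_write_no_bridge} that path is bridge-free. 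Together these give $i\leq \mathrm{idx}(\wt'')\leq \mathrm{idx}(\rd')\leq j$. The possible decrease of one in \cref{lem:decreasing_r_r_po} only occurs after reads from guesser threads, which are strictly later in the thread's write ordering, and this absorbs the slack.
\end{compactitem}

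\textbf{Main obstacle.} The delicate subcase is when $\rd_j$ sits on a thread that does not write $u$, for example $\tx$ reading $\vxa$ via a path entering from $\txp$. There the one-step decreases of \cref{lem:decreasing_r_r_po}, and the bridge events on $\tax/\tbx$, can break naive monotonicity. For bridges I would apply \cref{lem:write_to_write_no_bridge} after checking that $\tid(\wt_i)$ and $\tid(\rd_j)$ lie in the same set $\ThreadDom_x$ or $\ThreadDom_y$; this holds for every $u$ by inspecting \cref{fig:pcp_top}. For the guesser-to-verifier writes, such as $\wt_i(\taxp,\vxap)\LTo{\hb}\rd_j(\tx,\vxap)$ entering $\tx$ through a read of $\vxa$, I would use \cref{lem:ww_dec_2}. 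Its gap $i<j-1$ compensates exactly for the $-1$ drop of the read-to-read $\po$-edge.
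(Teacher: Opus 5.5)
The parts you carry out are sound. That covers Case 1, your first bullet ($\lloc(\rd')=u$), which works for every thread and not only those writing $u$, and your use of \cref{lem:ww_dec_2} for $\rd_j(\tx,\vxap)$ entered through a read of $\vxa$. The paper uses the weaker strict inequality of \cref{item:ww_nb_inc} of \cref{lem:write_to_write_no_bridge_increasing} there, which already suffices.

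\textbf{The gap: bridge locations.} You claim that $\tid(\wt_i)$ and $\tid(\rd_j)$ lie in a common $\ThreadDom_x$ or $\ThreadDom_y$ for every $u$, so that \cref{lem:write_to_write_no_bridge} removes bridge events from $\Path$. This is false for $u\in\{\vzaxy,\vzaxyp,\vzbxy,\vzbxyp\}$; for instance, $\vzaxy$ is written only by $\tax\in\ThreadDom_x$ and read only by $\tay\in\ThreadDom_y$. In addition, \cref{lem:write_to_write_no_bridge} presupposes non-bridge endpoints, while here $\wt_i$ and $\rd_j$ are themselves bridge events, whose incoming $\po$-edges can drop the index arbitrarily. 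Nothing in your argument controls this case, so it needs its own argument. The paper's is as follows. Every $\hb$-path from $\tax$ to $\tay$ must traverse some $\wt_{\ell'}(\tax,\vzaxy)\LTo{\rf}\rd_{\ell'}(\tay,\vzaxy)$. Minimality then forces $\wt_i\LPath{\hb^?}\wt_{\ell'}$ and $\rd_{\ell'}\LPath{\hb^?}\rd_j$ to be empty or single same-thread, same-location $\po$-edges, giving $i\le\ell'\le j$. Alternatively, you could show via minimality that $\lloc(\rd')=u$ is forced and then reuse your first bullet. For example, reaching $\tayp$ from $\tax$ requires passing through $\tay$, which would put two $\po$-edges on $\tay$.

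\textbf{The case split.} Your split on whether $\tid(\rd_j)$ writes $u$ is mislabeled, and it leaves the hard half only sketched.
\begin{itemize}
\item Your examples for the writing case are exactly non-writing ones: $\txp$ does not write $\vxa$, and $\tx$ does not write $\vxap$.
\item Conversely, $\tx$ reading $\vxa$, which you call the delicate non-writing case, is a writing case where no drop can occur. By \cref{lem:decreasing_r_r_po}, decreasing read-to-read edges only end at primed locations.
\item The non-writing case also contains guesser-internal locations such as $\vzax$ read by $\tax$, which you never mention.
\end{itemize}

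The split that works is on $\lloc(\rd')$ versus $u$. When they differ and $u$ is non-bridge, $\Path$ is bridge-free, so $\tid(\rd_j)$ reads two non-bridge locations and must be a verifier thread. If $\tid(\wt_i)$ is a verifier, then $\rd'$ reads from a verifier and the last $\po$-edge is non-decreasing. If $\tid(\wt_i)$ is a guesser, split on whether the guesser-to-verifier $\rf$-edge that $\Path$ crosses is on $u$.
\begin{itemize}
\item If it is on $u$, \cref{lem:hb_conflicting_writes_strictly_monotonic} and \cref{lem:rr_bridge_monotonic} apply.
\item If it is not, the change of thread gives $i<\ell'$ via \cref{item:ww_nb_inc} of \cref{lem:write_to_write_no_bridge_increasing}, or $i<\ell'-1$ via \cref{lem:ww_dec_2}. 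Either absorbs the final drop of at most one.
\end{itemize}
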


The proof proceeds by a case analysis on the shape of a minimal $\hb$-path $\Path\colon\wt_i\LPath{\hb}\rd_j$; most cases follow directly from \cref{lem:hb_conflicting_writes_strictly_monotonic} and the structure of each thread.
The most intricate case is when $\tid(\wt_i)$ is a guesser thread, $\tid(\rd_j)$ is a verifier thread, and $\Path$ ends in a $\po$-edge. 
One example of this case, present in \cref{FIG:RAEX}, is $\wt_1(\taxp, \vxap)\LTo{\hb}\rd_2(\tx, \vxap)$ and the minimal $\hb$-path
\[
\wt_1(\taxp, \vxap)\LTo{\po}\wt_1(\taxp, \vzax)\LTo{\rf}\rd_1(\tax, \vzax)\LTo{\po}\wt_3(\tax, \vxa)\LTo{\rf}\rd_3(\tx, \vxa)\LTo{\po}\rd_2(\tx, \vxap).
\]
The difficulty stems from the fact that, as shown in the example, the last $\po$-edge may be decreasing. 
Nevertheless, notice that
\begin{enumerate*}[label=(\roman*)]
\item\label{item:hb_c_monotonic1} this last $\po$-edge of $\Path$ decreases by $1$,
\item\label{item:hb_c_monotonic2} it is preceded by the subpath $\Path'\colon \wt_1(\taxp, \vxap)\LTo{\po}\wt_1(\taxp, \vzax)\LTo{\rf}\rd_1(\tax, \vzax)\LTo{\po}\wt_3(\tax, \vxa)$ which is increasing, and
\item\label{item:hb_c_monotonic3} $\Path$ does not contain any other decreasing edges,
\end{enumerate*}
thereby $\Path$ is non-decreasing overall.
A careful analysis shows that this pattern is not accidental.
For \cref{item:hb_c_monotonic1}, the last decreasing $\po$-edge is always of the form of \cref{eq:decreasing_r_r_po}, and thus can only decrease by $1$, as guaranteed by \cref{lem:decreasing_r_r_po}.
For \cref{item:hb_c_monotonic2}, we prove that $\Path$ must have a subpath where \cref{item:ww_nb_inc} of \cref{lem:write_to_write_no_bridge_increasing} applies, and thus is increasing (like $\Path'$ in our example above).
Finally, \cref{item:hb_c_monotonic3} follows from the facts that $\Path$ is bridge-free due to \cref{lem:write_to_write_no_bridge}, read-to-write $\po$-edges between non-bridge events are increasing due to \cref{lem:po_strict_monotonic}, and $\rf$-edges are non-decreasing.

\Paragraph{Proof of \cref{lem:undecidability_completeness}.}
We are now ready to argue about the consistency of $G$. 
\cref{lem:hbirr} already establishes \ref{eq:hbirr}, so it remains to argue about \ref{eq:wc} and  \ref{eq:rc}. 

\Paragraph{\ref{eq:wc}.}
Consider any $\mo$-edge $\wt_i\LTo{\mo}\wt_j$, and by construction, we have $i\leq j$.
By \cref{lem:hb_conflicting_writes_strictly_monotonic}, we have that $\wt_j \nLTo{\hb} \wt_i$.
Thus, $\mo;\hb$ is irreflexive, as desired.

\Paragraph{\ref{eq:rc}.}
Consider three events $\wt_i$, $\rd_i$, and $\wt_j$ where $\lloc(\wt_i)=\lloc(\rd_i)=\lloc(\wt_j)$. 
We argue that if $\wt_i\LTo{\rf} \rd_i$ and $\wt_i\LTo{\mo} \wt_j$, then $\wt_j\nLTo{\hb} \rd_i$;
therefore, no triple of events can constitute a \ref{eq:rc} violation. 
By construction, $\wt_i\LTo{\mo} \wt_j$ implies $i\leq j$.
If $i<j$, then $\wt_j\nLTo{\hb} \rd_i$ follows from \cref{lem:hb_c_monotonic}.
We are thus left with the case where $i=j$.

In particular, since $i=j$ and $\wt_i\LTo{\mo}\wt_j$, by construction, $\tuple{\wt_i, \rd_i, \wt_j}$ must be one of the following:
\begin{align*}
\footnotesize
&\tuple{\wt_i(\tx, \vxa), \rd_i(\txp, \vxa), \wt_j(\tax, \vxa)}, \quad \tuple{\wt_i(\txp, \vxap), \rd_i(\tx, \vxap), \wt_j(\taxp, \vxap)}\\
&\tuple{\wt_i(\ty, \vya), \rd_i(\typ, \vya), \wt_j(\tay, \vya)}, \quad \tuple{\wt_i(\typ, \vyap), \rd_i(\ty, \vyap), \wt_j(\tayp, \vyap)}\\
&\tuple{\wt_i(\tx, \vxb), \rd_i(\txp, \vxb), \wt_j(\tbx, \vxb)}, \quad \tuple{\wt_i(\txp, \vxbp), \rd_i(\tx, \vxbp), \wt_j(\tbxp, \vxbp)}\\
&\tuple{\wt_i(\ty, \vyb), \rd_i(\typ, \vyb), \wt_j(\tby, \vyb)}, \quad \tuple{\wt_i(\typ, \vybp), \rd_i(\ty, \vybp), \wt_j(\tbyp, \vybp)}
\numberthis \label{list:wrw_conflict}
\end{align*}
We note two properties of each triplet $\tuple{\wt_i, \rd_i, \wt_j}$ of \cref{list:wrw_conflict}. 
First, either $\rd_i, \wt_j \in \ThreadDom_x$ or $\rd_i, \wt_j \in \ThreadDom_y$. 
Second, $\tid(\wt_i)$ and $\tid(\rd_i)$ are distinct verifier threads, and $\tid(\wt_j)$ is a guesser thread. 

Assume towards contradiction that $\wt_j \LTo{\hb}\rd_i$.
By \cref{lem:hb_minimal}, there must exist a minimal $\hb$-path $\Path\colon \wt_j \LPath{\hb}\rd_i$.
Since $\tid(\rd_i), \tid(\wt_j) \in \ThreadDom_x$ or $\tid(\rd_i), \tid(\wt_j) \in \ThreadDom_y$, \cref{lem:write_to_write_no_bridge} guarantees that $\Path$ does not contain bridge events. 
Next, observe that any $\hb$-path from $\wt_j$ to $\rd_i$ is of the form $\wt_j \LTo{\hb^?} \wt_{\ell} \LTo{\rf} \rd_{\ell} \LTo{\hb^?} \rd_i$, where $\tid(\wt_{\ell})$ is a guesser thread  and $\tid(\rd_{\ell})$ is a verifier thread. 
Let $u=\lloc(\wt_i)=\lloc(\rd_i)=\lloc(\wt_j)$.
We consider the two cases, depending on whether $\lloc(\wt_{\ell})=u$.

\SubParagraph{Case $\lloc(\wt_{\ell}) = u$.} 
If $\rd_i = \rd_\ell$, then $\wt_i = \wt_{\ell}$.
Since $\wt_i \neq \wt_j$, \cref{lem:hb_conflicting_writes_strictly_monotonic} and $\wt_j\LTo{\hb}\wt_i$ implies that $j<i$, which contradicts $j=i$.
Otherwise, since $\lloc(\rd_\ell)=\lloc(\rd_i)$, \cref{lem:rr_bridge_monotonic} implies that $\ell<i$.
We also have that $j\leq \ell$, since either $\wt_j=\wt_\ell$ (so $j=\ell$), or 
 $\wt_j\neq \wt_\ell$, and \cref{lem:hb_conflicting_writes_strictly_monotonic} implies that $j<\ell$.
Combining the two inequalities, we arrive at $j<i$, contradicting that $j=i$.

\SubParagraph{Case $\lloc(\wt_{\ell}) \neq u$.}
This case is more involved.
First, note that there are eight potential locations for $\wt_{\ell}$, namely $\vxa$, $\vxap$, $\vya$, $\vyap$, $\vxb$, $\vxbp$, $\vyb$, and $\vybp$, since these are the locations that connect a guesser thread to verifier thread via an $\rf$-edge.
Since $\wt_j\LTo{\hb}\wt_\ell$, the location of $\wt_\ell$ is constrained by the thread topology in \cref{fig:pcp_top}.
The guesser threads $\tax,\taxp,\tay,$ and $\tayp$, which write to locations $\vxa,\vxap,\vya,$ and $\vyap$, have no $\hb$-paths to the guesser threads $\tbx,\tbxp,\tby,$ and $\tbyp$, which write to locations $\vxb,\vxbp,\vyb,$ and $\vybp$, and vice-versa.
Therefore, $u$ and $\lloc(\wt_\ell)$ belong to the same group, i.e., either $u, \lloc(\wt_\ell)\in \{\vxa,\vxap,\vya,\vyap\}$ or $u, \lloc(\wt_\ell)\in\{\vxb,\vxbp,\vyb,\vybp\}$.

Similarly, since $\rd_\ell\LTo{\hb}\rd_i$, the location of $\rd_\ell$ is constrained by the thread topology in \cref{fig:pcp_top}.
The verifier threads $\tx$ and $\txp$, which write to locations $\vxa,\vxap,\vxb$, and $\vxbp$, have no $\hb$-paths to the verifier threads in $\ty$ and $\typ$, which write to locations $\vya,\vyap,\vyb$, and $\vybp$, and vice-versa.
Therefore, $u$ and $\lloc(\rd_\ell)$ belong to the same group, i.e., either $u, \lloc(\rd_\ell)\in \{\vxa,\vxap,\vxb,\vxbp\}$ or $u, \lloc(\rd_\ell)\in\{\vya,\vyap,\vyb,\vybp\}$.
Since $\lloc(\wt_\ell)=\lloc(\rd_\ell)$, this constraint also applies to $\wt_\ell$.
These two constraints together imply that both $u$ and $\lloc(\wt_\ell)$ are contained in exactly one of $\{\vxa,\vxap\}$, $\{\vya,\vyap\}$, $\{\vxb,\vxbp\}$, or $\{\vyb,\vybp\}$.
Since $\lloc(\wt_{\ell}) \neq u$, the pair $\tuple{\wt_j, \wt_{\ell}}$ must be one of the following:
\begin{align*}
\footnotesize
\tuple{\wt_j(\tax, \vxa), \wt_{\ell}(\taxp, \vxap)}, \hspace{0.25em} \tuple{\wt_j(\tay, \vya), \wt_{\ell}(\tayp, \vyap)}, \hspace{0.25em} &\tuple{\wt_j(\taxp, \vxap), \wt_{\ell}(\tax, \vxa)}, \hspace{0.25em} \tuple{\wt_j(\tayp, \vyap), \wt_{\ell}(\tay, \vya)}\\
\tuple{\wt_j(\tbx, \vxb), \wt_{\ell}(\tbxp, \vxbp)}, \hspace{0.25em} \tuple{\wt_j(\tby, \vyb), \wt_{\ell}(\tbyp, \vybp)}, \hspace{0.25em} &\tuple{\wt_j(\tbxp, \vxbp), \wt_{\ell}(\tbx, \vxb)}, \hspace{0.25em} \tuple{\wt_j(\tbyp, \vybp), \wt_{\ell}(\tby, \vyb)}\numberthis
\end{align*}
We split into two cases according to whether $\lloc(\wt_j)$ is the non-primed or primed location in this pair.
These correspond to the left and right columns of \cref{list:wrw_conflict}, respectively.

\begin{compactitem}
    \item \emph{Non-primed locations ($\vxa, \vya, \vxb$, and $\vyb$).} 
    We have $j < \ell$ due to \cref{lem:write_to_write_no_bridge_increasing}. 
    The path $\rd_{\ell} \LPath{\hb^?} \rd_i$ is composed of $\rf$-edges, read-to-write $\po$-edges and a single read-to-read $\po$-edge, ending at $\rd_i$. 
    The $\rf$-edges are non-decreasing by construction; 
    the read-to-write $\po$-edges are increasing due to \cref{lem:po_strict_monotonic}. 
    Finally, the final $\po$-edge cannot be any of the cases in \cref{lem:decreasing_r_r_po}, as these all end in primed locations, and $\lloc(\rd_i)$ is non-primed.
    Therefore, $\ell \leq i$. 
    Combining the two inequalities, we arrive at $j<i$, contradicting that $j=i$.

    \item \emph{Primed locations  ($\vxap, \vyap, \vxbp$, and $\vybp$).} 
    In these cases, $\tuple{\wt_j,\wt_\ell}$ falls into one of the cases listed in \cref{lem:ww_dec_2}.
    Therefore, $j<\ell-1$. 
    The path $\rd_{\ell} \LPath{\hb^?} \rd_i$ is composed of $\rf$-edges, read-to-write $\po$-edges and a single read-to-read $\po$-edge, ending at $\rd_i$. 
    The $\rf$-edges are non-decreasing by construction; 
    the read-to-write $\po$-edges are increasing due to \cref{lem:po_strict_monotonic}. 
    Finally, by \cref{lem:decreasing_r_r_po}, the final $\po$-edge is decreasing by at most one;
    overall, this yields $\ell - 1 \leq i$.
    Combining the two inequalities, we arrive at $j<i$, contradicting that $j=i$.
    
\end{compactitem}

\section{Decidability with Bounded Context Switches}\label{SEC:DECBCS}

In this section, we turn our attention to reachability under bounded context switches, and prove \cref{thm:bounded_context_switches}.
First, we consider the RMW-free fragment of Release/Acquire (\cref{SUBSEC:REDUCIBLE_TRACES,SUBSEC:IRREDUCIDBLE_REACHABILITY}), since the main proof challenges arise already in this setting.
Then, we extend our proof to handle a bounded number of RMWs (\cref{SUBSEC:DECIDABILITY_RMWS}).

\Paragraph{Notation on event sequences.}
Consider a sequence of events $\pi = \event_1, \event_2, \dots, \event_n$.
Given some event $\event$, we write $\event\in \pi$ to denote that $\event=\event_i$ for some $i\in \IntSet{n}$.
Given two events $\event_i, \event_j\in \pi$, we write $\event_i<_{\pi}\event_j$ to denote that $i<j$, and write $\leq_{\pi}$ for the reflexive closure of $<_{\pi}$.
Finally, given two events $\event_i, \event_j$ with $\event_i<_{\pi}\event_j$,  we define  $\In_{\pi}(\event_i,\event_j)=\{\event\in \pi \colon \event_i<_{\pi} \event\leq_{\pi}\event_j \}$ (note that containment is inclusive on the right endpoint only).

\Paragraph{Contexts.}
Consider a sequence of events $\pi = \event_1, \event_2, \dots, \event_n$.
Given a thread $t$, a \emph{$t$-run} of $\pi$ is a contiguous sub-sequence $\event_{i+1}, \event_{i+2}, \dots, \event_{i+j}$ of $\pi$ such that $\tid(\event_{i+\ell})=t$, for all $\ell\in \IntSet{j}$.
We simply say that $\event_{i+1}, \event_{i+2}, \dots, \event_{i+j}$ is a \emph{run} when $t$ is not important, or clear from the context.
We say that the run is \emph{maximal} if it is not a subsequence of another run of $\pi$.
Thus, $\pi$ can be uniquely decomposed into a sequence $\pi=\pi_1\cdot \pi_2\cdots \pi_{\ell}$ such that each $\pi_i$ is a maximal run.
In such a case, we let $\ctx{\pi}={\ell}$ be the \emph{number of contexts} of $\pi$.
A \emph{trace} is a pair $\LinearTrace=\tuple{G, (\pi_1, \pi_2,\dots, \pi_{\ell})}$ where 
\begin{enumerate*}[label*=(\roman*)]
\item $G$ is an execution graph,
\item $\pi = \pi_1\cdot \pi_2 \cdots \pi_{\ell}$ is a topological ordering of $G.\hb$, and
\item each $\pi_i$ is a maximal run of $\pi$.
\end{enumerate*}
Given an event $\event\in G.\Events$, we write $\ContextId_{\LinearTrace}(\event)$ for the unique $i\in \IntSet{\ell}$ such that $\event\in \pi_i$.

\Paragraph{Context-bounded reachability.}
Given a concurrent program $\ConcProg$ and a trace $\LinearTrace=\tuple{G, (\pi_1, \pi_2,\dots, \pi_{\ell})}$, we
say that $\LinearTrace$ is a trace of $\ConcProg$ if $G\in \ExecutionGraphsOf{\ConcProg}{\ramm}$.
The problem of \emph{context-bounded reachability} asks whether, given a concurrent program $\ConcProg$ and some $k\in \Nats$, there exists a trace $\LinearTrace=\tuple{G, (\pi_1, \pi_2,\dots, \pi_{\ell})}$ of $\ConcProg$, where $\ell\leq k$, that reaches the final state $\tuple{\Prog_{t_1}.\FinalState, \Prog_{t_2}.\FinalState, \dots, \Prog_{t_{\numThreads}}.\FinalState}$ of $\ConcProg$.

\subsection{Reducible Traces}\label{SUBSEC:REDUCIBLE_TRACES}

We first consider the RMW-free fragment of Release/Acquire.
The key insight is that, in the absence of RMWs, context-bounded reachability exhibits a small model property:~if the final state of the program is indeed reachable, then it can be reached by an execution with size bounded by an exponential in the size of the program and the number of context switches.

We prove that reachability has small models by showing that, given a trace $\LinearTrace=\tuple{G, (\pi_1, \pi_2,\dots, \pi_{\ell})}$ witnessing reachability, if some run $\pi_i$ is long enough, then it contains a segment that can be collapsed.
The challenge in this style of proofs is to identify, for each event $\event\in \pi_i$, a bounded summary of the run up to $\event$, which is sufficient to conclude that repeating summaries mark a collapsible segment. 
In the current section, we define our event summaries, introduce some additional conditions under which repeating summaries mark a collapsible segment, indicating that the trace is \emph{reducible}, and establish a bound on the size of irreducible traces.
In the next section we prove that reachability can be solved by only considering irreducible witnesses.

\Paragraph{Event summaries.}
Consider a program $\ConcProg$ and a trace $\LinearTrace=\tuple{G, (\pi_1, \dots, \pi_{\ell})}$ of $\ConcProg$.
Consider some event $\event\in G.\Events$, and let $t=\tid(\event)$ and $c=\ContextId_{\LinearTrace}(\event)$, for brevity.
Given some location $x\in \LocationDom$, we write $\LatestWrite_{\LinearTrace}(\event, x)$ for the maximal, wrt $\leq_{\pi_{c}}$, write $\wt(t,x)\in \pi_c$ such that $\wt(t,x)\leq_{\pi_{c}} \event$. 
If no such write exists, we let $\LatestWrite_{\LinearTrace}(\event,x)=\bot$.
We associate with $\event$ a \emph{summary} $\Summary_{\LinearTrace}(\event)=\tuple{\SummaryStates, \SummaryLastValue, \SummaryExternalRf}$, where each component is defined as follows.
\begin{compactitem}
\item $\SummaryStates\subseteq \Prog_{t}.\States$ is the set of states that $\Prog_{t}$ reaches by executing all events of thread $t$ up to (including) $\event$.
Formally, let $\CSei, \CSej,\dots, \event_n$ be the (unique) prefix of $G.\po^{t}$ up to $\event_n=\event$.
Then $q\in \SummaryStates$ iff $\llab(\CSei), \llab(\CSej),\dots, \llab(\event_n)$ reaches $q$ in $\Prog_{t}$.
\item $\SummaryLastValue\colon \LocationDom\to \ValueDom\cup \{ \bot \}$ is a map from each location $x$ to the value of the most recent write of thread $t$ to $x$ in $\pi_{c}$.
Formally, $\SummaryLastValue(x)=\val_w(\LatestWrite_{\LinearTrace}(\event,x))$, if $\LatestWrite_{\LinearTrace}(\event,x)\neq \bot$, and $\SummaryLastValue(x)=\bot$ otherwise.
\item $\SummaryExternalRf\subseteq \LocationDom$ is the set of locations that $t$ reads from a remote thread between its most recent local write on said location in $\pi_{c}$ and $\event$. 
Formally, $x\in \SummaryExternalRf$ iff $\LatestWrite_{\LinearTrace}(\event,x)\neq \bot$ and there exists a read $\rd(t,x)\in\In_{\pi_{c}}(\LatestWrite_{\LinearTrace}(\event,x), \event)$ such that $\LatestWrite_{\LinearTrace}(\event,x) \nLTo{\rf} \rd(t,x)$.
\end{compactitem}

\Paragraph{Reducible traces.}
Consider a trace  $\LinearTrace=\tuple{G, (\pi_1, \dots, \pi_{\ell})}$.
Two events $\CSei, \CSej\in G.\Events$ are called \emph{collapsible} if $\CSei <_\pi \CSej$ and, letting $c=\ContextId_{\LinearTrace}(\CSei)$, the following conditions hold:
\begin{enumerate*}[label=(\roman*)]
\item\label{item:cond_red_1} $\ContextId_{\LinearTrace}(\CSej)=c$,
\item\label{item:cond_red_2} $\Summary_{\LinearTrace}(\CSei)=\Summary_{\LinearTrace}(\CSej)$,
\item\label{item:cond_red_3} for every write $\wt(t,x)\in \In_{\pi_{c}}(\CSei, \CSej)$, if $\wt(t,x)\LTo{\rf}\event$, for some event $\event\in G.\Events$, then $\ContextId_{\LinearTrace}(\event)=c$, and 
\item\label{item:cond_red_4} for every $x\in \LocationDom$ and $\event \in (G.\Events \setminus G.\Events^{\tid(\CSei)})$, we have $\LatestWrite_{\LinearTrace}(\CSei,x)\LTo{\hb} \event$ iff $\LatestWrite_{\LinearTrace}(\CSej,x)\LTo{\hb} \event$.
\end{enumerate*}
Intuitively, $\CSei$ and $\CSej$ have the same context id and summary, 
there is no write between them that is observed by a later run, and
the latest write on each location reaches the same set of events in other threads via $\hb$.
We call $\LinearTrace$ \emph{reducible} if it contains a collapsible pair of events, and \emph{irreducible} otherwise.
The following lemma gives a bound on the size of irreducible traces.

\begin{restatable}{lemma}{lemmairreduciblesmall}\label{lem:irreducible_small}
For any concurrent program $\ConcProg$ and trace $\LinearTrace=\tuple{G, (\pi_1, \dots, \pi_{\ell})}$ of $\ConcProg$,
if $\LinearTrace$ is irreducible then $|G.\Events|=O(2^{\ell\cdot |\ConcProg.\States|})$.
\end{restatable}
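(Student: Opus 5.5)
The plan is a pigeonhole argument applied run by run. Since $|G.\Events|=\sum_{c\in\IntSet{\ell}}|\pi_c|$, it suffices to bound the length of each maximal run $\pi_c$ by a quantity $B$ that is exponential in $\ell\cdot|\ConcProg.\States|$; multiplying by $\ell$ keeps the total within $O(2^{\ell\cdot |\ConcProg.\States|})$, absorbing polynomial factors as the asymptotic statement allows. Fix a run $\pi_c$ of thread $t$. Irreducibility says no two events $\CSei<_\pi\CSej$ in $\pi_c$ satisfy conditions \ref{item:cond_red_1}--\ref{item:cond_red_4}. Condition \ref{item:cond_red_1} holds automatically within $\pi_c$. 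So I would bound the number of distinct "signatures" an event can carry with respect to conditions \ref{item:cond_red_2} and \ref{item:cond_red_4}, and separately control condition \ref{item:cond_red_3}.

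\emph{Step 1: counting signatures.} The summary $\Summary_{\LinearTrace}(\event)=\tuple{\SummaryStates,\SummaryLastValue,\SummaryExternalRf}$ ranges over at most $2^{|\Prog_t.\States|}\cdot(|\ValueDom|+1)^{|\LocationDom|}\cdot 2^{|\LocationDom|}$ values. For condition \ref{item:cond_red_4}, I would not store the full set of $\hb$-successors of $\LatestWrite_{\LinearTrace}(\event,x)$. Instead I would observe that this set is determined by which later runs $\pi_{c'}$, with $c'>c$, contain an event that the write reaches. Within each such run the reached events form a suffix, because $\po$ composed with $\hb$ stays in $\hb$. These suffixes are nested as the write moves later in $\pi_c$, since a later write of $t$ reaches fewer events. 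Hence, for fixed $x$, the map $\event\mapsto\{\event'\colon \LatestWrite_{\LinearTrace}(\event,x)\LTo{\hb}\event'\}$ is monotone non-increasing along $\pi_c$. Along $\pi_c$ it therefore changes value at most as many times as the reached sets can strictly shrink. I would bound the number of such changes per location by the number of distinct "first-reached" events across other runs, which is itself bounded inductively by the sizes of the later runs. This inductive dependence is the reason for the factor $\ell$ in the exponent.

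\emph{Step 2: handling condition \ref{item:cond_red_3}.} Writes in $\pi_c$ that are read by events in other contexts act as cut points. Each such externally-read write is an $\rf$-source for some read in another run, so the number of cut points in $\pi_c$ is bounded by the total number of reads outside $\pi_c$. I would combine the two bounds by processing runs from the last one backwards, giving a recurrence of the form $|\pi_c|\le S\cdot(1+\sum_{c'>c}|\pi_{c'}|)$, where $S$ is the number of summaries. Within each segment between consecutive cut points and consecutive changes of the $\hb$-reach signature, any two events with equal summaries would be collapsible, so each segment has length at most $S$. Unrolling the recurrence gives $|\pi_c|\le (S+1)^{\ell}$, that is, $2^{O(\ell\cdot|\ConcProg.\States|)}$ once the location and value factors are treated as part of the program size.

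\emph{Main obstacle.} I expect the hardest part to be Step 1 for condition \ref{item:cond_red_4}. Specifically, one must justify rigorously that the $\hb$-reach set of $\LatestWrite_{\LinearTrace}(\event,x)$ changes only boundedly often along $\pi_c$, with the bound expressed in terms of later runs. The difficulty is that $\hb$ reach from $\pi_c$ into earlier runs is impossible, since $\pi$ is a topological order, but reach into later runs depends on $\rf$-edges leaving $\pi_c$. My first attempt will use exactly those $\rf$-edges, the same ones counted in Step 2: between consecutive externally-read writes of $t$, all writes have identical $\hb$-reach outside $t$, because reach outside $t$ must pass through some $\rf$-edge out of thread $t$ at or after the write. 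This would merge Steps 1 and 2 into a single count of cut points.
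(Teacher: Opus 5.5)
Your proposal is correct and is essentially the paper's proof. The shared ingredients are a pigeonhole on summaries within a run, the writes of $\pi_c$ read from later contexts (at most $\sum_{c'>c}|\pi_{c'}|$ of them) as the only obstructions to conditions (iii) and (iv), and a recurrence over later runs unrolled across the $\ell$ contexts; the paper only phrases the count dually, with fixed blocks of $|\SummaryDom|+1$ events each forced to contain an impeded equal-summary pair, where you place boundaries.

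When you merge Steps 1 and 2, segments between consecutive cut points alone do not secure condition (iv), because $\LatestWrite_{\LinearTrace}(\cdot,x)$ can lag behind a cut point until the next write to $x$. Each cut point must therefore be charged once per location, which gives the constant factor $|\LocationDom|+1$ (the paper's bound on pairs impeded by one event) missing from your recurrence, though it does not affect the asymptotics.
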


\subsection{A Small Model Property}\label{SUBSEC:IRREDUCIDBLE_REACHABILITY}

The small model property of reachability in the RMW-free fragment follows from \cref{lem:irreducible_small} and the following lemma, proved in this section. 

\begin{restatable}{lemma}{lemirreduciblereachability}\label{lem:irreducible_reachability}
If there is a trace $\tuple{G, (\pi_1, \pi_2 \dots, \pi_{\ell})}$ of $\ConcProg$ reaching a state $\tuple{p_1, p_2, \dots, p_{\numThreads}}$, 
then there exists an irreducible trace with the same property.
\end{restatable}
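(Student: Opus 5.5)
We argue by induction on the number of events. Suppose $\LinearTrace=\tuple{G,(\pi_1,\dots,\pi_\ell)}$ reaches $\tuple{p_1,\dots,p_{\numThreads}}$. If $\LinearTrace$ is irreducible we are done. Otherwise we fix a collapsible pair $\CSei<_{\pi}\CSej$ in some run $\pi_c$ of thread $t$. We then build a strictly smaller trace $\LinearTrace'$ that reaches the same state and has at most $\ell$ contexts. The idea is to delete the segment $\In_{\pi_c}(\CSei,\CSej)$, i.e., the events of $t$ strictly after $\CSei$ up to and including $\CSej$. Since $|G.\Events|$ strictly decreases, repeating this terminates in an irreducible trace.

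The construction goes as follows. Let $G'$ be the restriction of $G$ to $G.\Events\setminus\In_{\pi_c}(\CSei,\CSej)$, and let $\pi'$ be $\pi$ with these events removed.
\begin{enumerate*}[label=(\alph*)]
\item \emph{Program words.} By condition \ref{item:cond_red_2}, the $\SummaryStates$ components agree. Hence the prefix of $t$'s word up to $\CSei$, followed by the suffix after $\CSej$, still reaches $p_t$. This requires choosing the run of $\Prog_t$ through a state in $\SummaryStates$ that is shared at $\CSei$ and $\CSej$, and we use the nondeterministic set semantics of $\SummaryStates$ for this.
\item \emph{Reads-from.} Any read outside the segment whose $\rf$-source lies inside the segment is, by condition \ref{item:cond_red_3}, in context $c$. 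Such a read lies in $\pi_c$ after $\CSej$, so it is a read of $t$ or, since runs are single-threaded, necessarily of $t$. We redirect it to the write with the same value and location: if it read the local latest write, the source becomes $\LatestWrite(\CSei,x)$, which carries the same value by equality of $\SummaryLastValue$. Reads inside $\pi_c$ after $\CSej$ that read remotely keep their source.
\item \emph{Modification order.} We take $\mo'$ to be $\mo$ restricted to the remaining writes.
\end{enumerate*}
Contexts do not increase, because $\pi'$ only shortens $\pi_c$ and $\pi'$ remains a topological order of $\hb'$.

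The main obstacle is Release/Acquire consistency of $G'$, especially \ref{eq:rc} and \ref{eq:wc}, because deleting events changes $\hb$. First, \ref{eq:hbirr} is preserved because $\pi'$ is a linearization of $\hb'$. For events outside thread $t$, we claim that $\hb'$ restricted to them is contained in $\hb$. Paths through the removed segment are replaced by paths starting from the corresponding $\LatestWrite(\CSei,x)$, and condition \ref{item:cond_red_4} ensures that the latest writes of $t$ at $\CSei$ and at $\CSej$ reach exactly the same foreign events.

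For reads of $t$ after $\CSej$ we then distinguish two cases. If the location is in $\SummaryExternalRf$, then the same external reads occurred before $\CSei$ as well, and we argue that the remote source is still $\mo$-maximal among $\hb'$-predecessors. If the location is not in $\SummaryExternalRf$, the read takes the local latest write, and coherence follows from the corresponding coherence in $G$ at $\CSej$.

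For reads of other threads, \ref{eq:rc} can only fail if a write becomes newly $\hb$-visible. This does not happen, since we only removed events and, by condition \ref{item:cond_red_4}, did not add foreign $\hb$-reachability.

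I expect the delicate step to be the $\SummaryExternalRf$ case for reads of $t$ after $\CSej$ whose source is remote. Here one must show that, in $G$, the remote source was already $\mo$-after $t$'s latest local write and unaffected by writes that $\CSej$ observed but $\CSei$ did not. I would attack this by combining the equality of $\SummaryExternalRf$ with the fact that all writes in the deleted segment are local to context $c$ by condition \ref{item:cond_red_3}.
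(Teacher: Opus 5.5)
There is a genuine gap in step (c): taking $\mo$ to be the restriction of $G.\mo$ does not give a consistent $G'$. The failure is in the redirected reads, which you dismiss by saying coherence "follows from the corresponding coherence in $G$ at $\CSej$". Coherence in $G$ only tells you that the foreign writes $\hb$-visible to such a read $\CSrj$ are $\mo$-before $\CSwj=\LatestWrite(\CSej,x)$. It does not make them $\mo$-before the new source $\CSwi=\LatestWrite(\CSei,x)$. A foreign write can sit $\mo$-between $\CSwi$ and $\CSwj$ and become $\hb$-visible to $t$ after $\CSwi$ through a read on a \emph{different} location, which $\SummaryExternalRf$ does not record.

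Here is a concrete counterexample. Thread $u$ runs first: $\wt'=\wt(u,x,5)$, then $\wt_y=\wt(u,y,1)$. Thread $t$ then runs $\CSwi=\wt(t,x,1)$, $\CSei=\rd(t,y,1)$, $\CSwj=\wt(t,x,1)$, $\CSej=\rd(t,y,1)$, $\CSrj=\rd(t,x,1)$. Both $y$-reads read from $\wt_y$, we have $\CSwj\LTo{\rf}\CSrj$, and $\mo_x$ is $\CSwi,\wt',\CSwj$. Take $\Prog_t$ to loop on $\wt(x,1);\rd(y,1)$ and exit via $\rd(x,1)$.
\begin{itemize}
\item $G$ is consistent: nothing of $u$ is $\hb$-after $\CSwi$, and $\CSwj$ is $\mo$-maximal among the $x$-writes $\hb$-before $\CSrj$.
\item $\tuple{\CSei,\CSej}$ is collapsible: the summaries agree, with $\SummaryLastValue(x)=1$ and $\SummaryExternalRf=\emptyset$; the only removed write is read inside context 2; and neither $\CSwi$ nor $\CSwj$ $\hb$-reaches an event of $u$.
\item Your $G'$ sets $\CSwi\LTo{\rf}\CSrj$ and keeps $\CSwi\LTo{\mo}\wt'$. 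But the path $\wt'\LTo{\po}\wt_y\LTo{\rf}\CSei\LTo{\po}\CSrj$ survives, so $\tuple{\CSwi,\CSrj,\wt'}$ violates \ref{eq:rc}.
\end{itemize}
The pair $\tuple{\CSwi,\CSwj}$ is also collapsible and fails the same way, so choosing the pair more carefully does not help.

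The missing idea is to modify $\mo$. Whenever $\SummaryLastValue(x)\neq\bot$ and $x\notin\SummaryExternalRf$, let $\CSwi$ take over the $\mo_x$-position of $\CSwj$ (swap them, then restrict). Then every $\wt'$ with $\wt'\LTo{G.\mo}\CSwj$ becomes $\mo$-before $\CSwi$. This is where condition~\ref{item:cond_red_4} is genuinely needed:
\begin{itemize}
\item The new edges $\wt\LTo{\mo}\CSwi$ satisfy \ref{eq:wc}, because $\CSwj\nLTo{\hb}\wt$ transfers to $\CSwi$ for foreign $\wt$.
\item Some reads keep their source but now have $\CSwi$ $\mo$-after it. For such reads located after $\CSej$, use condition~\ref{item:cond_red_4}; these reads cannot be in $t$. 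For such reads of $t$ between $\CSwi$ and $\CSei$, use $x\notin\SummaryExternalRf$.
\end{itemize}

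Conversely, you do not need condition~\ref{item:cond_red_4} to get $\hb'\subseteq\hb$. Every new edge $\CSwi\LTo{\rf}\CSrj$ is a forward $\po$-edge of $t$, so $\hb'\subseteq\hb$ holds outright.

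The case you flagged as delicate (remote sources under $\SummaryExternalRf$) is in fact immediate. If $x\in\SummaryExternalRf$, no read after $\CSej$ can read from $\CSwj$ without an \ref{eq:rc} violation in $G$, so no redirection happens for $x$ at all.
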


\newcommand\tikzreX{0.6}
\newcommand\tikzreY{0.8}

\begin{figure}
\begin{subfigure}[b]{.3\textwidth}
\centering
    \begin{tikzpicture}[ line width=1pt, main/.style = {rectangle, inner sep=2pt}]

\draw [->] (2*\tikzreX,2*\tikzreY) -- (2*\tikzreX,-2*\tikzreY) node [near start, above=1cm]{}; 

\draw [->] (5*\tikzreX,2*\tikzreY) -- (5*\tikzreX,-2*\tikzreY) node [near start, above=1cm]{}; 

\draw [densely dashed, gray,  line width = 0.4pt] (3.5*\tikzreX,2.5*\tikzreY) to node [at start, above=0.2cm, black] {$G \to G'$}(3.5*\tikzreX,-2*\tikzreY);

\node[main, fill=white] at (2*\tikzreX,2.5*\tikzreY) (pi1){$\pi_c$}; 
\node[main, fill=white] at (5*\tikzreX,2.5*\tikzreY) (pi2){$\pi_c'$};

\node[main, fill=white] at (2*\tikzreX,1.5*\tikzreY) (w4i1){\small$\cdots$}; 
\node[main, fill=white]  at (2*\tikzreX,1*\tikzreY)(w4c1){\small$\CSei$}; 
\node[main, fill=white]  at (2*\tikzreX,0.5*\tikzreY) (w4i2){\small$\cdots$};
\node[main, fill=white]  at (2*\tikzreX,0*\tikzreY) (w4c2){{\small$\CSej$}};  
\node[main, fill=white] at (2*\tikzreX,-0.5*\tikzreY) (w4c3){\small$\cdots$}; 

\node[main, fill=white] at (5*\tikzreX,1.5*\tikzreY) (w4i1){\small$\cdots$}; 
\node[main, fill=white] at (5*\tikzreX,1*\tikzreY) (w4i1){\small$\CSei$}; 
\node[main, fill=white] at (5*\tikzreX,-0.5*\tikzreY) (w4c3){\small$\cdots$}; 

\fill[ red, opacity = 0.2] (1.5*\tikzreX,0.75*\tikzreY) rectangle (2.5*\tikzreX,-0.25*\tikzreY);

\end{tikzpicture}
\caption{$G$ and $G'$}
\label{subfig:bcs_gp_1}
\end{subfigure}%
\begin{subfigure}[b]{.3\textwidth}
\centering
\begin{tikzpicture}[ line width=1pt, main/.style = {rectangle, inner sep=2pt}] 

\draw [->] (2*\tikzreX,2*\tikzreY) -- (2*\tikzreX,-2*\tikzreY) node [near start, above=1cm]{}; 

\draw [->] (5*\tikzreX,2*\tikzreY) -- (5*\tikzreX,-2*\tikzreY) node [near start, above=1cm]{}; 

\draw [densely dashed, gray,  line width = 0.4pt] (3.5*\tikzreX,2.5*\tikzreY) to node [at start, above=0.2cm, black] {$G \to G'$}(3.5*\tikzreX,-2*\tikzreY);

\node[main, fill=white] at (2*\tikzreX,2.5*\tikzreY) (pi1){$\pi_c$}; 
\node[main, fill=white] at (5*\tikzreX,2.5*\tikzreY) (pi2){$\pi_c'$}; 

\node[main, fill=white] at (5*\tikzreX,1.5*\tikzreY) (w4i1){\small$\CSwi$}; 
\node[main, fill=white] at (5*\tikzreX,1*\tikzreY) (w4i3){\small$\cdots$}; 
\node[main, fill=white] at (5*\tikzreX,-1*\tikzreY) (w4c4){\small$\cdots$}; 
\node[main, fill=white] at (5*\tikzreX,-1.5*\tikzreY) (w4c3){\small$\rd$}; 
\node[main, fill=white] at (2*\tikzreX,1.5*\tikzreY) (aw4i1){\small$\CSwi$}; 
\node[main, fill=white] at (2*\tikzreX,1*\tikzreY) (aw4i3){\small$\cdots$}; 
\node[main, fill=white]  at (2*\tikzreX,0.5*\tikzreY)(aw4c1){\small$\cdots$}; 
\node[main, fill=white]  at (2*\tikzreX,0*\tikzreY) (aw4i2){\small$\wt$};

\node[main, fill=white]  at (2*\tikzreX,-0.5*\tikzreY) (aw4c2){\small$\cdots$};  

\node[main, fill=white] at (2*\tikzreX,-1*\tikzreY) (aw4c4){\small$\cdots$}; 

\node[main, fill=white] at (2*\tikzreX,-1.5*\tikzreY) (aw4c3){\small$\rd$}; 

\fill[ red, opacity = 0.2] (1.5*\tikzreX,0.75*\tikzreY) rectangle (2.5*\tikzreX,-0.75*\tikzreY);

\draw [rf] (aw4i2.0) to[bend left=30] (aw4c3.0);
\draw [rf] (w4i1.0) to[bend left=20] (w4c3.0);

\end{tikzpicture}
\caption{$G.\rf$ and $G'.\rf$}
\label{subfig:bcs_gp_2}
\end{subfigure}%
\begin{subfigure}[b]{.3\textwidth}
\centering
\begin{tikzpicture}[ line width=1pt, main/.style = {rectangle, inner sep=2pt}] 

\draw [->] (2*\tikzreX,2*\tikzreY) -- (2*\tikzreX,-2*\tikzreY) node [near start, above=1cm]{}; 

\draw [->] (7*\tikzreX,2*\tikzreY) -- (7*\tikzreX,-2*\tikzreY) node [near start, above=1cm]{}; 

\draw [densely dashed, gray,  line width = 0.4pt] (3.5*\tikzreX,2.5*\tikzreY) to node [at start, above=0.2cm, black] {$G \to G'$}(3.5*\tikzreX,-2*\tikzreY);

\node[main, fill=white] at (2*\tikzreX,2.5*\tikzreY) (pi1){$\pi_c$}; 
\node[main, fill=white] at (7*\tikzreX,2.5*\tikzreY) (pi2){$\pi_c'$}; 

\node[main, fill=white] at (2*\tikzreX,1.5*\tikzreY) (w12){\small$\CSwi$}; 
\node[main, fill=white] at (2*\tikzreX,1*\tikzreY) (aw4i3){\small$\cdots$}; 
\node[main, fill=white]  at (2*\tikzreX,0.5*\tikzreY)(aw4c1){\small$\cdots$}; 
\node[main, fill=white]  at (2*\tikzreX,0*\tikzreY) (w11){\small$\CSwj$};
\node[main, fill=white]  at (2*\tikzreX,-0.5*\tikzreY) (aw4c2){\small$\cdots$};  
\node[main, fill=white] at (2*\tikzreX,-1*\tikzreY) (aw4c3){\small$\cdots$};

\node[main, fill=white] at (0*\tikzreX,2*\tikzreY) (wa11){\small$\wt$}; 
\node[main, fill=white] at (0*\tikzreX,0.75*\tikzreY) (wa12){\small$\wt$}; 
\node[main, fill=white] at (0*\tikzreX,-0.5*\tikzreY) (wa13){\small$\wt$}; 

\draw[mo] (wa11) to (w12.north west);
\draw[mo] (wa11) to (w11.north west);
\draw[mo] (wa12) to (w11.west);
\draw[mo] (w12.west) to (wa12);
\draw[mo] (w12.south west) to (wa13);
\draw[mo] (w11.south west) to (wa13);

\fill[ red, opacity = 0.2] (1.5*\tikzreX,0.75*\tikzreY) rectangle (2.5*\tikzreX,-0.75*\tikzreY);

\node[main, fill=white] at (7*\tikzreX,1.5*\tikzreY) (w22){\small$\CSwi$}; 
\node[main, fill=white] at (7*\tikzreX,1*\tikzreY) (aw4i3){\small$\cdots$}; 
\node[main, fill=white] at (7*\tikzreX,-1*\tikzreY) (aw4c3){\small$\cdots$};

\node[main, fill=white] at (5*\tikzreX,2*\tikzreY) (wa21){\small$\wt$}; 
\node[main, fill=white] at (5*\tikzreX,0.75*\tikzreY) (wa22){\small$\wt$}; 
\node[main, fill=white] at (5*\tikzreX,-0.5*\tikzreY) (wa23){\small$\wt$}; 

\draw[mo] (wa21) to (w22.north west);
\draw[mo] (wa22) to (w22.west);
\draw[mo] (w22.south west) to (wa23);

\end{tikzpicture}
\caption{$G.\mo$ and $G'.\mo$}
\label{subfig:bcs_gp_3}
\end{subfigure}%
\caption{A graph $G$ with $\Summary(\CSei) = \Summary(\CSej)$ and the corresponding $G'$ (\subref{subfig:bcs_gp_1}).
(\subref{subfig:bcs_gp_2}) illustrates the recovery of missing $\rf$ edges, and
(\subref{subfig:bcs_gp_3}) illustrates how $\mo$ is adapted to restore \ref{eq:rc} for the recovered $\rf$.}
\label{fig:bcs_gp}
\end{figure}

\Paragraph{Reducing $\tuple{G, (\pi_1, \pi_2, \dots, \pi_{\ell})}$.}
Assume that $\tuple{G, (\pi_1, \pi_2, \dots, \pi_{\ell})}$ is reducible via a collapsible pair of events $\CSei, \CSej$, and let $t=\tid(\CSei)=\tid(\CSej)$ and $c=\ContextId(\CSei)=\ContextId(\CSej)$.
We construct a new trace $\tuple{G', (\pi'_1, \pi'_2, \dots, \pi'_{\ell})}$, such that $G'.\Events\subset G.\Events$ and $G'$ also reaches $\tuple{p_1, p_2, \dots, p_{\numThreads}}$.
Since reachability witnesses are finite, \cref{lem:irreducible_reachability} then follows.

In the following we describe the construction of $G'$ in three steps.
Each step is illustrated in \cref{fig:bcs_gp}.
The sequence $(\pi'_1, \pi'_2, \dots, \pi'_{\ell})$ is obtained naturally, by removing from each $\pi_{l}$ the events that do not appear in $G'.\Events$.

\SubParagraph{1.~Collapsing the segment between $\CSei$ and $\CSej$.}
We obtain the event set $G'.\Events$ by removing from $G$ all events between $\CSei$ (exclusive) and $\CSej$ (inclusive), i.e., $G'\Events=G.\Events \setminus\In_{\pi_{c}}(\CSei, \CSej)$.
The program order $G'.\po$ is $G.\po$ restricted to $G'.\Events$, i.e., $G'.\po=G.\po \cap (G'.\Events\times G'.\Events)$.
See \cref{subfig:bcs_gp_1} for an illustration.

\SubParagraph{2.~Recovering missing $\rf$-edges.}
The reads-from relation between all pairs of events that are present in both $G$ and $G'$ is preserved, i.e.,
if $\wt\LTo{G.\rf}\rd$ and $\wt,\rd\in G'.\Events$ then $\wt\LTo{G'.\rf}\rd$.
Now, consider an $\rf$-edge $\wt\LTo{G.\rf}\rd$ for which $\rd\in G'.\Events$ but $\wt\not \in G'.\Events$.
Then, we have to pair $\rd$ with a new writer in $G'$.
Since $\tuple{\CSei, \CSej}$ form a collapsible pair, we have $\wt\in \In(\CSei, \CSej)$, and thus $\ContextId(\rd)=c$.
Let $x=\lloc(\rd)$.
Since $\Summary_{\LinearTrace}(\CSei)=\Summary_{\LinearTrace}(\CSej)$, there is some value $v\in\ValueDom$ such that  $\Summary_{\LinearTrace}(\CSei).\SummaryLastValue(x)=\Summary_{\LinearTrace}(\CSej).\SummaryLastValue(x)=v$.
We set $\CSwi\LTo{G'.\rf}\rd$, where $\CSwi=\LatestWrite_{\LinearTrace}(\CSei, x)$.
See \cref{subfig:bcs_gp_2} for an illustration.

\SubParagraph{3.~Redirecting $\mo$-edges.}
With new $\rf$-edges in place, we may have to adapt $\mo$ in order to restore \ref{eq:rc}.
In particular, consider some location $x\in \LocationDom$.
If $\Summary_{\LinearTrace}(\CSei).\SummaryLastValue(x)=\bot$ or $x \in \Summary_{\LinearTrace}(\CSei).\SummaryExternalRf$, we set $G'.\mo_x=G.\mo_x\cap (G'.\Events \times G'.\Events)$.
Otherwise, let $\CSwi=\LatestWrite_{\LinearTrace}(\CSei, x)$ and $\CSwj=\LatestWrite_{\LinearTrace}(\CSej, x)$.
We set $G'.\mo_x=\ov{\mo_x}\cap (G'.\Events \times G'.\Events)$, where $\ov{\mo_x}$ is identical to $G.\mo_x$, with $\CSwi$ and $\CSwj$ swapped.
See \cref{subfig:bcs_gp_3} for an illustration. 

\newcommand\tikzrrX{0.6}
\newcommand\tikzrrY{0.8}

\begin{figure}
\begin{subfigure}[b]{.3\textwidth}
\centering
    \begin{tikzpicture}[ line width=1pt, main/.style = {circle, inner sep=1pt}] 

            \node[] (t10) at (0*\tikzrrX,\tikzrrY*0.7){$\Prog_{t_1}$};

            \node[main, fill=white, draw=black, text=black] (t11) at (0*\tikzrrX,\tikzrrY*0){\footnotesize$\InitState$};
            \node[main, fill=white, draw=black, text=black] (t12) at (0*\tikzrrX,\tikzrrY*-1){\footnotesize$q_{1}$};
            \node[main, fill=white, draw=black, text=black] (t13) at (0*\tikzrrX,\tikzrrY*-2){\footnotesize$q_{2}$};
            \node[main, fill=white, draw=black, text=black, double] (t14) at (0*\tikzrrX,\tikzrrY*-3){\footnotesize$\FinalState$};

            \draw [->] (t11) tonode [left=0pt]{\footnotesize{$\wt(\varix,1)$}} (t12);
            \draw [->] (t12) tonode [left=0pt]{\footnotesize{$\wt(\variy,1)$}} (t13);
            \draw [->] (t13) tonode [left=0pt]{\footnotesize{$\rd(\varix,1)$}} (t14);

            \node[] (t20) at (3*\tikzrrX,\tikzrrY*0.7){$\Prog_{t_2}$};

            \node[main, fill=white, draw=black, text=black] (t21) at (3*\tikzrrX,\tikzrrY*0){\footnotesize$\InitState$};
            \node[main, fill=white, draw=black, text=black] (t22) at (3*\tikzrrX,\tikzrrY*-1){\footnotesize$q_{1}$};
            \node[main, fill=white, draw=black, text=black] (t23) at (3*\tikzrrX,\tikzrrY*-2){\footnotesize$q_{2}$};
            \node[main, fill=white, draw=black, text=black, double] (t24) at (3*\tikzrrX,\tikzrrY*-3){\footnotesize$\FinalState$};
            
            \draw [->] (t21) tonode [right=0pt]{\footnotesize{$\wt(\varix,1)$}} (t22);
            \draw [->] (t22) to[bend left] node [right=-1pt]{\footnotesize{$\rd(\varix,1)$}} (t23);
            \draw [->] (t23) to[bend left] node [left=-1pt]{\footnotesize{$\wt(\varix,1)$}} (t22);
            \draw [->] (t23) tonode [right=0pt]{\footnotesize{$\rd(\variy,1)$}} (t24);

\end{tikzpicture}
\caption{$\ConcProg = \tuple{\Prog_{t_1},\Prog_{t_2}}$.}
\label{subfig:bcs_rem_1}
\end{subfigure}%
\begin{subfigure}[b]{.3\textwidth}
\centering
\begin{tikzpicture}[ line width=1pt, main/.style = {rectangle, inner sep=1pt}] 

\draw [->] (2*\tikzrrX,2*\tikzrrY) -- (2*\tikzrrX,-1.6*\tikzrrY) node [near start, above=0.75cm]{$t_1$}; 

\node[main, fill=white] at (2*\tikzrrX,1*\tikzrrY) (e11){\footnotesize{$\wt(\varix,1)$}}; 
\node[main, fill=white] at (2*\tikzrrX,0.5*\tikzrrY) (e12){\footnotesize{$\wt(\variy,1)$}}; 
\node[main, fill=white]  at (2*\tikzrrX,-0.5*\tikzrrY)(e13){\footnotesize{$\rd(\varix,1)$}}; 

\draw [->] (4.5*\tikzrrX,2*\tikzrrY) -- (4.5*\tikzrrX,-1.6*\tikzrrY) node [near start, above=0.75cm]{$t_2$};

\node[main, fill=white] at (4.5*\tikzrrX,1.5*\tikzrrY) (e21){\footnotesize{$\wt(\varix,1)$}}; 
\node[main, fill=white] at (4.5*\tikzrrX,1*\tikzrrY) (e22){\footnotesize{$\rd(\varix,1)$}}; 
\node[main, fill=white]  at (4.5*\tikzrrX,0.5*\tikzrrY)(e23){\footnotesize{$\wt(\varix,1)$}}; 
\node[main, fill=white] at (4.5*\tikzrrX,0*\tikzrrY) (e24){\footnotesize{$\rd(\varix,1)$}}; 
\node[main, fill=white] at (4.5*\tikzrrX,-0.5*\tikzrrY) (e25){\footnotesize{$\wt(\varix,1)$}}; 
\node[main, fill=white]  at (4.5*\tikzrrX,-1*\tikzrrY)(e26){\footnotesize{$\rd(\variy,1)$}}; 

\fill[ black, opacity = 0.1] (1.2*\tikzrrX,1.25*\tikzrrY) rectangle (2.8*\tikzrrX,0.25*\tikzrrY) {};
\node[] (t20) at (1.4*\tikzrrX,\tikzrrY*1.4){$\pi_{1}$};

\fill[ black, opacity = 0.1] (1.2*\tikzrrX,-0.25*\tikzrrY) rectangle (2.8*\tikzrrX,-0.75*\tikzrrY) {};
\node[] (t20) at (1.4*\tikzrrX,\tikzrrY*-0.1){$\pi_{3}$};

\fill[ black, opacity = 0.1] (3.8*\tikzrrX,1.75*\tikzrrY) rectangle (5.3*\tikzrrX,-1.25*\tikzrrY) {};
\node[] (t20) at (5.1*\tikzrrX,\tikzrrY*1.9){$\pi_{2}$};

\draw[mo] (e21.180) to (e11);
\draw[mo] (e11.-10) to (e23.180);
\draw [mo] (e23.10) to[bend left=40] (e25.0);
\draw[rf] (e11) to (e22);
\draw [rf] (e23.-10) to[bend left=20] (e24.0);
\draw[rf] (e25.180) to (e13.0);
\draw[rf] (e12.-30) to (e26.160);

\end{tikzpicture}
\caption{$\tuple{G, (\pi_1,\pi_2,\pi_3)}$.}
\label{subfig:bcs_rem_2}
\end{subfigure}%
\begin{subfigure}[b]{.3\textwidth}
\centering
\begin{tikzpicture}[ line width=1pt, main/.style = {rectangle, inner sep=1pt}] 

\draw [->] (2*\tikzrrX,2*\tikzrrY) -- (2*\tikzrrX,-1.6*\tikzrrY) node [near start, above=0.75cm]{$t_1$}; 

\node[main, fill=white] at (2*\tikzrrX,1*\tikzrrY) (e11){\footnotesize{$\wt(\varix,1)$}}; 
\node[main, fill=white] at (2*\tikzrrX,0.5*\tikzrrY) (e12){\footnotesize{$\wt(\variy,1)$}}; 
\node[main, fill=white]  at (2*\tikzrrX,-0.5*\tikzrrY)(e13){\footnotesize{$\rd(\varix,1)$}}; 

\draw [->] (4.5*\tikzrrX,2*\tikzrrY) -- (4.5*\tikzrrX,-1.6*\tikzrrY) node [near start, above=0.75cm]{$t_2$};

\node[main, fill=white] at (4.5*\tikzrrX,1.5*\tikzrrY) (e21){\footnotesize{$\wt(\varix,1)$}}; 
\node[main, fill=white] at (4.5*\tikzrrX,0*\tikzrrY) (e24){\footnotesize{$\rd(\varix,1)$}}; 
\node[main, fill=white] at (4.5*\tikzrrX,-0.5*\tikzrrY) (e25){\footnotesize{$\wt(\varix,1)$}}; 
\node[main, fill=white]  at (4.5*\tikzrrX,-1*\tikzrrY)(e26){\footnotesize{$\rd(\variy,1)$}}; 

\fill[ black, opacity = 0.1] (1.2*\tikzrrX,1.25*\tikzrrY) rectangle (2.8*\tikzrrX,0.25*\tikzrrY) {};
\node[] (t20) at (1.4*\tikzrrX,\tikzrrY*1.4){$\pi'_{1}$};

\fill[ black, opacity = 0.1] (1.2*\tikzrrX,-0.25*\tikzrrY) rectangle (2.8*\tikzrrX,-0.75*\tikzrrY) {};
\node[] (t20) at (1.4*\tikzrrX,\tikzrrY*-0.1){$\pi'_{3}$};

\fill[ black, opacity = 0.1] (3.8*\tikzrrX,1.75*\tikzrrY) rectangle (5.3*\tikzrrX,-1.25*\tikzrrY) {};
\node[] (t20) at (5.1*\tikzrrX,\tikzrrY*1.9){$\pi'_{2}$};

\draw[mo] (e11) to (e21.180);
\draw [mo] (e21.-10) to[bend left=30] (e25.0);
\draw [rf] (e21.-20) to[bend left=20] (e24.0);
\draw[rf] (e25.180) to (e13.0);
\draw[rf] (e12.-30) to (e26.160);

\end{tikzpicture}
\caption{$\tuple{G', (\pi'_1,\pi'_2,\pi'_3)}$}
\label{subfig:bcs_rem_3}
\end{subfigure}%
\caption{
The graph $G\in \ExecutionGraphsOf{\ConcProg}{\ramm}$ can be reduced to $G'$ while still reaching the final state of $\ConcProg$. This is done without increasing the number of contexts of $G$, by simply removing events in $G$ and changing $G.\mo$ and $G.\rf$.}
\label{fig:bcs_red}
\end{figure}

\cref{fig:bcs_red} shows a concrete example of witness reduction.
Since the $\Summary_{\LinearTrace}(\CSei).\SummaryStates=\Summary_{\LinearTrace}(\CSej).\SummaryStates$, it follows easily that $G'$ represents a valid execution of $\ConcProg$ that reaches the same set of states as $G$.
The intricate part of the proof is in showing that $G'$ 
is Release/Acquire-consistent, and thus $G'\in \ExecutionGraphsOf{\ConcProg}{}$.

\Paragraph{The consistency of $G'$.}
We argue that $G'$ satisfies each of the Release/Acquire axioms of \cref{fig:axioms_ra}, besides \ref{eq:at}, since $G'$ is RMW-free.

\Paragraph{\ref{eq:hbirr}.}
Note that $G'.\po\subseteq G.\po$, while $G'.\rf\setminus G.\rf\subseteq G'.\po$.
This implies that $G'.\hb\subseteq G.\hb$, and since $G.\hb$ is irreflexive, the same holds for $G'.\hb$.

\Paragraph{\ref{eq:wc}.}
Since $G'.\hb\subseteq G.\hb$, a potential violation of \ref{eq:wc} in $G'$ must include an edge from $(G'.\mo \setminus G.\mo)$.
Any such edge has the form $\wt\LTo{G'.\mo}\CSwi$, where $\CSwi=\LatestWrite_{\LinearTrace}(\CSei, x)$ for some location $x$.
By construction, this implies
\begin{enumerate*}[label=(\roman*)]
\item $\CSwi = \LatestWrite_{\LinearTrace}(\CSei, x)$,
\item $\CSwi\LTo{G.\mo}\wt$, and
\item $\wt\LTo{G.\mo}\CSwj$,
\end{enumerate*}
where $\CSwj = \LatestWrite_{\LinearTrace}(\CSej, x)$ and $x=\lloc(\wt)=\lloc(\CSwi)$. 
Note that this ordering is only possible if $\tid(\wt)\neq \tid(\CSwi)$, as all writes on $x$ in the same thread between $\CSwi$ (exclusive) and $\CSwj$ (inclusive) have been removed in $G'$.
Since $G$ satisfies \ref{eq:wc}, we have $\CSwj \nLTo{G.\hb} \wt$. 
Moreover, since $\tuple{\CSei, \CSej}$ is collapsible, for any event $\event$ such that $\tid(\event)\neq \tid(\CSwi)$, we have that $\CSwi\LTo{\hb} \event$ iff $\CSwj\LTo{\hb} \event$. 
Therefore, $\CSwj \nLTo{G.\hb} \wt$ implies $\CSwi \nLTo{G.\hb} \wt$. 
Since $G'.\hb\subseteq G.\hb$, we also have $\CSwi \nLTo{G'.\hb} \wt$, and hence the edge $\wt\LTo{G'.\mo}\CSwi$ does not participate in a \ref{eq:wc} violation in $G'$.

\Paragraph{\ref{eq:rc}.} 
A violation of \ref{eq:rc} in $G'$ is witnessed by a triplet of distinct events $\tuple{\wt,\rd, \wt'}$ accessing the same location $x$ and such that
\begin{enumerate*}[label=(\roman*)]
\item\label{item:rc_violation_rf} $\wt\LTo{G'.\rf}\rd$, 
\item\label{item:rc_violation_hb} $\wt' \LTo{G'.\hb} \rd$,  and
\item\label{item:rc_violation_mo} $\wt \LTo{G'.\mo} \wt'$
\end{enumerate*}
We argue that any triplet $\tuple{\wt,\rd, \wt'}$ fails one of \cref{item:rc_violation_rf,item:rc_violation_hb,item:rc_violation_mo}.
We consider two cases, depending on whether $\wt\LTo{G.\rf}\rd$, i.e., whether the writer of $\rd$ is the same in $G$ and $G'$.

\newcommand\tikzrcX{0.6}
\newcommand\tikzrcY{0.8}

\begin{figure}
\begin{subfigure}[t]{.4\textwidth}
\centering
    \begin{tikzpicture}[ line width=1pt, main/.style = {rectangle, inner sep=1pt}] 

\draw [->] (1.5*\tikzrcX,2*\tikzrcY) -- (1.5*\tikzrcX,-2*\tikzrcY) node [near start, above=1cm]{}; 

\draw [->] (5.5*\tikzrcX,2*\tikzrcY) -- (5.5*\tikzrcX,-2*\tikzrcY) node [near start, above=1cm]{}; 

\draw [densely dashed, gray,  line width = 0.4pt] (3*\tikzrcX,2.5*\tikzrcY) to node [at start, above=0.2cm, black] {$G \to G'$}(3*\tikzrcX,-2*\tikzrcY);

\node[main, fill=white] at (1.5*\tikzrcX,2.5*\tikzrcY) (pi1){$\pi$}; 
\node[main, fill=white] at (5.5*\tikzrcX,2.5*\tikzrcY) (pi2){$\pi'$}; 

\node[main, fill=white] at (1.5*\tikzrcX,1.5*\tikzrcY) (r11){\small$\rd^{\ref{item:vio_1}}$}; 
\node[main, fill=white] at (1.5*\tikzrcX,1*\tikzrcY) (w12){\small$\CSwi$}; 
\node[main, fill=white] at (1.5*\tikzrcX,0.5*\tikzrcY) (r12){\small$\rd^{\ref{item:vio_2}}$}; 
\node[main, fill=white]  at (1.5*\tikzrcX,0*\tikzrcY)(aw4c1){\small$\cdots$}; 
\node[main, fill=white]  at (1.5*\tikzrcX,-0.5*\tikzrcY) (w11){\small$\CSwj$};
\node[main, fill=white]  at (1.5*\tikzrcX,-1*\tikzrcY) (aw4c2){\small$\cdots$};  
\node[main, fill=white] at (1.5*\tikzrcX,-1.5*\tikzrcY) (r13){\small$\rd^{\ref{item:vio_3}}$};

\node[main, fill=white] at (-0.5*\tikzrcX,0.75*\tikzrcY) (wa12){\small$\wt$}; 

\draw[mo] (wa12) to (w11.west);
\draw[mo] (w12.west) to (wa12);
\draw[rf] (wa12) to (r11.west);
\draw[rf] (wa12) to (r12.west);
\draw[rf] (wa12) to (r13.west);

\fill[ red, opacity = 0.2] (1*\tikzrcX,0.25*\tikzrcY) rectangle (2*\tikzrcX,-1.25*\tikzrcY);

\node[main, fill=white] at (5.5*\tikzrcX,1.5*\tikzrcY) (r21){\small$\rd^{\ref{item:vio_1}}$}; 
\node[main, fill=white] at (5.5*\tikzrcX,1*\tikzrcY) (w22){\small$\CSwi$}; 
\node[main, fill=white] at (5.5*\tikzrcX,0.5*\tikzrcY) (r22){\small$\rd^{\ref{item:vio_2}}$}; 
\node[main, fill=white] at (5.5*\tikzrcX,-1.5*\tikzrcY) (r23){\small$\rd^{\ref{item:vio_3}}$};

\node[main, fill=white] at (3.5*\tikzrcX,0.75*\tikzrcY) (wa22){\small$\wt$}; 

\draw[mo] (wa22) to (w22.west);
\draw[rf] (wa22) to (r21.west);
\draw[rf] (wa22) to (r22.west);
\draw[rf] (wa22) to (r23.west);

\end{tikzpicture}
\caption{The case of $\wt\LTo{G.\rf}\rd$.}
\label{subfig:bcs_rc_1}
\end{subfigure}
\hspace{5pt}
\begin{subfigure}[t]{.4\textwidth}
\centering
    \begin{tikzpicture}[ line width=1pt, main/.style = {rectangle, inner sep=1pt}] 

\draw [->] (1.5*\tikzrcX,2*\tikzrcY) -- (1.5*\tikzrcX,-2*\tikzrcY) node [near start, above=1cm]{}; 
\draw [->] (5.5*\tikzrcX,2*\tikzrcY) -- (5.5*\tikzrcX,-2*\tikzrcY) node [near start, above=1cm]{}; 

\draw [densely dashed, gray,  line width = 0.4pt] (3*\tikzrcX,2.5*\tikzrcY) to node [at start, above=0.2cm, black] {$G \to G'$}(3*\tikzrcX,-2*\tikzrcY);

\node[main, fill=white] at (1.5*\tikzrcX,2.5*\tikzrcY) (pi1){$\pi$}; 
\node[main, fill=white] at (5.5*\tikzrcX,2.5*\tikzrcY) (pi2){$\pi'$}; 

\node[main, fill=white] at (1.5*\tikzrcX,1.5*\tikzrcY) (w12){\small$\CSwi$}; 
\node[main, fill=white] at (1.5*\tikzrcX,1*\tikzrcY) (r12){\small$\cdots$}; 
\node[main, fill=white]  at (1.5*\tikzrcX,0.5*\tikzrcY)(aw4c1){\small$\cdots$}; 
\node[main, fill=white]  at (1.5*\tikzrcX,0*\tikzrcY) (w11){\small$\CSwj$};
\node[main, fill=white]  at (1.5*\tikzrcX,-0.5*\tikzrcY) (aw4c2){\small$\cdots$};  
\node[main, fill=white] at (1.5*\tikzrcX,-1*\tikzrcY) (r11){\small$\cdots$}; 
\node[main, fill=white] at (1.5*\tikzrcX,-1.5*\tikzrcY) (r13){\small$\CSrj$};

\node[main, fill=white] at (0*\tikzrcX,0.75*\tikzrcY) (wa12){\small$\CSwp$}; 

\draw[mo]  (wa12) to (w11.west);
\draw[mo] (w12.west) to (wa12);
\draw[hb] (wa12) to (r13.west);
\draw [rf] (w11.0) to[bend left=30] (r13.0);

\fill[ red, opacity = 0.2] (1*\tikzrcX,0.75*\tikzrcY) rectangle (2*\tikzrcX,-0.75*\tikzrcY);

\node[main, fill=white] at (5.5*\tikzrcX,1.5*\tikzrcY) (w22){\small$\CSwi$}; 
\node[main, fill=white] at (5.5*\tikzrcX,1*\tikzrcY) (r22){\small$\cdots$}; 
\node[main, fill=white] at (5.5*\tikzrcX,-1*\tikzrcY) (r21){\small$\cdots$}; 
\node[main, fill=white] at (5.5*\tikzrcX,-1.5*\tikzrcY) (r23){\small$\CSrj$};

\node[main, fill=white] at (4*\tikzrcX,0.75*\tikzrcY) (wa22){\small$\CSwp$}; 

\draw[mo]  (wa22) to (w22.west);
\draw[hb] (wa22) to (r23.west);
\draw [rf] (w22.0) to[bend left=20] (r23.0);

\end{tikzpicture}
\caption{The case of $\wt\LTo{G'.\rf\setminus G.\rf}\rd$.}
\label{subfig:bcs_rc_2}
\end{subfigure}%
\caption{
The two cases in the proof of \ref{eq:rc} for $G'$ on a triplet $\tuple{\wt,\rd,\wt'}$ depending on whether $\wt\LTo{G.\rf}\rd$ (\subref{subfig:bcs_rc_1}) or $\wt\LTo{G'.\rf\setminus G.\rf}\rd$ (\subref{subfig:bcs_rc_2}).
}
\label{fig:bcs_rc}
\end{figure}
    
\SubParagraph{The case of $\wt\LTo{G.\rf}\rd$.} 
Recall that $G'.\hb \subseteq G.\hb$, hence if $\wt'\LTo{G'.\hb} \rd$, we also have $\wt' \LTo{G.\hb} \rd$.
Since also $\wt\LTo{G.\rf}\rd$ and $G$ satisfies \ref{eq:rc},
we have $\wt' \LTo{G.\mo } \wt$.
Assume towards contradiction that $\wt \LTo{G'.\mo } \wt'$, and by construction, we have $\wt'=\CSwi$, where $\CSwi=\LatestWrite_{\LinearTrace}(\CSei, x)$.
Let $\pi'=\pi'_1\cdot\pi'_2\cdots \pi'_{\ell}$ (recall that $\tuple{G', (\pi'_1, \pi'_2, \dots, \pi'_{\ell})}$ is the trace we have constructed).
We consider where $\rd$ appears in $\pi'$, relatively to $\CSwi$ and $\CSei$ (see also \cref{subfig:bcs_rc_1}).
\begin{compactenum}
\item $\rd <_{\pi'} \CSwi$.
This is not possible, since $\pi'$ is a total extension of $G'.\hb$ and we have $\CSwi\LTo{G'.\hb} \rd$.\label{item:vio_1}

\item $\CSwi <_{\pi'} \rd \leq_{\pi'} \CSei$.
Note that, in this case, $\ContextId(\rd)=c$, i.e., $\rd$ belongs to the same thread as $\CSwi$ and $\CSei$, and thus $\CSwi\LTo{G.\po} \rd$.
Since $\wt \LTo{G'.\mo\setminus G.\mo} \CSwi$, the condition for redirecting $\mo$-edges for location $x$ must have been triggered in the reduction of $\tuple{G, (\pi_1, \pi_2, \dots, \pi_{\ell})}$. 
This is only possible if $x \not\in \Summary_{\LinearTrace}(\CSei).\SummaryExternalRf$.
This, in turn, implies that $\rd$ reads from a local write in $G$, i.e., $\wt=\CSwi$, contradicting the requirement that $\wt$ and $\CSwi$ are distinct.\label{item:vio_2}

\item $\CSej <_{\pi'} \rd$. 
Since $\wt \LTo{G'.\mo\setminus G.\mo } \wt'$, we have $\wt\LTo{G.\mo}\CSwj$, where $\CSwj = \LatestWrite_{\LinearTrace}(\CSej, x)$.
Since $G$ satisfies \ref{eq:rc}, we have $\CSwj \nLTo{G.\hb} \rd$. 
Since $\CSwj <_{\pi} \CSej <_{\pi} \rd$, this implies that $\tid(\rd) \neq \tid(\CSwj)$, or equivalently, $\tid(\rd) \neq \tid(\CSwi)$, as otherwise we would have $\CSwj \LTo{G.\po} \rd$ and thus $\CSwj \LTo{G.\hb} \rd$.
By the collapsibility of $\tuple{\CSei, \CSej}$, for any event $\event$ such that $\tid(\event)\neq \tid(\CSwi)$, we have that $\CSwi\LTo{\hb} \event$ iff $\CSwj\LTo{\hb} \event$. 
Therefore, $\CSwj \nLTo{G.\hb} \rd$ implies $\CSwi \nLTo{G.\hb} \rd$. 
Finally, since $G'.\hb \subseteq G.\hb$, we obtain $\CSwi \nLTo{G'.\hb} \rd$.\label{item:vio_3}
\end{compactenum}
Hence, every triplet $\tuple{\wt,\rd, \wt'}$ of  $G'$ with  $\wt\LTo{G.\rf}\rd$ satisfies \ref{eq:rc}. 

\SubParagraph{The case of $\wt\LTo{G'.\rf\setminus G.\rf}\rd$.}
Any edge in $(G'.\rf \setminus G.\rf)$ is of the form $\CSwi \LTo{G'.\rf} \CSrj$, where $\CSwi = \LatestWrite_{\LinearTrace}(\CSei, x)$ and $\LatestWrite_{\LinearTrace}(\CSej,x) \LTo{G.\rf} \CSrj$.
We thus have $\wt=\CSwi$ and $\rd= \CSrj$.
Moreover, let $\CSwj=\LatestWrite_{\LinearTrace}(\CSej,x)$ be the original writer of $\CSrj$ in $G$.
Since $G'.\hb\subseteq G.\hb$, if $\wt'\LTo{G'.\hb}\CSrj$, we also have $\wt'\LTo{G.\hb}\CSrj$, and as $G$ satisfies \ref{eq:rc}, we have $\wt'\LTo{G.\mo}\CSwj$. 
We argue that $\Summary_{\LinearTrace}(\CSei).\SummaryLastValue(x)\neq\bot$ and $x \not\in \Summary_{\LinearTrace}(\CSei).\SummaryExternalRf$, which are the conditions under which $\CSwi$ takes the place of $\CSwj$ in $G'.\mo$, establishing that $\wt'\LTo{G'.\mo}\CSwi$, as illustrated in \cref{subfig:bcs_rc_2}.

The presence of $\CSwj$ establishes that $\Summary_{\LinearTrace}(\CSei).\SummaryLastValue(x)\neq\bot$.
Moreover, if $x \in \Summary_{\LinearTrace}(\CSei).\SummaryExternalRf$, there would be a read event $\rd''$, also on location $x$, such that
\begin{enumerate*}[label=(\roman*)]
\item $\CSwj\LTo{G.\po} \rd'' \LTo{G.\po} \CSej \LTo{G.\po} \CSrj$, and
\item $\wt'' \LTo{G.\rf} \rd''$, where $\wt''\neq \wt'$.
\end{enumerate*}
This, however, would imply a violation of read coherence for $G$, witnessed either  by the triplet $\tuple{\CSwj, \CSrj, \wt''}$ (if $\CSwj \LTo{G.\mo} \wt''$), or $\tuple{\wt'',  \rd'', \CSwj}$ (if $\wt'' \LTo{G.\mo} \CSwj$).

Hence, every triplet $\tuple{\wt,\rd, \wt'}$ of  $G'$ with  $\wt\LTo{G'.\rf\setminus G.\rf}\rd$ also satisfies \ref{eq:rc}, concluding the proof of the consistency of $G'$.
\subsection{Small Models with Bounded RMWs}\label{SUBSEC:DECIDABILITY_RMWS}

Finally, we address the case of reachability as witnessed by traces that contain both bounded context switches and bounded RMWs, thereby concluding \cref{thm:bounded_context_switches}.

\newcommand\tikzrmwX{0.6}
\newcommand\tikzrmwY{0.8}

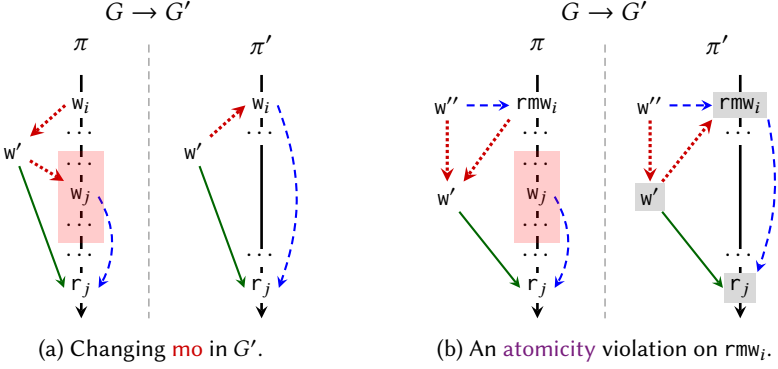
\begin{figure}
\begin{subfigure}[b]{.4\textwidth}
\centering
    \begin{tikzpicture}[ line width=1pt, main/.style = {rectangle, inner sep=2pt}] 

\draw [->] (1.5*\tikzrmwX,2*\tikzrmwY) -- (1.5*\tikzrmwX,-2*\tikzrmwY) node [near start, above=1cm]{}; 
\draw [->] (5.5*\tikzrmwX,2*\tikzrmwY) -- (5.5*\tikzrmwX,-2*\tikzrmwY) node [near start, above=1cm]{}; 

\draw [densely dashed, gray,  line width = 0.4pt] (3*\tikzrmwX,2.5*\tikzrmwY) to node [at start, above=0.2cm, black] {$G \to G'$}(3*\tikzrmwX,-2*\tikzrmwY);

\node[main, fill=white] at (1.5*\tikzrmwX,2.5*\tikzrmwY) (pi1){$\pi$}; 
\node[main, fill=white] at (5.5*\tikzrmwX,2.5*\tikzrmwY) (pi2){$\pi'$}; 

\node[main, fill=white] at (1.5*\tikzrmwX,1.5*\tikzrmwY) (wk1){\small$\CSwi$}; 
\node[main, fill=white] at (1.5*\tikzrmwX,1*\tikzrmwY) (r12){\small$\cdots$}; 
\node[main, fill=white]  at (1.5*\tikzrmwX,0.5*\tikzrmwY)(aw4c1){\small$\cdots$}; 
\node[main, fill=white]  at (1.5*\tikzrmwX,0*\tikzrmwY) (wr1){\small$\CSwj$};
\node[main, fill=white]  at (1.5*\tikzrmwX,-0.5*\tikzrmwY) (aw4c2){\small$\cdots$};  
\node[main, fill=white] at (1.5*\tikzrmwX,-1*\tikzrmwY) (r11){\small$\cdots$}; 
\node[main, fill=white] at (1.5*\tikzrmwX,-1.5*\tikzrmwY) (r13){\small$\CSrj$};

\node[main, fill=white] at (0*\tikzrmwX,0.75*\tikzrmwY) (wa1){\small$\CSwp$}; 

\draw[mo]  (wa1) to (wr1.north west);
\draw[mo] (wk1.west) to (wa1);
\draw[hb] (wa1) to (r13.west);
\draw [rf] (wr1.0) to[bend left=30] (r13.0);

\fill[ red, opacity = 0.2] (1*\tikzrmwX,0.75*\tikzrmwY) rectangle (2*\tikzrmwX,-0.75*\tikzrmwY);

\node[main, fill=white] at (5.5*\tikzrmwX,1.5*\tikzrmwY) (wk2){\small$\CSwi$}; 
\node[main, fill=white] at (5.5*\tikzrmwX,1*\tikzrmwY) (r22){\small$\cdots$}; 
\node[main, fill=white] at (5.5*\tikzrmwX,-1*\tikzrmwY) (r21){\small$\cdots$}; 
\node[main, fill=white] at (5.5*\tikzrmwX,-1.5*\tikzrmwY) (r23){\small$\CSrj$};

\node[main, fill=white] at (4*\tikzrmwX,0.75*\tikzrmwY) (wa2){\small$\CSwp$}; 

\draw[mo]  (wa2) to (wk2.west);
\draw[hb] (wa2) to (r23.west);
\draw [rf] (wk2.0) to[bend left=20] (r23.0);

\end{tikzpicture}
\caption{Changing $\mo$ in $G'$.}
\label{subfig:bcs_rmw_1}
\end{subfigure}%
\hspace{11pt}
\begin{subfigure}[b]{.4\textwidth}
\centering
    \begin{tikzpicture}[ line width=1pt, main/.style = {rectangle, inner sep=2pt}] 

\draw [->] (1.5*\tikzrmwX,2*\tikzrmwY) -- (1.5*\tikzrmwX,-2*\tikzrmwY) node [near start, above=1cm]{}; 
\draw [->] (6*\tikzrmwX,2*\tikzrmwY) -- (6*\tikzrmwX,-2*\tikzrmwY) node [near start, above=1cm]{}; 

\draw [densely dashed, gray,  line width = 0.4pt] (3*\tikzrmwX,2.5*\tikzrmwY) to node [at start, above=0.2cm, black] {$G \to G'$}(3*\tikzrmwX,-2*\tikzrmwY);

\node[main, fill=white] at (1.5*\tikzrmwX,2.5*\tikzrmwY) (pi1){$\pi$}; 
\node[main, fill=white] at (5.5*\tikzrmwX,2.5*\tikzrmwY) (pi2){$\pi'$}; 

\node[main, fill=white] at (1.5*\tikzrmwX,1.5*\tikzrmwY) (wk1){\small$\CSrmwi$}; 
\node[main, fill=white] at (1.5*\tikzrmwX,1*\tikzrmwY) (r12){\small$\cdots$}; 
\node[main, fill=white]  at (1.5*\tikzrmwX,0.5*\tikzrmwY)(aw4c1){\small$\cdots$}; 
\node[main, fill=white]  at (1.5*\tikzrmwX,0*\tikzrmwY) (wr1){\small$\CSwj$};
\node[main, fill=white]  at (1.5*\tikzrmwX,-0.5*\tikzrmwY) (aw4c2){\small$\cdots$};  
\node[main, fill=white] at (1.5*\tikzrmwX,-1*\tikzrmwY) (r11){\small$\cdots$}; 
\node[main, fill=white] at (1.5*\tikzrmwX,-1.5*\tikzrmwY) (r13){\small$\CSrj$};

\node[main, fill=white] at (-0.5*\tikzrmwX,1.5*\tikzrmwY) (wb1){\small$\CSwpp$}; 
\node[main, fill=white] at (-0.5*\tikzrmwX,0*\tikzrmwY) (wa1){\small$\CSwp$}; 

\draw[rf] (wb1) to (wk1.180);
\draw[mo]  (wb1) to (wa1);
\draw[mo] (wk1.south west) to (wa1.north east);
\draw[hb] (wa1) to (r13.west);
\draw [rf] (wr1.0) to[bend left=30] (r13.0);

\fill[ red, opacity = 0.2] (1*\tikzrmwX,0.75*\tikzrmwY) rectangle (2*\tikzrmwX,-0.75*\tikzrmwY);

\node[main, fill=gray!30] at (6*\tikzrmwX,1.5*\tikzrmwY) (wk2){\small$\CSrmwi$}; 
\node[main, fill=white] at (6*\tikzrmwX,1*\tikzrmwY) (r22){\small$\cdots$}; 
\node[main, fill=white] at (6*\tikzrmwX,-1*\tikzrmwY) (r21){\small$\cdots$}; 
\node[main, fill=gray!30] at (6*\tikzrmwX,-1.5*\tikzrmwY) (r23){\small$\CSrj$};

\node[main, fill=white] at (4*\tikzrmwX,1.5*\tikzrmwY) (wb2){\small$\CSwpp$}; 
\node[main, fill=gray!30] at (4*\tikzrmwX,0*\tikzrmwY) (wa2){\small$\CSwp$}; 

\draw[rf] (wb2) to (wk2.180);
\draw[mo]  (wb2) to (wa2);
\draw[mo]  (wa2) to (wk2.south west);
\draw[hb] (wa2) to (r23.west);
\draw [rf] (wk2.south east) to[bend left=15] (r23.north east);

\end{tikzpicture}
\caption{An \ref{eq:at} violation on $\CSrmwi$.}
\label{subfig:bcs_rmw_2}
\end{subfigure}%
\caption{
RMWs prevent the collapsibility of segments between repeating summaries.
}
\label{fig:bcs_rmw}
\end{figure}

\Paragraph{Where does the proof above break with RMWs?}
To gain proof intuition, it is natural to ask why our construction in the previous section fails in the presence of RMWs. 
\cref{fig:bcs_rmw} illustrates this case.
In particular, recall that our construction works as follows (see also \cref{subfig:bcs_rmw_1}):~if there is a segment between  two collapsible events $\CSei$ and $\CSej$ containing a write $\CSwj$ that is read by a read $\CSrj$ after $\CSej$, in $G'$ we identify another write $\CSwi$ which
\begin{enumerate*}[label=(\roman*)]
\item becomes the writer of $\CSrj$, and
\item receives all the $\mo$ edges of $\CSwj$ in $G$, in order to restore \ref{eq:rc} for $\CSrj$ relative to other conflicting writes $\wt'$ for which $\wt'\LTo{G'.\hb} \CSrj$.
\end{enumerate*}
Now consider the case that $\CSwi$ is replaced by an RMW $\CSrmwi$,
reading from another write $\CSwpp$ for which also $\CSwpp\LTo{G'.\mo} \wt'$ (see also \cref{subfig:bcs_rmw_2}).
Performing the same sequence of changes, namely $\CSrmwi \LTo{G'.\rf} \CSrj$ and $\CSwp \LTo{G'.\mo} \CSrmwi$, would violate \ref{eq:at}, as $\wt'$ is $\mo$-between $\CSrmwi$ and its writer.
Moreover, the $\mo$ order from $\CSwpp$ to $\CSwp$ might not be reversible, e.g., because $\CSwpp\LTo{G'.\hb} \wt'$.
In the following, we lift the proof of the previous section to handle executions with a bounded number of RMWs.

\Paragraph{Context-bounded reachability with RMWs.}
The problem of \emph{context-bounded reachability with bounded RMWs} asks whether, given a concurrent program $\ConcProg$ and parameters $k_c, k_{\rmw} \in \Nats$, there exists a trace $\tuple{G, (\pi_1, \pi_2,\dots, \pi_{\ell})}$ of $\ConcProg$ with $\ell \leq k_c$ and $|(G.E\cap \RMWDom)| \leq k_{\rmw}$ that reaches the final state $\tuple{\Prog_{t_1}.\FinalState, \Prog_{t_2}.\FinalState, \dots, \Prog_{t_{\numThreads}}.\FinalState}$ of $\ConcProg$.

\Paragraph{Reducible traces with RMWs.}
To complete our proof of \cref{thm:bounded_context_switches}, we extend the definition of reducible traces as follows.
First, we redefine $\LatestWrite_{\LinearTrace}(\CSei, x)$ to properly handle RMW events:~$\LatestWrite_{\LinearTrace}(\event,x)$ is now the maximal \emph{writing} (i.e., either a plain write or an RMW) event $\event'(t, x)$ with respect to $\leq_{\pi_{c}}$ such that $\event'(t,x)\leq_{\pi_{c}} \event$. 
Second, we deem two events $\CSei$ and $\CSej$ collapsible if, in addition to \cref{item:cond_red_1,item:cond_red_2,item:cond_red_3,item:cond_red_4}, they satisfy the following condition:
\begin{enumerate*}[label=(\roman*)]
\addtocounter{enumi}{4}
\item for all $x \in \LocationDom$, if $\LatestWrite_{\LinearTrace}(\CSei, x)\neq \LatestWrite_{\LinearTrace}(\CSej, x)$, then $\op(\LatestWrite_{\LinearTrace}(\CSei, x))=\wt$.\label{item:cond_red_5}
\end{enumerate*}
When reducing a trace, this condition prevents inserting $\rf$-edges from RMWs, and thus avoids the need to further insert $\mo$ edges to RMWs, thereby avoiding the offending case of \cref{subfig:bcs_rmw_2}.
The following lemma bounds the size of irreducible traces, generalizing \cref{lem:irreducible_small}.

\begin{restatable}{lemma}{lemmairreduciblesmallrmw}\label{lem:irreducible_small_rmw}
For any concurrent program $\ConcProg$ and trace $\LinearTrace=\tuple{G, (\pi_1, \dots, \pi_{\ell})}$ of $\ConcProg$,
if $\LinearTrace$ is irreducible then $|G.\Events|=O(2^{\ell\cdot (|\ConcProg.\States|+|G.\Events \cap \RMWDom|)})$.
\end{restatable}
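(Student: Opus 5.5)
The bound is a pigeonhole count within a single run, in the same way as \cref{lem:irreducible_small}. The key point is that conditions \ref{item:cond_red_3}--\ref{item:cond_red_5} can only be falsified at a number of ``breakpoints'' that is bounded in terms of $\ell$ and $|G.\Events\cap\RMWDom|$. Since $|G.\Events| \le \sum_{c\le \ell}|\pi_c|$, it suffices to show that every run $\pi_c$ of an irreducible trace has length $O(2^{|\ConcProg.\States|+|G.\Events\cap\RMWDom|})$ times a factor depending only on $\ell$ and the finite domains $\LocationDom$, $\ValueDom$. The plan is to cut $\pi_c$ into a bounded number of \emph{blocks} by breakpoints, inside which conditions \ref{item:cond_red_1}, \ref{item:cond_red_3}, \ref{item:cond_red_4} and \ref{item:cond_red_5} hold automatically between any two events. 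Irreducibility then forces summaries within a block to be pairwise distinct, so each block has length at most the number of summaries. That number is at most $2^{|\Prog_t.\States|}\cdot(|\ValueDom|+1)^{|\LocationDom|}\cdot 2^{|\LocationDom|}$, which is $2^{O(|\ConcProg.\States|)}$ once $\LocationDom$ and $\ValueDom$ are treated as program constants.

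The breakpoints are defined as follows. For \ref{item:cond_red_3}, every write in $\pi_c$ that is read by an event outside $\pi_c$ is a breakpoint. Events of a single run are totally $\po$-ordered, and later runs read at most one latest-visible write per location and run, so I would count these via $\hb$-maximal writes: at most $|\LocationDom|\cdot\ell$ such writes matter. For \ref{item:cond_red_4}, observe that for a fixed $x$, as $\event$ moves forward in $\pi_c$, the set $\{\event' \notin G.\Events^{t} : \LatestWrite(\event,x)\LTo{\hb}\event'\}$ can only shrink, because later writes are $\po$-later. Moreover, this set is determined by which other runs $\pi_{c'}$ (with $c'>c$) are reached, together with the $\po$-earliest event reached in each. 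Since $\hb$ into another run $\pi_{c'}$ must exit $\pi_c$ through an $\rf$-edge, the set changes only at writes read outside $\pi_c$. These are exactly the same breakpoints, so the set changes at most $O(\ell\cdot|\LocationDom|)$ times. For \ref{item:cond_red_5}, every RMW in $\pi_c$ is a breakpoint, and there are at most $|G.\Events\cap\RMWDom|$ of them. Between two consecutive breakpoints, $\LatestWrite(\cdot,x)$ either stays fixed or moves only among plain writes, so \ref{item:cond_red_5} holds throughout the block.

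With this, each run splits into at most $B=O(\ell|\LocationDom| + |G.\Events\cap\RMWDom|)$ blocks, each of length at most the number of distinct summaries. This gives $|\pi_c| = O(B\cdot 2^{O(|\ConcProg.\States|)})$, and summing over runs yields a total of $O(\ell B 2^{O(|\ConcProg.\States|)})$. That is comfortably within $O(2^{\ell\cdot(|\ConcProg.\States|+|G.\Events\cap\RMWDom|)})$, because polynomial factors in $\ell$ and in the RMW count are absorbed by the exponent.

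The main obstacle is the monotonicity argument for \ref{item:cond_red_4}. One has to check carefully that the $\hb$-reach of $\LatestWrite(\event,x)$ into other threads changes only at the finitely many $\rf$-exits from $\pi_c$, rather than at every new local write. I would approach this by first proving that any $\hb$-path from an event of $\pi_c$ to a foreign event starts with a $\po$-suffix inside $\pi_c$ followed by an $\rf$-edge out of $\pi_c$. Then the reach of $\LatestWrite(\event,x)$ equals the union of the reaches of the outgoing $\rf$-sources that are $\po$-after it, which is a suffix union that changes only at those sources. If the crude count of outgoing sources is too large, I would allow them to contribute the $2^{\ell}$-type factor already present in the stated bound.
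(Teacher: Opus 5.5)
\textbf{Gap: the number of writes of $\pi_c$ read outside $\pi_c$ is not $O(|\LocationDom|\cdot\ell)$.}

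Your count of the breakpoints for \ref{item:cond_red_3} is the step that fails. You assume a later run reads at most one write of $\pi_c$ per location. Under Release/Acquire a read need not take the $\mo$-latest write visible to it. So a single later run can read arbitrarily many distinct writes of $\pi_c$, one after another in $\mo$-order; this is exactly what the verifier threads of \cref{SEC:UNDECRA} do.

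A concrete counterexample with two runs:
\begin{itemize}
\item Let $t$ write $x$ $K$ times in $\pi_1$ via a one-state loop.
\item Let $t'$ read these writes one by one in $\pi_2$ while stepping through $K$ distinct states.
\item The trace is consistent and irreducible. Summaries in $\pi_2$ are pairwise distinct, and any two events of $\pi_1$ are separated by a write that is read in $\pi_2$.
\item Yet $\pi_1$ contains $K$ writes read outside it, although $\ell=2$ and $|\LocationDom|=1$.
\end{itemize}

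A sharper count will not rescue the argument either. Iterate the idea:
\begin{itemize}
\item Let $A$ first write $x$ arbitrarily often. Afterwards, $A$ and $B$ each repeat a cycle of $K$ reads of the other's location followed by one write to their own.
\item Runs alternate $A,B,A,\dots$. The $k$-th read of each location reads its $k$-th write, and the last run consists of $K$ reads.
\item Every write is then read in the next run, and equal summaries within a run are separated by such a write.
\end{itemize}
This gives irreducible traces of a \emph{fixed} program with $\Theta(K^{\ell-1})$ events. So your intermediate bound $O(\ell B\,2^{O(|\ConcProg.\States|)})$, which is polynomial in $\ell$, is false. The fallback to ``a $2^{\ell}$-type factor'' does not help: the number of outgoing $\rf$-sources of $\pi_c$ is not bounded by any function of $\ell$ and the finite domains alone.

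\textbf{What is missing: a backward recursion over runs.}

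The paper handles this by recursing from the last run. The last run has no writes read later, so $|\pi_\ell|\le|\SummaryDom_{\LinearTrace}|$. For $c<\ell$, every write of $\pi_c$ read outside $\pi_c$ is the $\rf$-source of some event in $\pi_{c+1},\dots,\pi_\ell$; earlier runs cannot read it, since $\pi$ extends $\hb$. Hence there are at most $\sum_{j>c}|\pi_j|$ such writes, and your block count becomes
\[
g(c)\le|\SummaryDom_{\LinearTrace}|+(|\SummaryDom_{\LinearTrace}|+1)\bigl((|\LocationDom|+1)\textstyle\sum_{j>c}g(j)+|G.\Events\cap\RMWDom|\bigr).
\]
Unrolled from $c=\ell$ downward, this is a product of $\ell$ factors of order $|\SummaryDom_{\LinearTrace}|$. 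That is why $\ell$ multiplies $|\ConcProg.\States|$ in the exponent.

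\textbf{A smaller repair to your blocks.}

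Breakpoints placed at the sources themselves and at the RMWs themselves do not make \ref{item:cond_red_4} and \ref{item:cond_red_5} hold inside a block. Suppose either
\begin{itemize}
\item $b<_{\pi}\CSei$ is a source with $\LatestWrite_{\LinearTrace}(\CSei,x)\le_{\pi}b$, or
\item $b=\LatestWrite_{\LinearTrace}(\CSei,x)$ is an RMW.
\end{itemize}
Then the first write to $x$ after $b$ can lie strictly inside the block, and a pair straddling it can violate \ref{item:cond_red_4} or \ref{item:cond_red_5}, respectively. The foreign $\hb$-reach of $\LatestWrite_{\LinearTrace}(\event,x)$, as a function of $\event$, changes at that write, not at $b$. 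The fix is to add, for each such $b$ and each $x$, the first write to $x$ after $b$ as a breakpoint. This costs the factor $|\LocationDom|+1$ that appears in the paper's count.
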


Finally, given a reducible trace $\tuple{G, (\pi_1, \pi_2, \dots, \pi_{\ell})}$, we construct a trace $\tuple{G', (\pi'_1, \pi'_2, \dots, \pi'_{\ell})}$ by following the process described in \cref{SUBSEC:IRREDUCIDBLE_REACHABILITY}.
It remains to argue that $G'$ is consistent.

\Paragraph{The consistency of $G'$.}
The proof for \ref{eq:hbirr}, \ref{eq:wc} and \ref{eq:rc} remains the same as in \cref{SUBSEC:IRREDUCIDBLE_REACHABILITY}.
We further need to argue that $G'$ does not violate \ref{eq:at}. 
A violation is witnessed by a triplet of distinct events $\tuple{\wt, \rmw,\CSwp}$ such that 
\begin{enumerate*}[label=(\roman*)]
\item\label{item:at_violation_rf} $\wt \LTo{G'.\rf} \rmw$,
\item\label{item:at_violation_mo1} $\CSwp \LTo{G'.\mo} \rmw $, and
\item\label{item:at_violation_mo2} $\wt \LTo{G'.\mo} \CSwp$.
\end{enumerate*}
We argue that any triplet $\tuple{\wt, \rmw,\CSwp}$ that satisfies
\cref{item:at_violation_rf} and \cref{item:at_violation_mo1} is such that $\CSwp\LTo{G'.\mo} \wt$, i.e., it fails \cref{item:at_violation_mo2}. 
Due to \cref{item:cond_red_5} of collapsibility, no $\mo$-edges pointing to RMWs are introduced in $G'$, thus $\CSwp \LTo{G.\mo} \rmw$.
We consider two cases, depending on whether $\wt\LTo{G.\rf}\rmw$, i.e., whether the writer of $\rmw$ is the same in $G$ and $G'$.

\SubParagraph{The case of $\wt\LTo{G.\rf}\rmw$.} 
Since also $\CSwp\LTo{G.\mo}\rmw$ and $G$ satisfies \ref{eq:at}, we have $\CSwp \LTo{G.\mo} \wt$.
For the last relation to be reversed in $G'$, i.e., $\wt \LTo{G'.\mo} \CSwp$, it follows by construction that $\CSwp = \LatestWrite_{\LinearTrace}(\CSei, x)$ and there is some removed event $\CSwj = \LatestWrite_{\LinearTrace}(\CSej, x)$ such that $\wt\LTo{G.\mo} \CSwj$.
Since $G$ satisfies \ref{eq:at}, we have $\rmw\LTo{G.\mo} \CSwj$.
In this case, however, the construction also forces $\rmw \LTo{G'.\mo} \CSwp$, a contradiction.

\SubParagraph{The case of $\wt\LTo{G'.\rf\setminus G.\rf}\rmw$.} 
Observe that, in this case, we have $\wt = \LatestWrite_{\LinearTrace}(\CSei, x)$ and $\LatestWrite_{\LinearTrace}(\CSej,x) \LTo{G.\rf_x} \rmw$.  
Let $\CSwj= \LatestWrite_{\LinearTrace}(\CSej,x)$ be the original writer of $\rmw$. 
As already argued in the proof of read coherence of \cref{SUBSEC:IRREDUCIDBLE_REACHABILITY}, 
$\wt$ also obtains in $G'$ the $\mo$ predecessors of $\CSwj$ in $G$.
Since $\tuple{\CSwj, \rmw, \CSwp}$ does not violate \ref{eq:at} in $G$, we have $\CSwp \LTo{G.\mo} \CSwj$, and thus also $\CSwp \LTo{G'.\mo} \CSwj$

We thus obtain that $G'$ satisfies \ref{eq:at}, as desired.

\section{Related Work}\label{SEC:RELATED_WORK}

The decidability of verification, typically instantiated in its most basic form of state reachability, has been a subject of systematic study under weak memory.
In~\cite{Atig2010,Atig2012}, it was shown that $\tsomm$ and $\psomm$ admit decidable verification, but the problem becomes undecidable when allowing read to read/write reorderings.
Verification was recently shown to be undecidable under Power~\cite{Abdulla2021b}.
Release/Acquire semantics have also been studied in weak memory under close variants, namely Strong/Weak/Localized Release/Acquire~\cite{Lahav2022,Singh2024}, and shown to all admit decidable verification.
This indicates that the undecidability of standard Release/Acquire depends intricately on the specifics of this model, 
and may partly explain why the undecidability of RMW-free programs has been open for this long.

A significant body of work has focused on analyzing concurrent programs under some sort of context bounding.
This setting has been practically utilized for programs under Sequential Consistency~\cite{Qadeer2005,Musuvathi2007,Torre2009}, and its computational complexity has been well-characterized~\cite{Esparza2014,Chini2017,Baumann2020}.
Bounded context switches have also been employed  in TSO~\cite{Atig2011,Atig2021} and Power~\cite{Abdulla2017}.
Since undecidability under Release/Acquire with RMWs already holds for executions with just 4 contexts, \cite{Abdulla19} developed the stronger restriction of ``view switches'', and showed that decidability is recovered under bounded view switches.
\section{Conclusion}\label{SEC:CONCLUSION}
We have proven that the reachability problem of concurrent programs under Release/Acquire semantics is undecidable even for RMW-free programs, resolving an open question of~\cite{Abdulla19}.
Moreover, we have shown that RMW-free programs enjoy decidable verification under bounded context switches, in contrast to programs with RMWs, for which undecidability already holds with 4-context executions.
Interesting future work includes establishing tight complexity bounds (though not stated explicitly, our proofs imply that the problem lies between $\NP$ and $\NEXPTIME$).


\bibliographystyle{ACM-Reference-Format}
\bibliography{bibliography}

@article{Musuvathi2007,
author = {Musuvathi, Madanlal and Qadeer, Shaz},
title = {Iterative context bounding for systematic testing of multithreaded programs},
year = {2007},
issue_date = {June 2007},
publisher = {Association for Computing Machinery},
address = {New York, NY, USA},
volume = {42},
number = {6},
issn = {0362-1340},
url = {https://doi.org/10.1145/1273442.1250785},
doi = {10.1145/1273442.1250785},
journal = {SIGPLAN Not.},
month = jun,
pages = {446–455},
numpages = {10}
}

@inproceedings{Torre2009,
author = {Torre, Salvatore and Madhusudan, P. and Parlato, Gennaro},
title = {Reducing Context-Bounded Concurrent Reachability to Sequential Reachability},
year = {2009},
isbn = {9783642026577},
publisher = {Springer-Verlag},
address = {Berlin, Heidelberg},
url = {https://doi.org/10.1007/978-3-642-02658-4_36},
doi = {10.1007/978-3-642-02658-4_36},
booktitle = {Proceedings of the 21st International Conference on Computer Aided Verification},
pages = {477–492},
numpages = {16},
location = {Grenoble, France},
series = {CAV '09}
}

@InProceedings{Atig2012,
author="Atig, Mohamed Faouzi
and Bouajjani, Ahmed
and Burckhardt, Sebastian
and Musuvathi, Madanlal",
editor="Seidl, Helmut",
title="What's Decidable about Weak Memory Models?",
booktitle="Programming Languages and Systems",
year="2012",
publisher="Springer Berlin Heidelberg",
address="Berlin, Heidelberg",
pages="26--46",
isbn="978-3-642-28869-2"
}

@misc{Dijkstra1962,
  author       = {Edsger W. Dijkstra},
  title        = {About the Sequentiality of Process Descriptions},
  howpublished = {EWD 35 (English translation)},
  year         = {1962},
  note         = {Translated by Martien van der Burgt and Heather Lawrence, available at \url{https://www.cs.utexas.edu/~EWD/translations/EWD35-English.html}},
  url          = {https://www.cs.utexas.edu/~EWD/translations/EWD35-English.html}
}

@article{Peterson1981,
title = {Myths about the mutual exclusion problem},
journal = {Information Processing Letters},
volume = {12},
number = {3},
pages = {115-116},
year = {1981},
issn = {0020-0190},
doi = {https://doi.org/10.1016/0020-0190(81)90106-X},
url = {https://www.sciencedirect.com/science/article/pii/002001908190106X},
author = {G.L. Peterson}
}

@InProceedings{Atig2011,
author="Atig, Mohamed Faouzi
and Bouajjani, Ahmed
and Parlato, Gennaro",
editor="Gopalakrishnan, Ganesh
and Qadeer, Shaz",
title="Getting Rid of Store-Buffers in TSO Analysis",
booktitle="Computer Aided Verification",
year="2011",
publisher="Springer Berlin Heidelberg",
address="Berlin, Heidelberg",
pages="99--115",
isbn="978-3-642-22110-1"
}

@inproceedings{Singh2024,
author = {Singh, Abhishek Kr and Lahav, Ori},
title = {Decidable Verification under Localized Release-Acquire Concurrency},
year = {2024},
isbn = {978-3-031-57255-5},
publisher = {Springer-Verlag},
address = {Berlin, Heidelberg},
url = {https://doi.org/10.1007/978-3-031-57256-2_12},
doi = {10.1007/978-3-031-57256-2_12},
booktitle = {Tools and Algorithms for the Construction and Analysis of Systems: 30th International Conference, TACAS 2024, Held as Part of the European Joint Conferences on Theory and Practice of Software, ETAPS 2024, Luxembourg City, Luxembourg, April 6–11, 2024, Proceedings, Part III},
pages = {235–254},
numpages = {20},
location = {Luxembourg City, Luxembourg}
}

@inproceedings{Atig2010,
  title = {On the Verification Problem for Weak Memory Models},
  booktitle = {Proceedings of the 37th Annual {{ACM SIGPLAN-SIGACT}} Symposium on {{Principles}} of Programming Languages},
  author = {Atig, Mohamed Faouzi and Bouajjani, Ahmed and Burckhardt, Sebastian and Musuvathi, Madanlal},
  year = {2010},
  month = jan,
  series = {{{POPL}} '10},
  pages = {7--18},
  publisher = {Association for Computing Machinery},
  address = {New York, NY, USA},
  doi = {10.1145/1706299.1706303},
  isbn = {978-1-60558-479-9}
}

@inproceedings{Vafeiadis2015,
  title = {Common {{Compiler Optimisations}} Are {{Invalid}} in the {{C11 Memory Model}} and What We Can Do about It},
  booktitle = {Proceedings of the 42nd {{Annual ACM SIGPLAN-SIGACT Symposium}} on {{Principles}} of {{Programming Languages}}},
  author = {Vafeiadis, Viktor and Balabonski, Thibaut and Chakraborty, Soham and Morisset, Robin and Zappa Nardelli, Francesco},
  year = 2015,
  month = jan,
  pages = {209--220},
  publisher = {ACM},
  address = {Mumbai India},
  doi = {10.1145/2676726.2676995},
  isbn = {978-1-4503-3300-9}
}

@article{Esparza2014,
  title = {Pattern-{{Based Verification}} for {{Multithreaded Programs}}},
  author = {Esparza, Javier and Ganty, Pierre and Poch, Tom{\'a}{\v s}},
  year = {2014},
  month = sep,
  journal = {ACM Trans. Program. Lang. Syst.},
  volume = {36},
  number = {3},
  pages = {9:1--9:29},
  issn = {0164-0925},
  doi = {10.1145/2629644}
}

@article{Chini2017,
  title = {On the {{Complexity}} of {{Bounded Context Switching}}},
  author = {Chini, Peter and Kolberg, Jonathan and Krebs, Andreas and Meyer, Roland and Saivasan, Prakash},
  year = {2017},
  journal = {LIPIcs, Volume 87, ESA 2017},
  volume = {87},
  pages = {27:1-27:15},
  publisher = {Schloss Dagstuhl -- Leibniz-Zentrum f{\"u}r Informatik},
  issn = {1868-8969},
  doi = {10.4230/LIPICS.ESA.2017.27},
  collaborator = {Pruhs, Kirk and Sohler, Christian},
  copyright = {Creative Commons Attribution 3.0 Unported license, info:eu-repo/semantics/openAccess},
  isbn = {9783959770491}
}

@article{Atig2021,
author = {Atig, Mohamed Faouzi},
title = {What is decidable under the TSO memory model?},
year = {2021},
issue_date = {October 2020},
publisher = {Association for Computing Machinery},
address = {New York, NY, USA},
volume = {7},
number = {4},
url = {https://doi.org/10.1145/3458593.3458595},
doi = {10.1145/3458593.3458595},
journal = {ACM SIGLOG News},
month = mar,
pages = {4–19},
numpages = {16}
}

@INPROCEEDINGS {Abdulla1993,
author = { Abdulla, P. and Jonsson, B. },
booktitle = { Proceedings of 8th Annual IEEE Symposium on Logic in Computer Science },
title = {{ Verifying programs with unreliable channels }},
year = {1993},
volume = {},
ISSN = {},
pages = {160,161,162,163,164,165,166,167,168,169,170},
doi = {10.1109/LICS.1993.287591},
url = {https://doi.ieeecomputersociety.org/10.1109/LICS.1993.287591},
publisher = {IEEE Computer Society},
address = {Los Alamitos, CA, USA},
month =Jun}

@article{Schnoebelen2002,
  title = {Verifying Lossy Channel Systems Has Nonprimitive Recursive Complexity},
  author = {Schnoebelen, {\relax Ph}.},
  year = {2002},
  month = sep,
  journal = {Information Processing Letters},
  volume = {83},
  number = {5},
  pages = {251--261},
  issn = {00200190},
  doi = {10.1016/S0020-0190(01)00337-4},
  copyright = {https://www.elsevier.com/tdm/userlicense/1.0/}
}

@InProceedings{Baumann2020,
  author =	{Baumann, Pascal and Majumdar, Rupak and Thinniyam, Ramanathan S. and Zetzsche, Georg},
  title =	{{The Complexity of Bounded Context Switching with Dynamic Thread Creation}},
  booktitle =	{47th International Colloquium on Automata, Languages, and Programming (ICALP 2020)},
  pages =	{111:1--111:16},
  series =	{Leibniz International Proceedings in Informatics (LIPIcs)},
  ISBN =	{978-3-95977-138-2},
  ISSN =	{1868-8969},
  year =	{2020},
  volume =	{168},
  editor =	{Czumaj, Artur and Dawar, Anuj and Merelli, Emanuela},
  publisher =	{Schloss Dagstuhl -- Leibniz-Zentrum f{\"u}r Informatik},
  address =	{Dagstuhl, Germany},
  URL =		{https://drops.dagstuhl.de/entities/document/10.4230/LIPIcs.ICALP.2020.111},
  URN =		{urn:nbn:de:0030-drops-125187},
  doi =		{10.4230/LIPIcs.ICALP.2020.111}
}

@article{Lahav2022,
  title = {What's {{Decidable About Causally Consistent Shared Memory}}?},
  author = {Lahav, Ori and Boker, Udi},
  year = {2022},
  month = jun,
  journal = {ACM Transactions on Programming Languages and Systems},
  volume = {44},
  number = {2},
  pages = {1--55},
  issn = {0164-0925, 1558-4593},
  doi = {10.1145/3505273}
}

@inproceedings{Sarkar2009,
  title = {The Semantics of X86-{{CC}} Multiprocessor Machine Code},
  booktitle = {Proceedings of the 36th Annual {{ACM SIGPLAN-SIGACT}} Symposium on {{Principles}} of Programming Languages},
  author = {Sarkar, Susmit and Sewell, Peter and Nardelli, Francesco Zappa and Owens, Scott and Ridge, Tom and Braibant, Thomas and Myreen, Magnus O. and Alglave, Jade},
  year = 2009,
  month = jan,
  series = {{{POPL}} '09},
  pages = {379--391},
  publisher = {Association for Computing Machinery},
  address = {New York, NY, USA},
  doi = {10.1145/1480881.1480929},
  isbn = {978-1-60558-379-2}
}

@article{Alglave2012,
  title = {A Formal Hierarchy of Weak Memory Models},
  author = {Alglave, Jade},
  year = 2012,
  month = oct,
  journal = {Formal Methods in System Design},
  volume = {41},
  number = {2},
  pages = {178--210},
  issn = {1572-8102},
  doi = {10.1007/s10703-012-0161-5}
}

@inproceedings{Batty2011,
  title = {Mathematizing {{C}}++ Concurrency},
  booktitle = {Proceedings of the 38th Annual {{ACM SIGPLAN-SIGACT}} Symposium on {{Principles}} of Programming Languages},
  author = {Batty, Mark and Owens, Scott and Sarkar, Susmit and Sewell, Peter and Weber, Tjark},
  year = {2011},
  month = jan,
  series = {{{POPL}} '11},
  pages = {55--66},
  publisher = {Association for Computing Machinery},
  address = {New York, NY, USA},
  doi = {10.1145/1926385.1926394},
  url = {https://dl.acm.org/doi/10.1145/1926385.1926394},
  isbn = {978-1-4503-0490-0}
}

@InProceedings{Castaneda24,
  author =	{Casta\~{n}eda, Armando and Chockler, Gregory and Dongol, Brijesh and Lahav, Ori},
  title =	{{What Cannot Be Implemented on Weak Memory?}},
  booktitle =	{38th International Symposium on Distributed Computing (DISC 2024)},
  pages =	{11:1--11:22},
  series =	{Leibniz International Proceedings in Informatics (LIPIcs)},
  ISBN =	{978-3-95977-352-2},
  ISSN =	{1868-8969},
  year =	{2024},
  volume =	{319},
  editor =	{Alistarh, Dan},
  publisher =	{Schloss Dagstuhl -- Leibniz-Zentrum f{\"u}r Informatik},
  address =	{Dagstuhl, Germany},
  URL =		{https://drops.dagstuhl.de/entities/document/10.4230/LIPIcs.DISC.2024.11},
  URN =		{urn:nbn:de:0030-drops-212371},
  doi =		{10.4230/LIPIcs.DISC.2024.11}
}

@inproceedings{Abdulla19,
author = {Abdulla, Parosh Aziz and Arora, Jatin and Atig, Mohamed Faouzi and Krishna, Shankaranarayanan},
title = {Verification of programs under the release-acquire semantics},
year = {2019},
isbn = {9781450367127},
publisher = {Association for Computing Machinery},
address = {New York, NY, USA},
url = {https://doi.org/10.1145/3314221.3314649},
doi = {10.1145/3314221.3314649},
booktitle = {Proceedings of the 40th ACM SIGPLAN Conference on Programming Language Design and Implementation},
pages = {1117–1132},
numpages = {16},
location = {Phoenix, AZ, USA},
series = {PLDI 2019}
}

@article{Post46,
  title={A variant of a recursively unsolvable problem},
  author={Emil L. Post},
  journal={Bulletin of the American Mathematical Society},
  year={1946},
  volume={52},
  pages={264-268},
  url={https://api.semanticscholar.org/CorpusID:122948861}
}

@article{Pulte2017,
author = {Pulte, Christopher and Flur, Shaked and Deacon, Will and French, Jon and Sarkar, Susmit and Sewell, Peter},
title = {Simplifying ARM concurrency: multicopy-atomic axiomatic and operational models for ARMv8},
year = {2018},
issue_date = {January 2018},
publisher = {Association for Computing Machinery},
address = {New York, NY, USA},
volume = {2},
number = {POPL},
url = {https://doi.org/10.1145/3158107},
doi = {10.1145/3158107},
journal = {Proc. ACM Program. Lang.},
month = dec,
articleno = {19},
numpages = {29}
}

@article{Lamport1979,
  title = {How to {{Make}} a {{Multiprocessor Computer That Correctly Executes Multiprocess Programs}}},
  author = {{Lamport}},
  year = 1979,
  month = sep,
  journal = {IEEE Transactions on Computers},
  volume = {C-28},
  number = {9},
  pages = {690--691},
  issn = {1557-9956},
  doi = {10.1109/TC.1979.1675439}
}

@InProceedings{Abdulla2021b,
author="Abdulla, Parosh Aziz
and Atig, Mohamed Faouzi
and Bouajjani, Ahmed
and Derevenetc, Egor
and Leonardsson, Carl
and Meyer, Roland",
editor="Georgiou, Chryssis
and Majumdar, Rupak",
title="On the State Reachability Problem for Concurrent Programs Under Power",
booktitle="Networked Systems",
year="2021",
publisher="Springer International Publishing",
address="Cham",
pages="47--59",
isbn="978-3-030-67087-0"
}

@inproceedings{Abdulla2017,
author = {Abdulla, Parosh Aziz and Atig, Mohamed Faouzi and Bouajjani, Ahmed and Ngo, Tuan Phong},
title = {Context-Bounded Analysis for POWER},
year = {2017},
isbn = {9783662545799},
publisher = {Springer-Verlag},
address = {Berlin, Heidelberg},
url = {https://doi.org/10.1007/978-3-662-54580-5_4},
doi = {10.1007/978-3-662-54580-5_4},
booktitle = {Proceedings, Part II, of the 23rd International Conference on Tools and Algorithms for the Construction and Analysis of Systems - Volume 10206},
pages = {56–74},
numpages = {19}
}

@INPROCEEDINGS{Kozen1977,
  author={Kozen, Dexter},
  booktitle={18th Annual Symposium on Foundations of Computer Science (sfcs 1977)}, 
  title={Lower bounds for natural proof systems}, 
  year={1977},
  volume={},
  number={},
  pages={254-266},
  doi={10.1109/SFCS.1977.16}}

@article{Sistla1985,
author = {Sistla, A. P. and Clarke, E. M.},
title = {The complexity of propositional linear temporal logics},
year = {1985},
issue_date = {July 1985},
publisher = {Association for Computing Machinery},
address = {New York, NY, USA},
volume = {32},
number = {3},
issn = {0004-5411},
url = {https://doi.org/10.1145/3828.3837},
doi = {10.1145/3828.3837},
journal = {J. ACM},
month = jul,
pages = {733–749},
numpages = {17}
}

@InProceedings{Qadeer2005,
author="Qadeer, Shaz
and Rehof, Jakob",
editor="Halbwachs, Nicolas
and Zuck, Lenore D.",
title="Context-Bounded Model Checking of Concurrent Software",
booktitle="Tools and Algorithms for the Construction and Analysis of Systems",
year="2005",
publisher="Springer Berlin Heidelberg",
address="Berlin, Heidelberg",
pages="93--107",
isbn="978-3-540-31980-1"
}

\pagebreak

\appendix
\section{Proofs of \cref{SEC:UNDECRA}}\label{SEC:APP_UNDECRA}

\subsection{Proofs of \cref{SUBSEC:UNDECIDABILITY_SOUNDNESS}}

\undecidabilitynoskippingpairs*

\begin{proof}[Proof of  \cref{item:no_skip_0,item:no_skip_1}]

These invariants are established in pairs, starting with
\[
\wt_i(\taxp, \vzax) \LTo{\rf} \rd_i(\tax, \vzax), \quad
\wt_i(\tax, \vzaxp) \LTo{\rf} \rd_i(\taxp,\vzaxp)\ ,
\]
and generally, each pair involves writes from a thread $t$ read by thread $t'$ and vice versa.
The proof of each pair has identical structure, and is based on the fact that each of the involved threads alternates between writing to one of the involved locations and reading from the other.
To avoid redundancy, we only argue about the first pair.
Our proof follows an induction on $i$.

First, observe that $\wt_1(\taxp, \vzax) \LTo{\rf} \rd_1(\tax, \vzax)$.
This holds because $\rd_1(\tax, \vzax)$ must read value $1$.
Any later write $\wt_j(\taxp, \vzax)$ that writes this value is such that $j\geq 5$.
However, any such $\wt_j(\taxp, \vzax)$ is preceded by $\rd_4(\taxp, \vzaxp)$, which reads value $0$.
In turn, this value is written by writes of $\tax$ that appear after $\rd_1(\tax, \vzax)$. 
Hence, if $\rd_1(\tax, \vzax)$ were to read from any write $\wt_j(\taxp, \vzax)$ with $j\geq 2$, this would cause an $\hb$-cycle, violating \ref{eq:hbirr}.
Symmetric reasoning on $\rd_1(\taxp, \vzaxp)$ establishes that $\wt_1(\tax,\vzaxp)\LTo{\rf} \rd_1(\taxp, \vzaxp)$.

The argument proceeds inductively.
In particular, $\rd_i(\tax, \vzax)$ cannot read from any $\wt_{j}(\taxp, \vzax)$, with $j<i$, because
\begin{enumerate*}[label=(\roman*)]
\item For all $j<i-1$, $\wt_{j}(\taxp, \vzax)\LTo{\mo}\wt_{i-1}(\taxp, \vzax)$,
\item by the induction hypothesis, we have  $\wt_{i-1}(\taxp, \vzax)\LTo{\hb}\rd_{i-1}(\tax, \vzax)$ and thus  $\wt_{i-1}(\taxp, \vzax)\LTo{\hb}\rd_i(\tax, \vzax)$, and
\item $\val_{\wt}(\wt_{i-1}(\taxp, \vzax))\neq \val_{\rd}(\rd_i(\tax, \vzax))$.
\end{enumerate*}
Therefore, reading from any $\wt_{j}(\taxp, \vzax)$ that writes $\val_{\rd}(\rd_i(\tax, \vzax))$, with $j<i$, would violate \ref{eq:rc}.

Moreover, any later write $\wt_j(\taxp, \vzax)$ with $\val_{\wt}(\wt_j(\taxp, \vzax))=\val_{\rd}(\rd_i(\tax, \vzax))$ is such that $j\geq i+4$.
However, any such $\wt_j(\taxp, \vzax)$ is preceded by $\rd_{i+2}(\taxp, \vzaxp)$.
In turn, $\rd_{i+2}(\taxp, \vzaxp)$ must read its value from a write that appears after $\rd_i(\tax, \vzax)$ in $\tax$.
This is because any earlier write that writes $\val_{\rd}(\rd_{i+2}(\taxp, \vzaxp))$ would be of the form $\wt_{\ell}(\tax, \vzaxp)$, with $\ell<i-1$, and, by the induction hypothesis, we have that $\wt_{i-1}(\tax, \vzaxp)\LTo{\rf }\rd_{i-1}(\taxp, \vzaxp)$; therefore, $\rd_{i+2}(\taxp, \vzaxp)$ reading from $\wt_{\ell}(\tax, \vzaxp)$ would violate \ref{eq:rc}.
This implies that $\rd_{i+2}(\taxp, \vzaxp)$ must read its value from some write later than $\wt_{i}(\tax, \vzaxp)$ in $\tax$, and thus also later than $\rd_i(\tax, \vzax)$.
Hence, if $\wt_j(\taxp, \vzax)\LTo{\rf}\rd_i(\tax, \vzax)$ with $j>i$, we would have a violation of \ref{eq:hbirr}.
We thus arrive at $\wt_i(\taxp, \vzax)\LTo{\rf}\rd_i(\tax, \vzax)$, as desired.
Symmetric reasoning on $\rd_i(\taxp, \vzaxp)$ establishes that $\wt_i(\tax,\vzaxp)\LTo{\rf} \rd_i(\taxp, \vzaxp)$.

The desired result follows.
\end{proof}

\begin{proof}[Proof of \cref{item:no_skip_2}]

We only argue about the first pair
\[
\wt_i(\tax, \vxa) \LTo{\rf} \rd_i(\tx, \vxa), \quad
\wt_i(\taxp, \vxap) \LTo{\rf} \rd_i(\txp, \vxap)\ ,
\]
as all remaining pairs follow the same argument.
The proof follows an induction on $i$.

First, observe that $\wt_1(\tax, \vxa)\LTo{\rf}\rd_1(\tx,\vxa)$.
This holds because $\rd_1(\tx,\vxa)$ must read value $\tuple{\ov{\tax}, \AnyValue}$, which can be only written by $\wt_1(\tax, \vxa)$.
Since $\wt_1(\tx, \vxa)\LTo{\hb} \rd_1(\tx,\vxa)$, this implies that $\wt_1(\tx, \vxa)\LTo{\mo}\wt_1(\tax, \vxa)$, due to \ref{eq:rc}.
Due to \cref{item:no_skip_0}, we have that $\wt_1(\tax, \vzaxp) \LTo{\rf} \rd_1(\taxp,\vzaxp)$, which implies that $\wt_1(\tax, \vxa)\LTo{\hb} \wt_i(\taxp, \vxap)$ for all $i \geq 2$.
If $\wt_i(\taxp,\vxap)\LTo{\rf}\rd_1(\txp, \vxap)$ for any $i \geq 2$, then we would have $\wt_1(\tax, \vxa)\LTo{\hb}\rd_1(\txp,\vxa)$.
Since, by \cref{item:no_skip_1}, we have $\wt_1(\tx, \vxa) \LTo{\rf} \rd_1(\txp, \vxa)$, together with $\wt_1(\tx, \vxa)\LTo{\mo}\wt_1(\tax, \vxa)$, this would create a violation of \ref{eq:rc}.
We thus have $\wt_1(\taxp,\vxap)\LTo{\rf}\rd_1(\txp,\vxap)$.

The argument then repeats inductively.
In particular, $\rd_i(\tx,\vxa)$ cannot  read from any $\wt_j(\tax, \vxa)$ with $j<i$, because, by the induction hypothesis, we have $\wt_j(\tax, \vxa)\LTo{\rf}\rd_j(\tx,\vxa)$, while by construction, we have $\rd_j(\tx,\vxa)\LTo{\hb}\wt_i(\tx, \vxa)\LTo{\hb}\rd_i(\tx,\vxa)$.
These imply that $\wt_j(\tax, \vxa)\LTo{\hb}\wt_i(\tx, \vxa)$, and thus $\wt_j(\tax, \vxa)\LTo{\mo}\wt_i(\tx, \vxa)$ due to \ref{eq:wc}.
Together with $\wt_j(\tax, \vxa)\LTo{\rf}\rd_i(\tx,\vxa)$, the two facts $\wt_i(\tx, \vxa)\LTo{\hb}\rd_i(\tx,\vxa)$ and $\wt_j(\tax, \vxa)\LTo{\mo}\wt_i(\tx, \vxa)$ would violate \ref{eq:rc}.
Moreover, since $\wt_{i-1}(\taxp,\vxap)\LTo{\rf}\rd_{i-1}(\txp,\vxap)$ by the induction hypothesis,
and also $\wt_{i-1}(\txp, \vxap)\LTo{\hb}\rd_{i-1}(\txp,\vxap)$, we also have $\wt_{i-1}(\txp, \vxap)\LTo{\mo}\wt_{i-1}(\taxp,\vxap)$ due to \ref{eq:rc}.
Due to \cref{item:no_skip_0}, we have that $\wt_i(\taxp, \vzax)\LTo{\rf} \rd_i(\tax, \vzax)$, 
which implies that $\wt_{i-1}(\taxp, \vxap)\LTo{\hb}\wt_{j}(\tax, \vxa)$ for all $j>i$.
If $\wt_{j}(\tax, \vxa)\LTo{\rf}\rd_i(\tx,\vxa)$ for any $j>i$, then we would have $\wt_{i-1}(\taxp,\vxap)\LTo{\hb} \rd_{i-1}(\tx, \vxap)$.
Since, by the induction hypothesis, we have $\wt_{i-1}(\txp, \vxap) \LTo{\rf} \rd_{i-1}(\tx, \vxap)$, together with $\wt_{i-1}(\txp, \vxap)\LTo{\mo}\wt_{i-1}(\taxp,\vxap)$, this would violate \ref{eq:rc}.
We thus have $\wt_i(\tax, \vxa)\LTo{\rf}\rd_i(\tx,\vxa)$.

Finally, we argue that  $\wt_i(\taxp, \vxap) \LTo{\rf} \rd_i(\txp, \vxap)$.
In particular, $\rd_i(\txp, \vxap)$ cannot read from any $\wt_j(\taxp, \vxap)$ with $j<i$, because, by the induction hypothesis, we have $\wt_j(\taxp, \vxap)\LTo{\rf}\rd_j(\txp, \vxap)$, while by construction, we have $\rd_j(\txp, \vxap)\LTo{\hb}\wt_i(\txp, \vxap)\LTo{\hb} \rd_i(\txp, \vxap)$.
These imply that $\wt_j(\taxp, \vxap)\LTo{\hb} \wt_i(\txp, \vxap)$, and thus $\wt_j(\taxp, \vxap)\LTo{\mo} \wt_i(\txp, \vxap)$ due to \ref{eq:wc}.
Together with $\wt_j(\taxp, \vxap)\LTo{\rf}\rd_i(\txp, \vxap)$, the two facts $\wt_j(\taxp, \vxap)\LTo{\mo} \wt_i(\txp, \vxap)$ and $\wt_i(\txp, \vxap)\LTo{\hb}\rd_i(\txp, \vxap)$ would violate \ref{eq:rc}.
Moreover, since $\wt_i(\tax, \vxa)\LTo{\rf}\rd_i(\tx,\vxa)$ and $\wt_i(\tx, \vxa)\LTo{\hb} \rd_i(\tx,\vxa)$, this implies that $\wt_i(\tx, \vxa)\LTo{\mo}\wt_i(\tax, \vxa)$, due to \ref{eq:rc}.
Due to \cref{item:no_skip_0}, we have that $\wt_i(\tax, \vzaxp) \LTo{\rf} \rd_i(\taxp,\vzaxp)$, which implies that $\wt_i(\tax, \vxa)\LTo{\hb} \wt_j(\taxp, \vxap)$ for all $j>i$.
If $\wt_j(\taxp,\vxap)\LTo{\rf}\rd_i(\txp, \vxap)$ for any $j>i$, then we would have $\wt_i(\tax, \vxa)\LTo{\hb}\rd_i(\txp,\vxa)$.
Since, as previously established, $\wt_i(\tx, \vxa) \LTo{\rf} \rd_i(\txp, \vxa)$, together with $\wt_i(\tx, \vxa)\LTo{\mo}\wt_i(\tax, \vxa)$, this would violate \ref{eq:rc}.
We thus have $\wt_i(\taxp,\vxap)\LTo{\rf}\rd_i(\txp,\vxap)$.

The desired result follows.
\end{proof}

\subsection{Proofs of \cref{SUBSEC:UNDECIDABILITY_COMPLETENESS}}

Here we prove the auxiliary lemmas of \cref{SUBSEC:UNDECIDABILITY_COMPLETENESS}. 

\lempostrictmonotonic*
\begin{proof}
We analyze the $\po$ of different threads separately.
Let an \emph{iteration} be a sequence of events corresponding to one execution of the main loop of each thread, as presented in \cref{fig:pcp_code}
; for $\tax$ and $\tbx$, we use the innermost loop.

\Paragraph{Threads $\taxp$, $\tayp$, $\tbxp$, $\tbyp$, $\txp$ and $\typ$.} 
Each iteration in these threads executes its respective events exactly once and in the same order. 
Furthermore, no location is written to or read from twice in the same iteration. 
As a consequence, the indices of all events increase by exactly one in each iteration, and, in these threads, $\po$ is naturally non-decreasing (satisfying \cref{item:po_non_decreasing}); 
moreover, since each iteration in these threads executes a sequence of writes followed by a sequence of reads, a write event succeeding a read event cannot be in its same iteration, therefore every read-to-write $\po$-edge is increasing (satisfying \cref{item:po_increasing}). 

\Paragraph{Threads $\tx$, $\ty$, $\tay$ and $\tby$.} 
These threads skip a suffix of reads in the first iteration. 
Otherwise, its respective events are executed exactly once and in the same order and no location is written to or read from twice in the same iteration . 
For instance, in $\tx$, the $(i+1)$-th iteration contains $\rd_{i}(\vxap)$ and $\rd_{i}(\vxbp)$ rather than $\rd_{i+1}(\vxap)$ and $\rd_{i+1}(\vxbp)$.
However, all writes in these threads succeeding the $(i+1)$-th iteration are still of the form $\wt_{j}$ for $i+1<j$.
Since any $\po$-edge from a read in the $i+1$-th iteration to such a write satisfies $i+1<j$, $i<j$, and therefore \cref{item:po_increasing} holds.

\Paragraph{Threads $\tax$ and $\tbx$.}
We only consider $\tax$, as the case for $\tbx$ is symmetric. 
Let us first consider edges $\event_i \LTo{\po} \wt_j$ that do not include events on $\vzaxy$ or $\vzaxyp$. 
In this case, we are in the same scenario as $\tx$, $\ty$, $\tay$ and $\tby$, where a suffix of reads (namely, a read on $\vzaxp$) is skipped in the first iteration. 
As argued in the paragraph above, \cref{item:po_non_decreasing} and \cref{item:po_increasing} hold. 

Let us now consider edges that do include events on $\vzaxy$ or $\vzaxyp$.
Since these events are not non-bridge writes (and therefore candidates for $\wt_j$), we only need to consider the cases $\wt_i(\vzaxy) \LTo{\po} \wt_j$ and $\rd_i(\vzaxyp) \LTo{\po} \wt_j$. 

\SubParagraph{Case $\wt_i(\vzaxy) \LTo{\po} \wt_j$.} 
The inner loop of $\tax$ always executes at least once after an event $\wt_i(\vzaxy)$; 
therefore, this event precedes the $j'$-th iteration, where $j'\geq i$. 
Since all non-bridge writes of $\tax$ are located in its inner loop, all non-bridge writes $\po$-ordered after $\wt_i(\vzaxy)$ must all have index $j \geq j' \geq i$, satisfying \cref{item:po_non_decreasing}.

\SubParagraph{Case $\rd_i(\vzaxyp) \LTo{\po} \wt_j$.} 
The case above shows that $\po$-edges connecting a bridge write to a non-bridge writes are non-decreasing.
Therefore, all non-bridge writes $\wt_j$ such that $\wt_{i+1}(\vzaxy)\LTo{\po} \wt_j$ are such that $j\geq i+1$.
Since $\rd_i(\vzaxyp)$ immediately precedes $\wt_{i+1}(\vzaxy)$ in $\po$-order (by inspection of \cref{fig:pcp_code}),
all non-bridge writes $\wt_j$ such that $\rd_i(\vzaxyp)\LTo{\po} \wt_j$ 
are such that $j\geq i+1$, satisfying \cref{item:po_increasing} . 
\end{proof}

\lemdecreasingrrpo*
\begin{proof}
Since same-location $\po$-edges between read events must be increasing by definition, 
we only need to consider the threads with non-bridge reads on more than one locations. 
These are $\tx$, $\txp$, $\ty$ and $\typ$.
Let an \emph{iteration} be a sequence of events corresponding to one execution of the main loop of each thread, as presented in \cref{fig:pcp_code}.

\Paragraph{Threads $\txp$ and $\typ$.} 
These threads execute the same read events exactly once in each iteration, with no location being read from twice in the same iteration.
Therefore the indices of their events are, overall, non-decreasing.

\Paragraph{Threads $\tx$ and $\ty$.}
Thread $\tx$ skips reads on $\vxap$ and $\vxbp$ in the first iteration.
Otherwise, each iteration in these threads executes its respective events exactly once and in the same order. 
Furthermore, no location is written to or read from twice in the same iteration. 
 This means that, from the second iteration of thread $\tx$ onward, the two reads $\rd_i(\vxa)$ and  $\rd_i(\vxb)$ are succeeded by reads $\rd_{i-1}(\vxap)$ and  $\rd_{i-1}(\vxbp)$. 
The four corresponding pairs $\tuple{\rd_i, \rd_{i-1}}$ yield the stated decreasing edges. 

Thread $\ty$ follows the same construction, but with locations $\vya$, $\vyap$, $\vyb$ and $\vybp$, replacing $\vxa$, $\vxap$, $\vxb$, and $\vxbp$, respectively.
\end{proof}

\lemhbminimal*
\begin{proof}
Consider any alternating $\hb$-path $\Path\colon a\LPath{\hb}b$.
If $\Path$ is already minimal, we are done.
Otherwise, $\Path$ contains two $\po$-edges on the same thread, i.e., it has the form
\[
a \LPath{\hb^?} c \LTo{\po} d \LPath{\hb^?} e \LTo{\po} f \LPath{\hb^?} b\ .
\]
where $\tid(c)=\tid(f)$.
We argue that $c\LTo{\po}f$; otherwise, we have $f\LTo{\po}c$, and thus $f \LTo{\po} c \LTo{\po} d \LPath{\hb^?} e \LTo{\po} f$ is an $\hb$-cycle, contradicting \cref{lem:hbirr}.
We can therefore replace the sub-path from $c$ to $f$ by the single $\po$-edge $c\LTo{\po}f$, yielding
\[
a \LPath{\hb^?} c \LTo{\po} f \LPath{\hb^?} b\ ,
\]
which has smaller length.
Applying this procedure exhaustively yields a minimal $\hb$-path $a\LPath{\hb}b$.
\end{proof}

\lemwritetowritenobridgeincreasing*
\begin{proof}
We analyze each case separately.

\Paragraph{\cref{item:ww_nb_non_dec}.}
A minimal $\hb$-path between write events may only contain $\rf$-edges, and write-to-write and read-to-write $\po$-edges. 
By construction, every $\rf$-edge is non-decreasing. 
Furthermore, by \cref{lem:po_strict_monotonic}, every write-to-write and read-to-write $\po$-edge that does not contain bridge events is non-decreasing.
Therefore, we have $i\leq j$, as desired. 

\Paragraph{\cref{item:ww_nb_inc}.}
Similarly to the case above, $\Path$ may only contain non-decreasing edges.
Furthermore, since $\tid(\wt_i)\neq \tid(\wt_j)$, $\Path$ must contain at least one $\rf$-edge, and, therefore, at least one read-to-write $\po$-edge.
This $\po$-edge is increasing due to \cref{item:po_increasing} of \cref{lem:po_strict_monotonic}.
Since all edges of $\Path$ are non-decreasing, and it has at least one increasing edge, we have $i<j$, as desired.
\end{proof}

\lemwritetowritenobridge*
\begin{proof}
    Let a bridge edge be an $\rf$-edge connecting bridge events.
    Since every inner event in a minimal $\hb$-path is adjacent to an $\rf$-edge, any bridge event in $\Path$ must be an endpoint of a bridge edge in $\Path$.
    However, any bridge edge crosses between $\ThreadDom_x$ and $\ThreadDom_y$, so returning to the originating thread domain requires at least a second bridge edge.
    These two bridge edges would share a thread (from the topology in \cref{fig:pcp_top}), introducing two $\po$-edges on that thread into $\Path$, contradicting with the fact that $\Path$ is minimal.
\end{proof}

\lemrrbridgemonotonic*
\begin{proof}
    From \cref{lem:hb_minimal}, there exists a minimal $\hb$-path $\Path\colon\rd_i\LPath{\hb}\rd_j$.
    Since there are no outgoing $\hb$-edges from $\{\tx,\txp\}$ or $\{\ty,\typ\}$ to any other thread, $\Path$ is fully contained in one of these two sets, and in particular contains no bridge events.
    If $\Path$ is a single $\po$-edge, then $i < j$ holds by definition.
    Otherwise, $\Path$ has the form $\rd_i \LTo{\po} \wt_{\ell} \LPath{\hb^?} \wt_{\ell'} \LTo{\rf} \rd_{\ell'} \LTo{\po^?} \rd_j$;
    we bound the indices in three steps.
    \begin{compactitem}
        \item $i < \ell$: by \cref{item:po_increasing} of \cref{lem:po_strict_monotonic}, every $\po$-edge from a read to a non-bridge write is increasing.
        \item $\ell \leq \ell'$: this sub-path only contains $\rf$ and write-to-write $\po$-edges. 
        The former are non-decreasing by construction;
        the latter are non-decreasing by \cref{item:po_non_decreasing} of \cref{lem:po_strict_monotonic}.
        \item $\ell' \leq j$: all decreasing read-to-read $\po$-edges are listed in \cref{lem:decreasing_r_r_po}; 
        for all of them, the earlier event reads from a location written by a thread in $\{\tax,\tbx,\tay,\tby\}$. 
        Since $\rd_{\ell'}$ reads from $\wt_{\ell'}$, and $\wt_{\ell'}$ is in one of $\{\tx,\txp\}$ or $\{\ty,\typ\}$, $\rd_{\ell'}\LTo{\po^?}\rd_j$ must be non-decreasing.
        
    \end{compactitem}
    Combining these inequalities yields $i < \ell \leq \ell' \leq j$, hence $i < j$.
\end{proof}

\lemwwdec*
\begin{proof}
    We prove the case $\tuple{\wt_i(\taxp, \vxap), \wt_j(\tax, \vxa)}$; the others are analogous.

    \cref{lem:hb_minimal} implies that there exists a minimal $\hb$-path $\Path\colon\wt_i\LPath{\hb}\wt_j$. 
    Since $\wt_i,\wt_j\in\ThreadDom_x$, \cref{lem:write_to_write_no_bridge} implies that $\Path$ contains no bridge events.
    From \cref{fig:pcp_top}, the only threads reachable from $\wt_i$ that can reach $\wt_j$ without crossing bridge edges (i.e., $\rf$-edges connecting bridge events) are $\{\tax,\taxp\}$, so $\Path$ is fully contained in $\{\tax,\taxp\}$.
    All internal events of $\Path$ are endpoints of $\rf$-edges connecting the threads $\tax$ and $\taxp$; all such edges are in locations $\vzax$ and $\vzaxp$, so every internal event of $\Path$ has location in $\{\vzax,\vzaxp\}$.

    Since $\tid(\wt_i)\neq\tid(\wt_j)$, $\Path$ contains at least one $\rf$-edge, and so its final edge is a read-to-write $\po$-edge of the form $\rd_\ell(\tax,\vzax)\LTo{\po}\wt_j(\tax,\vxa)$.
    Let an iteration be an execution of the inner loop of $\tax$.
    Since $\rd(\tax, \vzax)$ is not executed in the first iteration of $\tax$, and it is executed exactly once in every other iteration, the event $\rd_\ell(\tax,\vzax)$ is executed in the $(\ell+1)$-th iteration of $\tax$.
    Since $\wt(\tax,\vxa)$ is executed exactly once in each iteration of $\tax$, the event $\wt_j(\tax,\vxa)$ is executed in the $j$-th iteration of $\tax$. 
    Furthermore, $\rd_\ell(\tax,\vzax)$ is the last event the $(\ell+1)$-th iteration of $\tax$, so if $\rd_\ell(\tax,\vzax)\LTo{\po}\wt_j(\tax,\vxa)$, then it must be the case that $\ell +1 < j$, or, equivalently  $\ell < j-1$.
    The remainder of $\Path$ contributes only non-decreasing steps (write-to-write and read-to-write $\po$-edges by \cref{item:po_non_decreasing} of \cref{lem:po_strict_monotonic}, and $\rf$-edges by construction), so $i\leq\ell < j-1$.
\end{proof}

\lemhbconflictingwritesstrictlymonotonic*
\begin{proof}
    Consider two write events $\wt_i$, $\wt_j$ such that $\wt_i\LTo{\hb}\wt_j$ and $\lloc(\wt_i)=\lloc(\wt_j)=u$, and we argue that $i<j$.
    If $\tid(\wt_i)=\tid(\wt_j)$, the result follows from the irreflexivity of $\hb$:~ if $i>j$, then $\wt_j\LTo{\po}\wt_i$ by construction, which, together with $\wt_i\LTo{\hb}\wt_j$, would violate \cref{lem:hbirr}.
    Otherwise, we have that $u$ must be written by more than one thread, which implies that $u \in \{\vxa,\vxap,\vya,\vyap,\vxb,\vxbp,\vyb,\vybp\}$.
    Note that $u$ is a non-bridge location, thus neither $\wt_i$ nor $\wt_j$ is a bridge event.
    By \cref{lem:hb_minimal}, there exists a minimal $\hb$-path $\Path\colon\wt_i\LPath{\hb}\wt_j$.
    Furthermore, the $x$-locations $\vxa$, $\vxap$, $\vxb$, $\vxbp$ are written exclusively by threads in $\ThreadDom_x$, and the $y$-locations $\vya$, $\vyap$, $\vyb$, $\vybp$ exclusively by threads in $\ThreadDom_y$.
    Therefore, \cref{lem:write_to_write_no_bridge} implies that $\Path$ does not contain bridge events.
    Then, \cref{lem:write_to_write_no_bridge_increasing} applies to conclude that $\Path$ is increasing, thus $i<j$.
\end{proof}

\lemhbcmonotonic*
\begin{proof}
Consider two events $\wt_i,\rd_j$ such that $\wt_i\LTo{\hb}\rd_j$ and $\lloc(\wt_i)=\lloc(\rd_j)=u$, and we argue that $i\leq j$.
By \cref{lem:hb_minimal}, there exists a minimal $\hb$-path $\Path:\wt_i \LPath{\hb} \rd_j$.
The proof proceeds by case analysis on the shape of $\Path$, starting with simple cases and then handling the general form.

\Paragraph{$\Path$ is a single $\po$-edge.}
This implies that $\tid(\wt_i)=\tid(\rd_j)$, so $\tid(\wt_i)$ both writes and reads on $u$. 
The pair $\tuple{\tid(\wt_i),u}$ can only be one of the following:
\[
\begin{array}{llll}
\tuple{\tx, \vxa}, &\tuple{\tx, \vxb}, &\tuple{\ty, \vya}, &\tuple{\ty, \vyb} \\
\tuple{\txp, \vxap}, &\tuple{\txp, \vxbp}, &\tuple{\typ, \vyap}, &\tuple{\typ, \vybp} \\
\end{array}
\]
In each case, writes and reads to $u$ alternate in program order, so $\po^{\tid(\wt_i)} \cap (\WriteDom_{u}\times\ReadDom_{u})$ is non-decreasing, and thus $i\leq j$.

\Paragraph{$\Path$ ends in an $\rf$-edge.}
Then $\Path\colon \wt_i \LPath{\hb^?} \wt_j \LTo{\rf} \rd_j$.
Since $\lloc(\wt_j)=u$, \cref{lem:hb_conflicting_writes_strictly_monotonic} yields $i< j$.

\Paragraph{$\Path$ ends in a $\po$-edge.}
Then $\Path\colon \wt_i \LPath{\hb^?} \wt_{\ell} \LTo{\rf} \rd_{\ell} \LTo{\po} \rd_j$.
This case is significantly more involved, and will be split further. 
The sub-cases are determined by whether $\lloc(\wt_\ell)=u$ and whether $\wt_i$ and  $\rd_j$ are bridge events.

\paragraph{$\wt_i$ and $\rd_j$ are bridge events.}
Assume $\lloc(\wt_i)=\lloc(\rd_j)=\vzaxy$, so $\tid(\wt_i)=\tax$ and $\tid(\rd_j)=\tay$.
The other cases for $\tuple{\lloc(\wt_i), \tid(\wt_i), \tid(\rd_j)}$, namely, $\tuple{\vzaxyp, \tay, \tax}$, $\tuple{\vzbxy, \tbx, \tby}$, and $\tuple{\vzbxyp, \tby, \tbx}$, are symmetric. 
Every $\hb$-path from $\tax$ to $\tay$ must pass through an $\rf$-edge at location $\vzaxy$ (see \cref{fig:pcp_top}), so $\Path$ has the form $\wt_i\LPath{\hb^?}\wt_{\ell'}\LTo{\rf}\rd_{\ell'}\LPath{\hb^?}\rd_j$ with $\lloc(\wt_{\ell'})=\lloc(\rd_{\ell'})=\vzaxy$.
This implies that $\tid(\wt_{\ell'})=\tax$ and  $\tid(\rd_{\ell'})=\tay$, since $\tax$ and $\tay$ are the only threads that write/read in $\vzaxy$, respectively.
Since $\Path$ is minimal and $\tid(\wt_i)=\tid(\wt_{\ell'})$, the sub-path $\wt_i\LPath{\hb^?}\wt_{\ell'}$ must be either empty or a $\po$-edge, giving $i\leq\ell'$.
For the same reason, $\rd_{\ell'}\LPath{\hb^?}\rd_j$ must also be either empty or a $\po$-edge, giving $\ell'\leq j$;
combining these inequalities yields $i\leq j$, as desired.

\paragraph{$\wt_i$ and $\rd_j$ are not bridge events, $\lloc(\wt_\ell) = u$.}
If $\wt_i=\wt_{\ell}$, then clearly $i=\ell$.
Otherwise, we have $\wt_i\LTo{\hb}\wt_{\ell}$, and  \cref{lem:hb_conflicting_writes_strictly_monotonic} implies that $i<\ell$.
Thus, in either case, we have $i\leq\ell$.
Moreover, by definition ($\po$-ordered same-location reads have increasing indices), we have $\ell < j$. 
We thus arrive at $i<j$.

\paragraph{$\wt_i$ and $\rd_j$ are not bridge events, $\lloc(\wt_\ell) \neq u$.}
Since only bridge locations are accessed by threads in both $\ThreadDom_x$ and $\ThreadDom_y$, and neither $\wt_i$ nor $\rd_j$ is a bridge event, we have that either both events belong to threads in $\ThreadDom_x$, or both belong to threads in $\ThreadDom_y$.
Then \cref{lem:write_to_write_no_bridge} implies that $\Path$ does not contain bridge events.
Since $\tid(\rd_j)=\tid(\rd_\ell)$ and $\lloc(\rd_j)\neq \lloc(\rd_\ell)$, we have that $\tid(\rd_j)$ reads at more than one non-bridge location.
The only threads for which this is the case are $\{\tx,\ty,\txp,\typ\}$, which allows us to assert that  $\tid(\rd_j)$ is a verifier thread.
We now consider two cases for $\tid(\wt_i)$: whether it is a verifier or a guesser thread.

\begin{compactitem}
\item 
\textit{$\tid(\wt_i)$ is a verifier thread.}
Since $\Path$ contains no bridge events, \cref{lem:write_to_write_no_bridge_increasing} gives $i\leq\ell$.
From \cref{fig:pcp_top}, verifier threads only have $\hb$-paths to verifier threads, so $\tid(\wt_{\ell})\in\{\tx,\ty,\txp,\typ\}$.
The final edge $\rd_\ell\LTo{\po}\rd_j$ is non-decreasing according to \cref{lem:decreasing_r_r_po},  since $\rd_\ell$ reads from a verifier thread.
Therefore $\ell\leq j$, and we arrive at $i\leq j$.

\item 
\textit{$\tid(\wt_i)$ is a guesser thread.}
We analyze the case of $\tid(\wt_i)\in\{\tax,\taxp\}$, which implies that $c\in\{\vxa,\vxap\}$.
The only reads in these locations are in $\{\tx,\txp\}$, therefore $\tid(\rd_j)\in\{\tx,\txp\}$.
The other cases for the tuple $\tuple{\tid(\wt_i), \tid(\rd_j), c}$, namely, when it is one of$\tuple{\{\tay,\tayp\}, \{\ty,\typ\}, \{\vya,\vyap\}}$, $\tuple{\{\tbx,\tbxp\}, \{\tx,\txp\}, \{\vxb,\vxbp\}}$, and $\tuple{\{\tby,\tbyp\}, \{\ty,\typ\}, \{\vyb,\vybp\}}$, are symmetric.
Any $\hb$-path from some thread in $\{\tax,\taxp\}$ to some thread in $\{\tx,\txp\}$ must pass through an $\rf$-edge connecting a guesser thread to a verifier thread.
Specifically, it must contain an edge $\wt_{\ell'}\LTo{\rf}\rd_{\ell'}$ with $\lloc(\wt_{\ell'})\in\{\vxa,\vxap\}$ (see \cref{fig:pcp_top,FIG:RAEX});
therefore, $\Path$ has the form $\wt_i\LPath{\hb^?}\wt_{\ell'}\LTo{\rf}\rd_{\ell'}\LPath{\hb^?}\rd_j$.

We consider the cases when $\lloc(\wt_i)=\lloc(\wt_{\ell'})$ and when $\lloc(\wt_i)\neq \lloc(\wt_{\ell'})$.

\begin{compactitem}

    \item \textit{$\lloc(\wt_i)=\lloc(\wt_{\ell'})$.}
    If $\wt_i=\wt_{\ell'}$ then $i=\ell'$;
    otherwise, we have that $\wt_i\LPath{\hb}\wt_{\ell'}$ and $\lloc(\wt_i)=\lloc(\wt_{\ell'})$, so \cref{lem:hb_conflicting_writes_strictly_monotonic} gives $i<\ell'$.
    In either case, $i\leq\ell'$.
    Similarly, if $\rd_{\ell'}=\rd_j$ then $\ell'=j$; otherwise $\lloc(\rd_{\ell'})=\lloc(\rd_j)$ and \cref{lem:rr_bridge_monotonic} gives $\ell'<j$.
    In either case, $\ell'\leq j$.
    Combining the two inequalities yields $i\leq j$.
    \item \textit{$\lloc(\wt_i) \neq \lloc(\wt_{\ell'})$.} 
    Recall that $\lloc(\wt_i), \lloc(\wt_{\ell'})\in \{\vxa,\vxap\}$, and
    $\tid(\wt_i)\in\{\tax,\taxp\}$, and $\wt_{\ell'}$ is written by a guesser thread.
    Since different guesser threads write each of $\{\vxa,\vxap\}$, it must be the case that $\tid(\wt_i)\neq\tid(\wt_{\ell'})$, so \cref{item:ww_nb_inc} of \cref{lem:write_to_write_no_bridge_increasing} gives $i<\ell'$.
    The path $\rd_{\ell'}\LPath{\hb}\rd_j$ has the form $\rd_{\ell'}\LPath{\hb^?}\rd_m\LTo{\po^?}\rd_j$, where the sub-path to $\rd_m$ consists only of non-decreasing edges ($\rf$-edges by construction, read-to-write $\po$-edges by \cref{lem:po_strict_monotonic}, as $\Path$ does not contain bridge events), giving $\ell'\leq m$ and thus $i<m$.
    By \cref{lem:decreasing_r_r_po}, the final edge $\rd_m\LTo{\po^?}\rd_j$ decreases the index by at most one, so $m\leq j+1$; 
    combined with $i < m$, this yields $i\leq j$.
\end{compactitem}
\end{compactitem}
\end{proof}

\section{Proofs of \cref{SEC:DECBCS}}\label{SEC:APP_CONTEXT_SWITCHES}

\lemmairreduciblesmall*
\begin{proof}
We let $\pi=\pi_{1}\cdot \pi_2 \cdots \pi_{\ell}$, and let $\SummaryDom_{\LinearTrace}$ denote the set of possible summaries $\tuple{\SummaryStates_{\LinearTrace}, \SummaryLastValue_{\LinearTrace}, \SummaryExternalRf_{\LinearTrace}}$. 
The number of possible values for $\SummaryStates_{\LinearTrace}$, $\SummaryLastValue_{\LinearTrace}$, and $\SummaryExternalRf_{\LinearTrace}$ are $2^{|\ConcProg.\States|}$, $(|\ValueDom|+1)^{|\LocationDom|}$, and $2^{|\LocationDom|}$, respectively. 
Since $\LocationDom=O(1)$ and $\ValueDom=O(1)$, we have $|\SummaryDom_{\LinearTrace}| = O(2^{|\ConcProg.\States|})$.
We first show that the size of each run $\pi_c$, with $c\in[\ell]$, is bounded. 
If $|\pi_c| \leq |\SummaryDom_{\LinearTrace}|$, we are done. Otherwise, we proceed as follows.

\SubParagraph{Defining impeding events.}
Let $\ReadDom_{\pi_c}$ be the set of events in $\pi_{c+1},\ldots ,\pi_{\ell}$ that read from some event in $\pi_c$ via an $\rf$-edge.
Define $\WriteDom_{\pi_c}$ as the image of $[\ReadDom_{\pi_c}];\rf^{-1}$, i.e., the writing events of $\pi_c$ that are read by some event in $\ReadDom_{\pi_c}$. 
We call $\WriteDom_{\pi_c}$ the impeding events of $\pi_c$.
Note that $|\WriteDom_{\pi_c}|\leq |\ReadDom_{\pi_c}|$. 

\paragraph{Partitioning $\pi_c$ into segments.}
Partition $\pi_c$ into $m+1$ sequences of events $S_1, \ldots, S_m, S_{m+1}$, where $|S_{\CSid}|=|\SummaryDom_{\LinearTrace}|+1$ for $\CSid\in[m]$ and $|S_{m+1}|\leq|\SummaryDom_{\LinearTrace}|$. 
By the pigeonhole principle, each segment $S_{\CSid}$ with $\CSid \in [m]$ contains a pair of events $\tuple{\CSei^{\CSid}, \CSej^{\CSid}}$ with $\CSei^{\CSid} <_{\pi} \CSej^{\CSid}$ and $\Summary_{\LinearTrace}(\CSei^{\CSid})=\Summary_{\LinearTrace}(\CSej^{\CSid})$. 
Since $\tuple{G,(\pi_1,\ldots,\pi_{\ell})}$ is irreducible, the pair $\tuple{\CSei^{\CSid},\CSej^{\CSid}}$ is not collapsible, which means at least one of the following must hold:
\begin{compactenum}
    \item There exists $\CSeb \in \WriteDom_{\pi_c}$ such that $\CSei^{\CSid} <_{\pi} \CSeb \leq_{\pi} \CSej^{\CSid}$. \label{item:impede_1}
    \item For some $x \in \LocationDom$ and $\event \in (G.\Events \setminus G.\Events^{\tid(\CSei)})$, we have $\LatestWrite_{\LinearTrace}(\CSei,x)\LTo{\hb} \event$ but $\LatestWrite_{\LinearTrace}(\CSej,x)\nLTo{\hb} \event$; this implies there exists $\CSeb \in \WriteDom_{\pi_c}$ such that $[\LatestWrite_{\LinearTrace}(\CSei^{\CSid},x)] <_{\pi} \CSeb \leq_{\pi} [\LatestWrite_{\LinearTrace}(\CSej^{\CSid},x)]$.\label{item:impede_2}
\end{compactenum}
In either case, we say $\CSeb$ \emph{impedes} $\tuple{\CSei^{\CSid},\CSej^{\CSid}}$ from being reduced.

\paragraph{Bounding the number of impeded pairs per impeding event.}
We now determine how many pairs $\tuple{\CSei^{\CSid},\CSej^{\CSid}}$ can be impeded by a single event $\CSeb \in \WriteDom_{\pi_c}$.
As helpful notation, let $\EarliestWrite_{\LinearTrace}(\event,x)$ denote the minimal write $\wt(t, x)$ with respect to $\leq_{\pi_{c}}$ such that $\event \leq_{\pi_{c}} \wt(t,x)$, or $\bot$ if no such write exists.

\begin{enumerate}
    \item If $\CSeb$ impedes $\tuple{\CSei^{\CSid},\CSej^{\CSid}}$ via \ref{item:impede_1}, then $\CSei^{\CSid} <_{\pi} \CSeb \leq_{\pi} \CSej^{\CSid}$. 
    There is at most one $\CSid\in [m]$ satisfying this condition.
    
    \item If $\CSeb$ impedes $\tuple{\CSei^{\CSid},\CSej^{\CSid}}$ via \ref{item:impede_2}, then for some $x \in \LocationDom$, we have $[\LatestWrite_{\LinearTrace}(\CSei^{\CSid},x)] <_{\pi} \CSeb \leq_{\pi} [\LatestWrite_{\LinearTrace}(\CSej^{\CSid},x)]$. 
    By definition of $\LatestWrite_{\LinearTrace}$, we have $[\LatestWrite_{\LinearTrace}(\CSei^{\CSid},x)] \leq_{\pi} \CSei^{\CSid} <_{\pi} [\LatestWrite_{\LinearTrace}(\CSej^{\CSid},x)] \leq_{\pi} \CSej^{\CSid}$. 
    We consider two subcases:
    \begin{compactitem}
        \item If $\CSei^{\CSid} <_{\pi} \CSeb$, then $\CSeb$ may only impede the unique pair $\tuple{\CSei^{\CSid},\CSej^{\CSid}}$ for which $\CSei^{\CSid} <_{\pi} \CSeb \leq_{\pi} \CSej^{\CSid}$.
        \item If $\CSei^{\CSid} \geq_{\pi} \CSeb$, then $\CSei^{\CSid} <_{\pi} \EarliestWrite_{\LinearTrace}(\CSeb,x) \leq_{\pi} \CSej^{\CSid}$. There is at most one $\CSid$ satisfying this condition.
    \end{compactitem}
    Therefore, for each location $x \in \LocationDom$, event $\CSeb$ impedes at most one pair via criterion \ref{item:impede_2}. 
    Across all locations, $\CSeb$ impedes at most $|\LocationDom|$ pairs via \ref{item:impede_2}.
\end{enumerate}

Combining both cases, each $\CSeb \in \WriteDom_{\pi_c}$ impedes at most $|\LocationDom|+1$ pairs of the form $\tuple{\CSei^{\CSid},\CSej^{\CSid}}$ from being reduced.

\paragraph{Deriving the bound.}
Since every pair $\tuple{\CSei^{\CSid},\CSej^{\CSid}}$ with $\CSid\in [m]$ must be impeded by some event in $\WriteDom_{\pi_c}$, we have:
\[m \leq |\WriteDom_{\pi_c}|\cdot (|\LocationDom| +1)\]

Since $|\pi_c| = m \cdot (|\SummaryDom_{\LinearTrace}|+1) + |S_{m+1}| \leq m \cdot (|\SummaryDom_{\LinearTrace}|+1) + |\SummaryDom_{\LinearTrace}|$, we obtain:
\[|\pi_c|\leq |\WriteDom_{\pi_c}| \cdot(|\LocationDom| +1) \cdot (|\SummaryDom_{\LinearTrace}|+1) + |\SummaryDom_{\LinearTrace}|\]

Since $|\WriteDom_{\pi_c}| \leq |\ReadDom_{\pi_c}| \leq |\pi_{c+1}|+\ldots+|\pi_{\ell}|$, we have:
\[|\pi_c|\leq (|\pi_{c+1}| + \ldots + |\pi_{\ell}|) \cdot (|\LocationDom| +1) \cdot (|\SummaryDom_{\LinearTrace}|+1) + |\SummaryDom_{\LinearTrace}|\]

Let $g(c)$ denote the maximum possible size of $\pi_c$. Then:
\[g(c) \leq |\SummaryDom_{\LinearTrace}| + (|\SummaryDom_{\LinearTrace}|+1) \cdot (|\LocationDom| +1)  \cdot \sum_{j=c+1}^{\ell} g(j)\]

Since $\ThreadDom$, $\LocationDom$, and $\ValueDom$ are bounded and $|\SummaryDom_{\LinearTrace}| = O(2^{|\ConcProg.\States|})$, there exists a constant $C$ such that $g(\ell)\leq  C\cdot 2^{|\ConcProg.\States|}$ and for all $c<\ell$:
\[g(c) \leq C\cdot 2^{|\ConcProg.\States|}\cdot \sum_{j=c+1}^{\ell} g(j)\]

Using $g(\ell) = O(2^{|\ConcProg.\States|})$ (as the last context has no impeding events), solving this recurrence yields
\[
\sum_{j=1}^{\ell} g(j) = O(2^{ |\ConcProg.\States|\cdot \ell})
\]

We conclude that $|G.E|=|\pi_1| + \ldots + |\pi_\ell|= O(2^{|\ConcProg.\States|\cdot \ell})$.
\end{proof}

\lemmairreduciblesmallrmw*
\begin{proof}

This proof follows the same steps as \ref{lem:irreducible_small}. We detail here the key changes. 

\SubParagraph{Defining impeding events.} We add another set of impeding events, say, $\RMWDom_{\pi_c}$, which corresponds to the set of all RMW events in $\pi_c$. 

\paragraph{Partitioning $\pi_c$ into segments.} In this part, we must also incorporate condition \ref{item:cond_red_5}, that states that $\rmw \in \pi_c$ may block a pair $\tuple{\CSei,\CSej}$ from being reduced if, for some location $x$, we have $\LatestWrite_{\LinearTrace}(\CSei, x)\neq \LatestWrite_{\LinearTrace}(\CSej, x)$ and $\LatestWrite_{\LinearTrace}(\CSei, x)=\rmw$. 

To do so, also consider the following condition for a pair  $\tuple{\CSei^{\CSid},\CSej^{\CSid}}$ to not be collapsible:
\begin{enumerate}
    \addtocounter{enumi}{2}
    \item There exists $\CSeb \in \RMWDom_{\pi_c}$ such that 
    $\LatestWrite_{\LinearTrace}(\CSei,x) = \CSeb$ and $\LatestWrite_{\LinearTrace}(\CSei,x) \neq \LatestWrite_{\LinearTrace}(\CSej,x)$.\label{item:impede_3}
\end{enumerate}

We can argue that if $\CSeb$ impedes $\tuple{\CSei^{\CSid},\CSej^{\CSid}}$ via condition \ref{item:impede_3}, then $\CSeb$ cannot simultaneously satisfy the same impeding condition for $\tuple{\CSei^{\CSidp},\CSej^{\CSidp}}$. The reason for this is that for any $\CSid < \CSidp$, $\LatestWrite_{\LinearTrace}(\CSei^{\CSid},x) = \CSeb$ and $\LatestWrite_{\LinearTrace}(\CSei^{\CSid},x) \neq \LatestWrite_{\LinearTrace}(\CSej^{\CSid},x)$ implies $\LatestWrite_{\LinearTrace}(\CSei^{\CSidp},x) \neq \CSeb$. Intuitively, after $\CSeb$ is not the latest writing event on $x$ anymore, that property cannot be recovered. 

\paragraph{Bounding the number of impeded pairs per impeding event.} Per the argument above, each $\CSeb \in \RMWDom_{\pi_c}$ impedes one pair of the form $\tuple{\CSei^{\CSid},\CSej^{\CSid}}$ from being reduced.

\paragraph{Deriving the bound.} We now initially have $m \leq |\WriteDom_{\pi_c}|\cdot (|\LocationDom| +1) + |\RMWDom_{\pi_c}|$.
Following the same steps as in \ref{lem:irreducible_small}, we arrive at 
\[
\sum_{j=1}^{\ell} g(j) = O(2^{ (|\ConcProg.\States|+|G.\Events \cap \RMWDom|)\cdot \ell})
\]

Which leads to the conclusion that $|G.E|=|\pi_1| + \ldots + |\pi_\ell|= O(2^{(|\ConcProg.\States|+|G.\Events \cap \RMWDom|)\cdot \ell})$.

\end{proof}
\end{document}